\newtheorem{definition}{Definition}
\newtheorem{remark}{Remark}
\newtheorem{proposition}{Proposition}[section]
\newtheorem{theorem}{Theorem}[section]
\newtheorem{corollary}{Corollary}[section]
\newtheorem{lemma}{Lemma}[section]
\numberwithin{equation}{section}
\newenvironment{proof}{\smallskip\noindent\emph{Proof.}\hspace{1pt}}%
{\hspace{-5pt}{\nobreak\quad\nobreak\hfill\nobreak$\square$\vspace{8pt}%
		\par}\smallskip\goodbreak}
\newcommand{\hnabla}{\widehat{\nabla}}
\newcommand{\be}{\begin{equation}}
\newcommand{\ee}{\end{equation}}
\newcommand{\bm}{\begin{align*}}
\newcommand{\enm}{\end{align*}}
\newcommand{\bespeq}{\begin{equation}\begin{split}}
\newcommand{\espeq}{\end{split}\end{equation}}
\newcommand{\tr}{\mbox{tr}}
\newcommand\restri[2]{{
		\left.\kern-\nulldelimiterspace 
		#1 
		\right|_{#2} 
}}
\definecolor{ffqqqq}{rgb}{1.,0.,0.}
\definecolor{uuuuuu}{rgb}{0.26666666666666666,0.26666666666666666,0.26666666666666666}
\def\ps@pprintTitle{%
  \let\@oddhead\@empty
  \let\@evenhead\@empty
  \let\@oddfoot\@empty
  \let\@evenfoot\@oddfoot
}
\def\@author#1{\g@addto@macro\elsauthors{\normalsize%
    \def\baselinestretch{1}%
    \upshape\authorsep#1\unskip\textsuperscript{%
      \ifx\@fnmark\@empty\else\unskip\sep\@fnmark\let\sep=,\fi
      \ifx\@corref\@empty\else\unskip\sep\@corref\let\sep=,\fi
      }%
    \def\authorsep{\unskip,\space}%
    \global\let\@fnmark\@empty
    \global\let\@corref\@empty  
    \global\let\sep\@empty}%
    \@eadauthor={#1}
}
\begin{document}
\begin{frontmatter}
	
\title{A Geometric Approach to the Yang-Mills Mass Gap}
\author{Puskar Mondal\fnref{fn1,fn2}}
\fntext[fn1]{Centre of Mathematical Sciences and Applications, Harvard University}
\fntext[fn2]{Department of Mathematics, Harvard University}
\ead{puskar_mondal@fas.harvard.edu}

\begin{abstract}
\vspace{5pt}
\par \noindent I provide a new idea based on geometric analysis to obtain a positive mass gap in pure non-abelian renormalizable Yang-Mills theory. The orbit space, that is the space of connections of Yang-Mills theory modulo gauge transformations, is equipped with a Riemannian metric that naturally arises from the kinetic part of reduced classical action and admits a positive definite sectional curvature. The corresponding regularized \textit{Bakry-\'Emery} Ricci curvature (if positive) is shown to produce a mass gap for $2+1$ and $3+1$ dimensional Yang-Mills theory assuming the existence of a quantized Yang-Mills theory on $(\mathbb{R}^{1+2},\eta)$ and $(\mathbb{R}^{1+3},\eta)$, respectively. My result on the gap calculation, described at least as a heuristic one, applies to non-abelian Yang-Mills theory with any compact semi-simple Lie group in the aforementioned dimensions. In $2+1$ dimensions, the square of the Yang-Mils coupling constant $g^{2}_{YM}$ has the dimension of mass, and therefore the spectral gap of the Hamiltonian is essentially proportional to $g^{2}_{YM}$ with proportionality constant being purely numerical as expected. Due to the dimensional restriction on $3+1$ dimensional Yang-Mills theory, it seems one ought to introduce a length scale to obtain an energy scale. It turns out that a certain `trace' operation on the infinite-dimensional geometry naturally introduces a length scale that has to be fixed by measuring the energy of the lowest glu-ball state. However, this remains to be understood in a rigorous way.  
\end{abstract}
\end{frontmatter}

\begin{abstract}
 
\end{abstract}

\medskip



\section{Introduction}
\noindent One of the most important questions in contemporary mathematical physics is to prove that the Hamiltonian or Schr\"odinger operator of non-abelian Yang-Mills fields admits a spectral gap. The importance of Yang-Mills theory needs no explanation since the standard model is built upon it. Although it is commonly believed that the non-abelian gauge theories in any dimensions less than or equal to four are confining and possess a mass gap, proof on the theoretical ground with complete generality is missing. The prime interest is the $3+1$ dimensional Yang-Mills theory for obvious reasons, however, I will consider both the $2+1$ and $3+1$ dimensional cases. While the most physically relevant model is the $3+1$ dimensional QCD, pure Yang-Mills theory in $2+1$ dimensions deserves attention since it is intermediate in complexity between the $3+1$ dimensional and nearly trivial $1+1$ dimensional theories yet possesses most of the characteristic features of $3+1$ dimensional theory. There are enough theoretical and numerical evidence that both $3+1$ and $2+1$ theories confine \cite{karabali1996gauge,karabali1996gauge,karabali1997gauge,karabali1998planar,karabali1998vacuum, frasca1,frasca2, athenodorou2020glueball,lucini2010glueball}. There are of course several fundamental differences between these two theories as well. In $2+1$ dimensions, the square of the bare coupling constant has a dimension of mass leading to super-renormalizability while $3+1$ dimensional theory is borderline renormalizable. As we shall see, this will turn out to be a crucial feature of the result that we derive in section \ref{explicit}.

\noindent At the level of perturbative quantum field theory, two fundamental breakthroughs in the context of Yang-Mills theory (in dimensions $n+1,n\leq 3$) are its renormalizability \cite{veltman1972regularization} and asymptotic freedom \cite{gross1973ultraviolet}. While the former can be interpreted in terms of suitable Sobolev embedding theorems (in the case of $3+1$ dimensions, such embedding turns out to be borderline as it fails to be compact, and for this reason, it is considered to be borderline renormalizable), the latter indicates approaching a free theory at a high energy limit. 
At low energies where the Yang-Mills coupling is strong, the non-linearities are not \textit{small} (in a suitable function space setting) and in such a regime of large data problems, a range of complicated processes are expected to occur that should fundamentally separate the behavior of non-abelian gauge theories from that of abelian theories such as pure QED. One such attribute of Yang-Mills theory associated with strong field processes is the expected existence of a gap in the spectrum of the Hamiltonian \cite{jaffe2006quantum}. Such a gap, if it exists, could represent the energy difference between the actual vacuum state and that of the lowest energy ‘glueball’ states and
confirm the expectation that massless gluons cannot propagate freely as
photons do. Keeping aside the perturbative treatment, little is known about the rigorous non-perturbative quantization of almost any interacting quantum field theory in $3+1$ dimensions. In $2+1$ and $1+1$ dimensions, the construction of quantum field theories with nonlinear interactions was made possible by the breakthrough work of Jaffe and Glimm \cite{glimm1985collected,jaffe2000constructive} among others. In $4$ dimensions, through a re-normalization group argument, \cite{aizenman2021marginal} proved the Gaussianity hence triviality of $\varphi^{4}$ theory.

The classical Yang-Mills theory on $\mathbb{R}^{1+n}$ is described by the extremum of the action functional $S_{YM}:=-\frac{1}{4}\int_{\mathbb{R}^{1+n}}\langle F, F\rangle$, where $F$ is the curvature associated with a principal bundle $(\mathfrak{P}, G,\mathbb{R}^{1+n})$ written in terms of the gauge covariant exterior derivative of a connection. The resulting Yang-Mills equations can be cast into a hyperbolic system (or a coupled elliptic-hyperbolic one) in a suitable choice of gauge and as such a solution can be thought of as a curve $t\mapsto (A(t),\mathcal{E}(t))$ in the reduced phase space $T^{*}(\mathcal{A}/\widehat{\mathcal{G}})$ ($\mathcal{A}$ is the space of \textit{spatial} connections belonging to an appropriate function space and $\widehat{\mathcal{G}}$ is the group of automorphisms of the bundle $\mathfrak{P}$ after modding out the set of covariantly constant elements; note that $\mathcal{E}$ is the momentum variable associated with the connection $A\in \mathcal{A}/\widehat{\mathcal{G}}$). Since the reduced orbit space $\mathcal{A}/\widehat{\mathcal{G}}$ is an infinite dimensional manifold, one could interpret a classical solution as a particle moving in this infinite-dimensional configuration space with prescribed initial \textit{position} $A(t=0)$ and momentum $\mathcal{E}(t=0)$. With this interpretation, a naive thought of writing down the Hamiltonian operator (a formal covariant Laplace-Beltrami operator defined on the configuration space $\mathcal{A}/\widehat{\mathcal{G}}$ together with a potential term) of the system and obtaining its spectrum becomes natural. Firstly, however, such a covariant Laplace-Beltrami operator generates infinities while acting even on smooth functionals and therefore a suitable regularization is necessary to make sense of this operator. Even after one makes sense of this operator, a canonical quantization proves to be monumentally difficult. Nevertheless, there has been some progress using the Microlocal technique developed by \cite{moncrief2,marini}. On the other hand, due to the equivalence between the Schroendinger and path integral quantization, one may invoke the stochastic quantization scheme of Parisi and Wu \cite{parisi}. In this later scheme, an Euclidean quantum field theory is obtainable as a stationary limit of the Langevin dynamics associated with the classical action. \cite{hairer1,hairer2} is able to employ Parisi-Wu stochastic quantization to 3-dimensional Yang-Mills-Higgs theory while the 4-dimensional case remains a daunting task. I wish to point out that the $2$ dimensional Euclidean quantum Yang-Mills theory is made completely rigorous by several researchers \cite{klimek1987construction, gross1989two,witten1991quantum}

Leaving aside the question of a rigorous quantization, if one simply assumes such to be true then the following question arises: what is the source of mass gap? Unlike the Abelian gauge theory, the orbit space for the non-abelian one is geometrically rich. I.M Singer \cite{singer1981geometry} computed several geometric entities including the Riemann curvature of the orbit space. The sectional curvature is positive definite and the Ricci curvature is formally positive definite. Upon such observation, one immediately attempts to use the theorem on the eigenvalue estimate of the Laplace-Beltrami operator due to Lichnerowicz \cite{lichnerowicz1970varietes,lichnerowicz1971varietes}. However, the estimation of the spectral gap in an infinite dimensional setting is technically challenging due to the lack of compactness. For example, on a finite-dimensional compact manifold, one may utilize a direct Lichnerowicz \cite{lichnerowicz1970varietes,lichnerowicz1971varietes} estimate to obtain the spectral gap of Laplacian (see \cite{singer1985estimate, li1980estimates, lieb1976bounds} for estimates on the spectrum of a Schrodinger operator). In fact, in finite dimensions, Bonet-Meyers theorem \cite{myers} guarantees that manifolds with positive definite Ricci curvature (uniform lower bound) are compact. Therefore, the positive definiteness of the Ricci curvature is sufficient to obtain a spectral gap in finite dimensions. However, in infinite dimensions, this luxury is lost since even a bounded ball in an infinite dimensional manifold is not compact in general. To get around this problem, let us recall that I assume the existence of a quantum Yang-Mills theory. This, by the basic axioms of a quantum field theory (see \cite{de2010mathematical} for example) (for gauge theory there may be additional axioms), produces a normalizable ground state wave functional $\Psi[A]:=N_{\hbar}e^{-S[A]/\hbar},~N_{\hbar}\in\mathbb{C}-{0}$. The normalizability condition $\int_{\mathcal{A}/\widehat{\mathcal{G}}}|N_{\hbar}|^{2}e^{-2S[A]/\hbar}\mu_{\mathfrak{G}}=1$,  (where $\mathfrak{G}$ is a Riemannian metric on the space $\mathcal{A}/\widehat{\mathcal{G}}$ induced by the kinetic part of the classical action and $\mu_{\mathfrak{G}}$ is the associated `infinite' volume element) automatically equips the orbit space $\mathcal{A}/\widehat{\mathcal{G}}$ with a measure $|N_{\hbar}|^{2}e^{-2S[A]/\hbar}\mu_{\mathfrak{G}}$ that can be utilized to estimate the spectral gap. But, this geometric contribution that arises from the curvature is purely kinetic. Therefore one obvious issue that arises is the role played by the potential energy.  Note that the Yang-Mills potential energy $\frac{1}{4}\int_{\mathbb{R}^{n}}\mathcal{F}_{ij}\cdot \mathcal{F}^{ij}$ contain terms that are quartic in the connection since $\mathcal{F}=dA+[A,A]$. Such a potential might rise rapidly enough to confine the wave functional $e^{-S[A]/\hbar}$. In other words, the influence of the potential is felt at the level of the functional $S[A]:=-\frac{\hbar}{2}\ln(|N_{\hbar}|^{-2}|\Psi[A]|^{2})$ that aids to normalize $e^{-S[A]/\hbar}$. At the level of the mass gap, this effect of potential is precisely felt through a term the Hessian of the functional $S[A]$ that is added to the Ricci curvature of the orbit space. The following theorem is the main result of this article regarding the estimation of the mass gap of the Yang-Mills theory under the assumption that a quantum theory exists. For this, we introduce a few notations which will be described in detail later in the section \ref{yangmills}. Let $(\mathbb{R}^{1+n},\eta)$ be the $n+1, ~n=2,3$ dimensional Minkowski space with its metric $\eta$ in usual rectangular coordinates $(t,x^{i})_{i=1}^{n}$. We consider a principle $G$ bundle $\mathfrak{P}$ over $(\mathbb{R}^{1+n},\eta)$ with the structure group being a compact semi-simple Lie group $G$. The Lie algebra $\mathfrak{g}$ associated with $G$ has an adjoint invariant positive definite inner product that we denote here by `$\cdot$'. The connection of this bundle is denoted by $A:=A^{\alpha}_{\mu}T_{\alpha}dx^{\mu}$, $T_{\alpha}$ denotes a basis of $\mathfrak{g}$. We sometimes ignore the lie algebra indices and write $A_{\mu}$ for the connection (whenever it is done, it should be understood that a Lie-algebra index is present). Since we will primarily utilize a Hamiltonian formalism, it is convenient to work with spatial connection i.e., components of $A$ parallel to a $t=\text{constant}$ hypersurface $\mathbb{R}^{n}$. Let $\mathcal{S}$ be the space of Schwartz connections on $\mathbb{R}^{n}$ modulo gauge transformations (i.e., the spatial connections verify $(1+|x|)^{k}|\partial^{l}A|\leq \text{constant}~\forall k,l\in \mathbb{Z}$). In other words, by $\mathcal{S}$, we denote the connections on the orbit space $\mathcal{A}/\widehat{\mathcal{G}}$ that decay rapidly towards the infinity of $\mathbb{R}^{n}$. Under the assumption of the existence of quantum Yang-Mills theory, the Hilbert space of the theory can be identified with $L^{2}(\mathcal{S}(\mathbb{R}^{n}),e^{-2S[A]/\hbar}\mu_{\mathfrak{G}})$. Let us denote the gauge covariant derivative associated with a connection $A$ by $\hnabla$ and the corresponding exterior derivative by $d^{\hnabla}$.      
\begin{theorem}
\label{main}
Let $F=d^{\hnabla}A$ be the curvature of the principle $G-$bundle $\mathfrak{P}$ over $\mathbb{R}^{1+n},~n=2,3$ and the associated Yang-Mills theory is defined by the action $I_{YM}:=-\frac{1}{4}\int_{\mathbb{R}^{1+n}}F_{\mu\nu}\cdot F^{\mu\nu} d^{n+1}x$~\footnote{Usually, the Yang-Mills action is defined as $-\frac{1}{4g^{2}_{YM}}\int_{\mathbb{R}^{1+n}}F_{\mu\nu}\cdot F^{\mu\nu} d^{n+1}x$, where $g_{YM}$ is the coupling constant. In my convention, I absorb it in the definition of $F$ and the coupling only appears through the commutation relation}. Let us assume that the corresponding quantum theory exists that has a normalizable ground state $\Psi[A]:=N_{\hbar}e^{-S[A]/\hbar}, N_{\hbar}\in \mathbb{C}-\{0\}$. Then the associated Hamiltonian operator $\widehat{H}$ verifies the mass gap 
\begin{eqnarray}
\label{eq:1}
 \Delta E\geq \frac{\hbar^{2}\Delta }{2}   
\end{eqnarray}
if the regularized and renormalized Bakry-Emery Ricci curvature $\mathcal{R}^{B.E}$ of the orbit space $\mathcal{A}/\widehat{\mathcal{G}}$ admits the following uniform bound after the removal of the regulator
\begin{eqnarray}
\label{eq:2}
\mathcal{R}^{B.E}(\alpha[A],\alpha[A])\geq \Delta\mathfrak{G}(\alpha[A],\alpha[A]),
\end{eqnarray}
where $\Delta>0$ being a constant and $\mathcal{R}^{B.E}$ is defined as follows 
\begin{eqnarray}
\mathcal{R}^{B.E}(\alpha[A],\alpha[A]):=\int_{x,x^{'}y,y^{'}}\nonumber\left(\underbrace{\mathfrak{G}^{A^{M}_{k}(y^{'})A^{L}_{k}(y)}\mathfrak{R}_{A^{M}_{k}(y^{'})A^{N}_{n}(x^{'})A^{L}_{k}(y)A^{P}_{i}(x)}\alpha^{N}_{n}(x^{'})\alpha^{P}_{i}(x)}_{I}\right.\\\nonumber 
\left.+\underbrace{\frac{2}{\hbar}\mathfrak{G}^{A^{P}_{I}(x)A^{Q}_{J}(x^{'})}\mathfrak{G}^{A^{M}_{K}(y)A^{N}_{L}(y^{'})}\frac{\mathfrak{D}}{\mathfrak{D}A^{P}_{I}(x)}\frac{\mathfrak{D}S[A]}{\mathfrak{D}A^{M}_{K}(y)}\alpha^{Q}_{I}(x^{'})\alpha^{N}_{L}(y^{'})}_{II}\right)    
\end{eqnarray}
and $\alpha[A]\in T_{A}(\mathcal{A}/\widehat{\mathcal{G}})$.
\end{theorem}
\begin{remark}
Note that the functional $S[A]$ is not to be confused with the Yang-Mills action functional. Since $\Psi[A]:=N_{\hbar}e^{-S[A]/\hbar}$ verifies the functional Schrodinger equation, $S[A]$ verifies a functional Ricatti type equation.  
\end{remark}

\noindent Let us try to understand the theorem and compare it with previous studies. Notice that the term $I$ corresponds to a pure geometric contribution (this term blows up without proper regularization). First, let us focus on the inequalities \ref{eq:1} and \ref{eq:2} and consider the $2+1$ dimensional case. $\Delta E$ is contributed by the spectral gap in the covariant Laplacian defined on the orbit space $\mathcal{A}/\widehat{\mathcal{G}}$. The Laplacian involves repeated functional derivatives at the same point and therefore singular. Therefore, one defines a regularized version through the introduction of a cut-off scale $\chi$ and $\chi\to\infty$ reproduces the original Laplacian. In the $2+1$ dimensional case, this entity diverges logarithmically with the cut-off scale $\chi$. Now $\Delta$ in the right-hand side of the inequality \ref{eq:1} is the spectral gap of the Bakry-Emery Ricci curvature of the orbit space $\mathcal{A}/\widehat{\mathcal{G}}$. This is logarithmically singular in the cut-off parameter $\chi$ and the singular term is exactly the same as that of $\Delta E$. Therefore, the singular parts of $\Delta E$ and $\Delta$ vary uniformly as one changes the cut-off parameter $\chi$. We will discuss this in more detail in section \ref{proof}.\\
The metric is induced by the kinetic term of the action, the term $I$ in the gap theorem \ref{main} is essentially a kinetic contribution while the term $II$ is a contribution from the potential. Due to Lorentz covariance, the kinetic and potential contributions are not completely independent as we shall see in section \ref{technical}, $S[A]$ is governed by both the metric and the potential.  Later in section \ref{explicit}, I shall perform an explicit calculation for the term $I$ and show that it has a uniform positive lower bound. In a series of works, \cite{karabali1996gauge,karabali1997gauge,karabali1998planar,karabali1998vacuum} handled $2+1$ dimensional Yang-Mills theory using the Hamiltonian approach and a gauge invariant matrix parametrization of gauge fields. They have computed the invariant volume element of the orbit space $\mathcal{A}/\widehat{\mathcal{G}}$ corresponding to a metric arising from the kinetic part of the action in terms of WZW action (in fact they proved in this $ 2+1$ dimensional setting that the volume of the orbit space is finite). Using this construction they have obtained a mass gap associated with the kinetic operator and the potential contribution is considered in an improved perturbation series.
At a heuristic level, the mass in the propagator of a gauge invariant definition of gluon verifies $\sim \frac{g^{2}_{YM}c_{A}}{2\pi}+O(k^{2})$ ($c_{A}$ is the Casimir of adjoint representation of $\mathfrak{g}$), where $\frac{g^{2}_{YM}c_{A}}{2\pi}$ is the kinetic contribution while $O(k^{2})$ term appears due to the potential that does not contribute by a strictly positive number since it can be made to be arbitrarily small by choosing large wavelengths. This gauge-invariant gluon mass ultimately leads to a positive gap in the spectra of the Hamiltonian or a `mass gap'. Our approach seems quite similar to this approach in spirit. Note again that the term $I$ is a pure kinetic contribution that provides a strictly positive gap and the term $II$ encodes the potential contribution that is expected to be non-negative for rapidly rising potentials such as Yang-Mills potentials wherever the latter does not admit flat directions. We shall sketch rather heuristic evidence towards non-negative definiteness of the term $II$ in section \ref{explicit}. It is almost tempting to state that in our analysis $I$ is a fixed positive number while $II$ is $O(k^{2})$. However, we shall observe that for Lorentz covariant field theories, this is not quite the case. Many years ago, Feynman \cite{feynman1981qualitative} presented a qualitative argument in supporting a strictly positive mass gap based on the geometry of the orbit space. The argument goes as follows. The ground state wave functional is essentially node-less and it can be taken as a real positive since the potential is only a functional of field configurations, not their time derivatives. The first excited state is orthogonal to the ground state and is positive in some regions of the orbit space and negative in others. The kinetic energy is essentially a gradient energy on the orbit space and scales with the inverse of the square distance between two regions where the first excited state is positive and negative. Feynman argued that this distance can not be arbitrarily large leading to a strictly positive lower bound on the kinetic energy that is supposed to be the mass gap of the theory. On finite-dimensional Riemannian manifolds, this is essentially equivalent to finding the spectral gap of the Laplace-Beltrami operator. As mentioned in the previous paragraph, lack of compactness causes a serious issue in infinite dimensions if one tries to carry out a procedure such as that of Lichnerowicz \cite{lichnerowicz1970varietes,lichnerowicz1971varietes}. However, instead of performing a Lichnerowicz-type estimate, one can perform a weighted estimate where the effect of potential is taken into consideration through a suitable weight. In fact, this article was motivated in part by the desire to adapt the geometric arguments of Feynman and that of Karabali and Nair \cite{karabali1996gauge}-\cite{karabali1998vacuum} (note that \cite{karabali1996gauge}-\cite{karabali1998vacuum} obtained a measure on the orbit space of $2+1$ dimensional Yang-Mills theory and this result was not available when Singer studied the geometry of the orbit space). I should mention that \cite{orland1} presented some results on the orbit space geometry and a proposal for the mass gap. In addition to these geometric arguments, I wish to point out that there are recent studies by \cite{frasca1,frasca2} on the mass gap estimates based on a direct approach of integrating the Schwinger-Dyson equations in both $2+1$ and $3+1$ dimensions. Substantial progress is made in the context of lattice gauge theory as well \cite{athenodorou2020glueball,lucini2010glueball}.

In the context of the weighted manifolds introduced by Lichnerowicz \cite{lichnerowicz1970varietes,lichnerowicz1971varietes}, \textit{Bakry-Emery} curvature (terms $I$ and $II$ together in the gap theorem) naturally appears (see \cite{lott2003some} for geometric properties of the Bakry-Emery Ricci tensor, on finite-dimensional weighted manifolds). It also appears in the context of scalar-tensor gravitational theories, including Brans-Dicke theory \cite{brans1961mach}, theories with Kaluza-Klein dimensional reduction \cite{branding2019stable} apart from the celebrated study by \cite{bakry1985diffusions}. Studies by \cite{galloway2014cosmological,woolgar2016cosmological} provide examples of the appearance of this modified Ricci curvature in the context of Lorentzian geometry.  In a finite-dimensional setting with potential satisfying suitable convexity conditions, then a spectral gap estimate for the Hamiltonian operator is given by the bound on the Bakry-Emery Ricci curvature \cite{moncrief}. For example, for a harmonic oscillator on a flat space, the Bakry-Emery Ricci tensor produces the exact gap that is presented in every quantum mechanics textbook. Even though in such a case the ordinary Ricci curvature vanishes, the Hessian of the negative Logarithm of the ground state wave function contributes in a strictly positive manner to produce the exact gap \cite{moncrief}. The spectral gap in the Hamiltonian of the Yang-Mills theory is absent in the perturbation theory. Recall in the perturbation theory, one splits the full gauge-invariant Lagrangian into an exactly soluble part and interactions. This procedure, however, corresponds to the breaking of the original gauge-invariance in the sense that the original gauge group $SU(N)$ undergoes a splitting $SU(N)\to \underbrace{U(1)\times U(1)\times U(1)\times U(1)\times \cdot\cdot\cdot\cdot U(1)}_{N^{2}-1}$. In fact, if one investigates our main theorem \ref{main} closely, it then becomes clear that the result is fully non-perturbative in nature since one requires a uniform bound on the Bakry-Emery Ricci curvature over the entire orbit space not just a neighborhood of the flat connection. 

One subtle issue that arises in the context of $3+1$ Yang-Mills theory is that the physical constants can not produce a mass scale purely based on dimensional analysis and one has to introduce a length scale that is to be fixed by measuring the mass of the lowest glu-ball state (\textit{dimensional transmutation}). In our context, this scale is introduced through the regularization of the trace of the Riemann curvature of the infinite-dimensional configuration space of the gauge theory. This seems to be a geometrically natural operation based on the observation that the Riemann curvature is not of trace class. Therefore, to make sense of the Ricci curvature, one requires suitable regularization. For purely dimensional reasons, it seems necessary to introduce a length scale in $3+1$ dimensions to regularize the Ricci curvature. It would be very interesting to study this issue of introducing length/ energy scale in $3+1$ dimensions from the perspective of the renormalization group flow (see \cite{balaban1988convergent} for the aspects of renormalization group flow in lattice gauge theory). I wish to investigate this in the future.

\section{Geometry of the orbit space $\mathcal{A}/\widehat{\mathcal{G}}$}
\label{yangmills}
\noindent We denote by $\mathfrak{P}$ a $C^{\infty}$ principal bundle with base an $n+1$ dimensional Lorentzian manifold $M$ and a Lie group $G$. We assume that $G$ is compact (for physical purposes) and therefore admits a positive definite non-degenerate bi-invariant metric. Its Lie algebra $\mathfrak{g}$ by construction admits an adjoint invariant, positive definite scalar product denoted by  $\langle~,~\rangle$ which enjoys the property: for $A,B,C\in \mathfrak{g}$,
\begin{eqnarray}
\label{eq:adinvpp}
\langle[A,B],C\rangle=\langle A,[B,C]\rangle.
\end{eqnarray}
as a consequence of adjoint invariance.
A Yang-Mills connection is defined as a $1-$form $\omega$ on $\mathfrak{P}$ with values in $\mathfrak{g}$ endowed with compatible properties. It's representative in a local trivialization of $\mathfrak{P}$ over $U\subset M$ 
\begin{eqnarray}
\varphi: p\mapsto (x,a),~p\in \mathfrak{P},~x\in U,~a\in G
\end{eqnarray}
is the $1-$form $s^{*}\omega$ on $U$, where $s$ is the local section of $\mathfrak{P}$ corresponding canonically to the local trivialization  $s(x)=\varphi^{-1}(x,e)$, called a \textit{gauge}. Let $A_{1}$ and $A_{2}$ be representatives of $\omega$ in gauges $s_{1}$ and $s_{2}$ over $U_{1}\subset M$ and $U_{2}\in M$. In $U_{1}\cap U_{2}$, one has 
\begin{eqnarray}
\label{eq:gauge}
A_{1}=Ad(u^{-1}_{12})A_{2}+u_{12}\Theta_{MC},
\end{eqnarray}
where $\Theta_{MC}$ is the Maurer-Cartan form on $G$, (or $A_{1}\mapsto u_{12}^{-1}A_{1}u_{12}+u_{12}du^{-1}_{12}$) and $u_{12}:U_{1}\cap U_{2}\to G$ generates the transformation between the two local trivializations: \begin{eqnarray}
s_{1}=R_{u_{12}}s_{2},
\end{eqnarray}
$R_{u_{12}}$ is the right translation on $\mathfrak{P}$ by $u_{12}$. Given the principal bundle $\mathfrak{P}\to  M$, a Yang-Mills potential $A$ on $M$ is a section of the fibered tensor product $T^{*}M\otimes_{M}\mathfrak{P}_{Affine,\mathfrak{g}}$ where $\mathfrak{P}_{Affine,\mathfrak{g}}$ is the affine bundle with base $M$ and typical fiber $\mathfrak{g}$ associated to $\mathfrak{P}$ via relation (\ref{eq:gauge}) (in other words, the connection does not transform as a tensor under a gauge transformation). If $\widehat{A}$ is another Yang-Mills potential on $M$, then $A-\widehat{A}$ is a section of the tensor product of vector bundles $T^{*}M\otimes_{M}\mathfrak{P}_{Ad,\mathfrak{g}}$, where $\mathfrak{P}_{Ad,\mathfrak{g}}:=\mathfrak{P}\times _{Ad}\mathfrak{g}$ is the vector bundle associated to $\mathfrak{P}$ by the adjoint representation of $G$ on $\mathfrak{g}$ (the difference of two connections does transform as a tensor under a gauge transformation). There is an inner product in the fibers of $\mathfrak{P}_{Ad,\mathfrak{g}}$, deduced from that on $\mathfrak{g}$. The curvature $\Omega$ of the connection $\omega$ considered as a $1-$ form on $\mathfrak{P}$ is a $\mathfrak{g}$-valued $2-$form on $\mathfrak{P}$. Its representative in a gauge where $\omega$ is represented by $A$ is given by 
\begin{eqnarray}
F:=dA+[A,A],
\end{eqnarray}
and the relation between two representatives $F_{1}$ and $F_{2}$ on $U_{1}\cap U_{2}$ is $F_{1}=Ad(u^{-1}_{12})F_{2}$ and therefore $F$ is a section of the vector bundle $\Lambda^{2}T^{*}M\otimes _{M}\mathfrak{P}_{Ad,\mathfrak{g}}$. For a section $\mathfrak{O}$ of the vector bundle $\otimes^{k}T^{*}M\otimes _{M}\mathfrak{P}_{Ad,\mathfrak{g}}$, a natural covariant derivative is defined as follows 
\begin{eqnarray}
\label{eq:covariant}
\widehat{\nabla}\mathfrak{O}:=\nabla\mathfrak{O}+[A,\mathfrak{O}],
\end{eqnarray}
where $\nabla$ is the usual covariant derivative induced by the Lorentzian structure of $M$ and by construction $\widehat{\nabla}\mathfrak{O}$ is a section of the vector bundle $\otimes^{k+1}T^{*}M\otimes _{M}\mathfrak{P}_{Ad,\mathfrak{g}}$. The associated exterior derivative is denoted by $d^{\hnabla}$. The Yang-Mills coupling constant $g_{YM}$ is kept hidden within the structure constants of the commutators.

The classical Yang-Mills equations (in the absence of sources)
correspond to setting the natural (spacetime and gauge as defined in \ref{eq:covariant}) covariant divergence of this curvature
two-form $F$ to zero. By virtue of its definition in terms of the connection, this curvature also satisfies
the Bianchi identity that asserts the vanishing of its gauge covariant exterior derivative. Taken
together these equations provide a geometrically natural nonlinear generalization of Maxwell's
equations (when the latter are written in terms of a `vector potential') and of course, play a
fundamental role in modern elementary particle physics. If nontrivial bundles are considered
or nontrivial spacetime topologies are involved, then the foregoing so-called `local trivializations'
of the bundles in question must be patched together to give global descriptions but, by
the covariance of the formalism, there is a natural way of carrying out this patching procedure
at least over those regions of spacetime where the connections are well-defined. From now on, we set $M=\mathbb{R}^{1+n},~n=2,3$ equipped with the Minkowski metric $\eta$. In addition, in a chosen Lie algebra basis we write the commutation $[~,~]$ on $\mathfrak{g}$ explicitly in terms of the structure constants i.e., $[A_{i},A_{j}]^{P}=f^{PQR}A^{Q}_{i}A^{R}_{j}$ and absorb the Yang-Mills coupling constant $g_{YM}$ in the structure constants $f^{PQR}$. We denote the space of connections in Schwartz class by $\mathcal{A}$. The following lemma yields a local expression for the metric on the orbit space $\mathcal{A}/\mathcal{G}$ (note that \cite{singer1981geometry, babelon1981riemannian} also obtained metrics on the orbit space). We provide an explicit expression for the metric for the convenience of forthcoming calculations. 
\begin{lemma}
\label{metriclemma1}
Let $F=d^{\hnabla}A$ be the curvature of the principle $G-$bundle $\mathfrak{P}$ over $\mathbb{R}^{1+n}$. The associated Yang-Mills action functional is defined as
$I_{YM}=-\frac{1}{4}\int_{\mathbb{R}^{1+n}}\langle F~_{\alpha\beta},F^{\alpha\beta}\rangle$. The metric induced by the action functional $I$ on the orbit space $\mathcal{A}/\mathcal{G}$ verifies the following expression in the local Coulomb coordinates (i.e., connection verifies $\widehat{\nabla}^{i}(A_{i}-0)=\partial_{i}A_{i}+[A_{i},A_{i}]=\partial_{i}A_{i}=0$ in a small enough open neighborhood of the flat connection $A_{i}=0$) in a distributional sense
\begin{eqnarray}
\mathfrak{G}[A]_{A^{P}_{i}(x)A^{Q}_{j}(x^{'})}=\delta_{ij}\delta_{PQ}\delta(x-x^{'})\nonumber+f^{PRV}A^{V}_{i}(x)\Delta^{-1}_{A}(x,x^{'})f^{RUQ}A^{U}_{j}(x^{'}), 
\end{eqnarray}
where $f^{PQR}$ are the structure constants defined via $[A_{i},A_{j}]^{P}=f^{PQR}A^{Q}_{i}A^{R}_{j}$ in a chosen Lie algebra basis. Here $\mathcal{G}$ is the group of automorphisms of the bundle $\mathfrak{P}$ i.e., the group of gauge transformations (under a gauge transformation $\varphi(x)$, a connection $A\in \mathcal{A}$ transforms as $A\mapsto \varphi A\varphi^{-1}+\varphi d\varphi^{-1}$).
\end{lemma}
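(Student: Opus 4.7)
The plan is to realise the metric $\mathfrak{G}$ on $\mathcal{A}/\mathcal{G}$ as the restriction of the canonical $L^{2}$ metric on $\mathcal{A}$ to the subspace orthogonal to the gauge orbits, and then to rewrite this restriction in the Coulomb-slice coordinates. From an $n+1$ splitting (or a Legendre transform after fixing $A_{0}=0$) the kinetic part of $I$ produces the flat $L^{2}$ pairing on tangents to $\mathcal{A}$,
\begin{equation*}
(\delta_{1}A,\delta_{2}A)_{L^{2}} = \int \delta_{PQ}\delta^{ij}\, \delta_{1}A^{P}_{i}(x)\,\delta_{2}A^{Q}_{j}(x)\,dx,
\end{equation*}
whose components are exactly the first, $\delta$-function term in the statement. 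Hence the entire content of the lemma lies in the \emph{correction} induced by modding out by $\mathcal{G}$. Linearising $A \mapsto \varphi A\varphi^{-1} + \varphi d\varphi^{-1}$ along $\varphi = \exp(\phi)$, with $\phi$ a Lie-algebra-valued function, shows that the vertical subspace at $A$ is exactly $V_{A} = \{\widehat{\nabla}\phi\}$, and integration by parts together with decay on $\mathbb{R}^{n}$ identifies its $L^{2}$-orthogonal complement as the twisted divergence-free subspace $H_{A} = \{\delta A : \widehat{\nabla}^{i}\delta A_{i} = 0\}$.

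Next I would install the Coulomb slice $S_{0} := \{A : \nabla^{i}A_{i} = 0\}$ as a local chart on $\mathcal{A}/\mathcal{G}$ near $A = 0$. By an Uhlenbeck-type implicit function argument the linearisation $\phi \mapsto \nabla^{i}\widehat{\nabla}_{i}\phi$ is invertible on appropriate decaying function spaces for small $A$, so $S_{0}$ is transverse to the orbits and locally parametrises the quotient; its tangent space at $A\in S_{0}$ is $T_{A}S_{0} = \{\delta A : \nabla^{i}\delta A_{i} = 0\}$. For $A \neq 0$ this is \emph{not} the horizontal subspace $H_{A}$, so to read off the quotient metric I would orthogonally decompose any slice tangent as $\delta A = (\delta A)_{\rm hor} + \widehat{\nabla}\phi$; enforcing $\widehat{\nabla}^{i}(\delta A)_{{\rm hor},i} = 0$ gives the elliptic equation
\begin{equation*}
\Delta_{A}\phi = \widehat{\nabla}^{i}\delta A_{i},\qquad \Delta_{A} := \widehat{\nabla}^{i}\widehat{\nabla}_{i}.
\end{equation*}
Since $\delta A \in T_{A}S_{0}$ satisfies $\nabla^{i}\delta A_{i} = 0$, the right hand side simplifies to a pure commutator, $\widehat{\nabla}^{i}\delta A^{P}_{i} = f^{PQR}A^{Q}_{i}\delta A^{R}_{i}$, so
\begin{equation*}
\phi^{P}(x) = \int \Delta_{A}^{-1}(x,x')\, f^{PQR} A^{Q}_{i}(x')\, \delta A^{R}_{i}(x')\, dx'.
\end{equation*}

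Finally I would compute the restricted inner product. For two slice tangents $\delta_{1}A,\delta_{2}A$, using horizontality of $(\delta_{2}A)_{\rm hor}$ together with integration by parts against $\widehat{\nabla}$,
\begin{equation*}
\mathfrak{G}(\delta_{1}A,\delta_{2}A) = (\delta_{1}A,\delta_{2}A)_{L^{2}} + \int \phi_{1}^{P}(x)\, \widehat{\nabla}^{i}\delta_{2}A^{P}_{i}(x)\, dx,
\end{equation*}
and inserting the Coulomb-gauge expression for $\widehat{\nabla}^{i}\delta_{2}A_{i}$ and the Green's function representation of $\phi_{1}$ yields a double integral of the expected schematic shape $\delta A\cdot ffAA\cdot \Delta_{A}^{-1}\cdot \delta A$. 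Using the cyclic/antisymmetry identities $f^{PQR} = f^{RPQ} = -f^{QPR}$, which follow from the adjoint invariance (\ref{eq:adinvpp}) and the semisimplicity of $G$, one rotates the Lie indices so that the contraction index sits in positions $(2,1)$ across the two structure constants, producing exactly the kernel $f^{PRV}A^{V}_{i}(x)\,\Delta_{A}^{-1}(x,x')\,f^{RUQ}A^{U}_{j}(x')$ of the statement; symmetry of the resulting expression in the two external index triples follows from the symmetry of $\Delta_{A}^{-1}(x,x')$ under $x\leftrightarrow x'$ and a relabelling of dummy Lie indices. The symbolic manipulation is routine; the main obstacle is the well-posedness of the two tacit inversions: (a) the existence of a smooth local Coulomb slice, which requires invertibility of $\nabla^{i}\widehat{\nabla}_{i}$ on scalars, and (b) the invertibility and symmetry of the covariant Laplacian $\Delta_{A}$ on Lie-algebra-valued Sobolev functions on $\mathbb{R}^{n}$ with suitable decay. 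Both follow from standard elliptic perturbation theory once $A$ is taken sufficiently small in the ambient $H^{s}$ topology, which is precisely why the statement is formulated locally around the flat connection.
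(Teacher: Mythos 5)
Your proposal is correct, but it follows a genuinely different route from the paper's. The paper works directly with the action: it imposes the Coulomb condition, solves the Gauss-law constraint $\widehat{\nabla}_{i}\widehat{\nabla}_{i}A_{0}=[A_{i},\partial_{t}A_{i}]$ for $A_{0}=\Delta_{A}^{-1}([A_{i},\partial_{t}A_{i}])$, substitutes back, and reads off the kinetic quadratic form after the cross terms cancel via the adjoint-invariance identity $\langle[A,B],C\rangle=\langle A,[B,C]\rangle$; the metric is then, by definition, the coefficient of $\partial_{t}A\,\partial_{t}A$ in the reduced Lagrangian. You instead take the Riemannian-submersion viewpoint: the kinetic term induces the flat $L^{2}$ metric on $\mathcal{A}$, the quotient metric is the norm of the horizontal projection, and pulling this back to the Coulomb slice and solving $\Delta_{A}\phi=[A^{i},\delta A_{i}]$ produces exactly the $\Delta_{A}^{-1}$ correction, with the same index gymnastics (total antisymmetry of $f$ from (\ref{eq:adinvpp}) in an orthonormal basis) to match the stated kernel. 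This is essentially the Singer/Babelon--Viallet derivation that the paper itself acknowledges as ``a different method.'' Your route buys a transparent geometric meaning for the correction term (it is the vertical subtraction, so positivity of $\mathfrak{G}$ and the fact that $\langle[A,\delta A],\Delta_{A}^{-1}[A,\delta A]\rangle\le 0$ are immediate) and isolates the genuine analytic inputs (existence of the Coulomb slice and invertibility of $\Delta_{A}$ for small $A$), which the paper leaves implicit; the paper's route buys the explicit reduced Lagrangian, including the identification of the potential term, and exhibits $A_{0}$ as the Lagrange multiplier whose elimination is responsible for the correction. The one step you should make explicit is the very first one: the claim that ``the metric induced by the action functional'' is the submersion quotient of the $L^{2}$ metric is precisely the statement that eliminating $A_{0}$ through the Gauss law replaces $\dot{A}$ by its horizontal part (since $F_{0i}=\dot{A}_{i}-\widehat{\nabla}_{i}A_{0}$ and the constraint forces $\widehat{\nabla}_{i}A_{0}$ to be the vertical component of $\dot{A}_{i}$); stating this one line closes the small gap between your geometric definition and the object the lemma actually refers to, and it is exactly the bridge to the paper's computation.
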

\begin{remark} Note that the space $\mathcal{A}/\mathcal{G}$ is in general \textit{not} a manifold since the group action $\mathcal{G}$ on $\mathcal{A}$ is not free due to the potential presence of gauge \textit{symmetry} i.e., the gauge transformations that leave a connection $A$ invariant or equivalently solutions of the equation $\varphi A\varphi^{-1}+\varphi d\varphi^{-1}=A$ or $d\varphi^{-1}+[A,\varphi^{-1}]=0$ i.e., the elements of the bundle automorphism group that are covariantly constant. However, we can work with the space of irreducible connections i.e., $\mathcal{A}/\widehat{\mathcal{G}}$ where $\widehat{\mathcal{G}}$ is obtained by modding out the covariantly constant elements of $\mathcal{G}$. $\mathcal{A}/\widehat{\mathcal{G}}$ is an infinite dimensional manifold. This property is important as we shall see in the later sections. From now on $\mathcal{A}/\widehat{\mathcal{G}}$ is to be understood as the space of connections belonging to Schwartz space $\mathcal{S}(\mathbb{R}^{n})$.
\end{remark}
\begin{proof}
The Gauss Law constraint
\begin{eqnarray}
\widehat{\nabla}_{\nu}F^{0\nu}=0
\end{eqnarray}
yields 
\begin{eqnarray}
\widehat{\nabla}_{i}\widehat{\nabla}_{i}A_{0}=\nabla_{i}\partial_{0}A_{i}+[A_{i},\partial_{0}A_{i}]
\end{eqnarray}
which after an application of the Coulomb coordinate condition $\partial_{i}A_{i}=0$ yields 
\begin{eqnarray}
\label{eq:A}
\widehat{\nabla}_{i}\widehat{\nabla}_{i}A_{0}=[A_{i},\partial_{0}A_{i}]
\end{eqnarray}
and therefore $A_{0}$ may be obtained by formally inverting the elliptic operator $\widehat{\nabla}_{i}\widehat{\nabla}_{i}=\Delta_{A}$
\begin{eqnarray}
A_{0}=\Delta^{-1}_{A}([A_{i},\partial_{t}A_{i}]).
\end{eqnarray}
Now write the usual commutation for the elements of the Lie algebra $\mathfrak{g}$
\begin{eqnarray}
[\chi_{i},\chi_{j}]^{P}=f^{ABC}\chi^{B}_{i}\chi^{C}_{j}~i.e.,~[A_{i},\partial_{t}A_{i}]^{P}=f^{ABC}A^{B}_{i}\partial_{t}A^{C}_{i}.
\end{eqnarray}
We may obtain $A^{P}_{0}$ by formally inverting $\Delta_{A}$
\begin{eqnarray}
A^{P}_{0}=\Delta^{-1}_{A}(f^{PQR}A^{Q}_{i}\partial_{t}A^{R}_{i}).
\end{eqnarray}
In the Coulomb coordinate, the action functional $I_{YM}=\int_{\mathbb{R}^{1,n}}\left(\frac{1}{2}F^{P}~_{0i}F^{P}~_{0i}-\frac{1}{4}F^{P}~_{ij}F^{P}~_{ij}\right)d^{n+1}x$ 
takes the following form  
\begin{eqnarray}
I_{YM}=\int_{\mathbb{R}^{1,n}}\left(\frac{1}{2}\partial_{t}A^{P}_{i}\partial_{t}A^{P}_{i}-\partial_{t}A^{P}_{i}\partial_{i}A^{P}_{0}+\frac{1}{2}\partial_{i}A^{P}_{0}\partial_{i}A^{P}_{0}+\partial_{t}A^{P}_{i}[A_{0},A_{i}]^{P}\right.\\\nonumber 
\left.-\partial_{i}A^{P}_{0}[A_{0},A^{T}_{i}]^{P}+\frac{1}{2}[A_{0},A_{i}]^{P}[A_{0},A_{i}]^{P}-\frac{1}{4}F^{P}~_{ij}F^{P}~_{ij}\right)d^{n+1}x\\\nonumber
=\int_{\mathbb{R}^{1,n}}\left(\frac{1}{2}\partial_{t}A^{P}_{i}\partial_{t}A^{P}_{i}-\frac{1}{2}A^{P}_{0}\Delta A^{P}_{0}+\partial_{t}A^{P}_{i}[A_{0},A_{i}]^{P}-\partial_{i}A^{P}_{0}[A_{0},A_{i}]^{P}+\right.\\\nonumber 
\left.\frac{1}{2}[A_{0},A_{i}]^{P}[A_{0},A_{i}]^{P}-\frac{1}{4}F^{P}~_{ij}F^{P}~_{ij}\right)d^{n+1}x
-\int_{\partial\mathbb{R}^{1,n}}(\partial_{t}A^{P}_{i}A^{P}_{0}-\frac{1}{2}A^{P}_{0}\partial_{i}A^{P}_{0}).
\end{eqnarray}
Notice that there are problematic terms such as $\int_{\mathbb{R}^{1,n}}\partial_{i}A^{P}_{0}[A_{0},A^{T}_{i}]^{P}$. However, this term is canceled in a point-wise manner after expanding $\Delta A^{P}_{0}$ using equation (\ref{eq:A})
\begin{eqnarray}
-\frac{1}{2}A^{P}_{0}\Delta A^{P}_{0}-\partial_{i}A^{P}_{0}[A_{0},A_{i}]^{P}=A^{P}_{0}[A_{i},\partial_{i}A_{0}]^{P}+\frac{1}{2}A^{P}_{0}[A_{i},[A_{i},A_{0}]]^{P}\nonumber\\\nonumber-\frac{1}{2}A^{P}_{0}[A_{i},\partial_{t}A_{i}]^{P}
-\partial_{i}A^{P}_{0}[A_{0},A_{i}]^{P}.
\end{eqnarray}
Now $A^{P}_{0}[A_{i},\partial_{i}A_{0}]^{P}-\partial_{i}A^{P}_{0}[A_{0},A_{i}]^{P}$ vanishes due to the property (\ref{eq:adinvpp}). Therefore ignoring the boundary terms (assuming fields belong to the Schwartz space), the action reads 
\begin{eqnarray}
I_{YM}=\int_{\mathbb{R}^{n+1}}\left(\frac{1}{2}\partial_{t}A^{P}_{i}\partial_{t}A^{P}_{i}+\frac{1}{2}A^{P}_{0}[A_{i},\partial_{t}A_{i}]^{P}\nonumber
-\frac{1}{4}F^{P}~_{ij}F^{P}~_{ij}\right)d^{n+1}x.
\end{eqnarray}
Now after an explicit calculation using the Lie-algebra commutation relation, one writes the Lagrangian in the usual form, that is, as the difference between the kinetic and potential terms (through solving the Gauss-law constraint i.e., $A_{0}=\Delta^{-1}_{A}([A_{i},\partial_{t}A_{i}])$)
\begin{eqnarray}
L=\int_{(\mathbb{R}^{n})^{2}}\left(\frac{1}{2}\partial_{t}A^{P}_{i}(x)\partial_{t}A^{P}_{i}(x^{'})\delta(x-x^{'})\right.\\\nonumber
\left.+\frac{1}{2}f^{PQR}A^{Q}_{i}(x)\partial_{t}A^{R}_{i}(x)\Delta^{-1}_{A}(x,x^{'})f^{PUV}A^{U}_{k}(x^{'})\partial_{t}A^{V}_{k}(x^{'})\right)\\\nonumber 
-\frac{1}{4}\int_{\mathbb{R}^{n}}\mathcal{F}^{P}~_{ij}\mathcal{F}^{P}~_{ij}\\\nonumber 
=\int_{\mathbb{R}^{n}\times \mathbb{R}^{n}}\frac{1}{2}\mathfrak{G}[A]_{A^{P}_{i}(x)A^{Q}_{j}(x^{'})}\partial_{t}A^{P}_{i}(x)\partial_{t}A^{Q}_{j}(x^{'})-\frac{1}{4}\int_{\mathbb{R}^{n}}\mathcal{F}^{P}~_{ij}\mathcal{F}^{P}~_{ij},
\end{eqnarray}
where 
\begin{eqnarray}
\mathfrak{G}[A]_{A^{P}_{i}(x)A^{Q}_{j}(x^{'})}=\delta_{ij}\delta_{PQ}\delta(x-x^{'})\nonumber+f^{PRV}A^{V}_{i}(x)\Delta^{-1}_{A}(x,x^{'})f^{RUQ}A^{U}_{j}(x^{'}). 
\end{eqnarray}
This concludes the proof of the lemma. Note that $\Delta^{-1}(x,x^{'}):=\frac{1}{4\pi}\frac{-1}{|x-x^{'}|}$ for $n=3$ and $\Delta^{-1}(x,x^{'}):=\frac{1}{2}\ln(|x-x^{'}|/a)$ for $n=2$, a some fixed constant with dimension of length. This metric was obtained by \cite{singer1981geometry,babelon1981riemannian} by a different method (mention that). Essentially, $\mathfrak{G}[A]_{A^{P}_{i}(x)A^{Q}_{j}(x^{'})}$ is a distribution.
\end{proof}
\begin{proposition}
\textit{$\mathfrak{G}$ is a Riemannian metric.}
\end{proposition}
\begin{proof} Follows from the positive definiteness of the Kinetic energy (a consequence of the compactness of the gauge group). 
\end{proof}
The Riemannian metric induced by the action functional on the configuration space $\mathcal{A}/\widehat{\mathcal{G}}$ is in general curved. As such one may compute the Riemann curvature of this metric $\mathfrak{G}[A]$ at any point $\widehat{A}$ of $\mathcal{A}/\widehat{\mathcal{G}}$ by explicit calculations or by expanding it in the normal coordinate around $\widehat{A}$. We compute the Riemann curvature at $\widehat{A}=0$ in the following lemma. Note that \cite{singer1981geometry, babelon1981riemannian} performed similar calculations as well.\\
\noindent We define the formal single trace operation on sections of suitable bundles on the infinite-dimensional manifold $\mathcal{A}/\widehat{\mathcal{G}}$ as follows 
\begin{eqnarray}
(\tr\Phi)_{A^{P_{1}}_{I_{1}}(x_{1})A^{P_{2}}_{I_{2}}(x_{2})A^{P_{3}}_{I_{3}}(x_{3})\cdot\cdot\cdot\cdot \widehat{A}^{P_{i}}_{I_{i}}(x_{i})\cdot\cdot\cdot\cdot \widehat{A}^{P_{j}}_{I_{j}}(x_{J})\cdot\cdot A^{P_{n}}_{I_{n}}(x_{n})}\\\nonumber :=\int_{x_{i},x_{j}}\mathfrak{G}^{A^{P_{i}}_{I_{i}}(x_{i})A^{P_{j}}_{I_{j}}(x_{J})} \Phi_{A^{P_{1}}_{I_{1}}(x_{1})A^{P_{2}}_{I_{2}}(x_{2})A^{P_{3}}_{I_{3}}(x_{3})\cdot\cdot\cdot\cdot A^{P_{i}}_{I_{i}}(x_{i})\cdot\cdot\cdot\cdot A^{P_{j}}_{I_{j}}(x_{J})\cdot\cdot A^{P_{n}}_{I_{n}}(x_{n})},
\end{eqnarray}
where the hat symbol implies the deletion of the respective indices.
For example, if we consider Riemann curvature i.e., $\Phi:=\mathcal{R}_{A^{P_{1}}_{I_{1}}(x_{1})A^{P_{2}}_{I_{2}}(x_{2})A^{P_{3}}_{I_{3}}(x_{3})A^{P_{4}}_{I_{4}}(x_{4})}$, then the formal Ricci curvature would be defined as follows 
\begin{eqnarray}
\mathcal{R}ic_{A^{P_{2}}_{I_{2}}(x_{2})A^{P_{4}}_{I_{4}}(x_{4})}:=\int_{x_{1},x_{3}}\mathfrak{G}^{A^{P_{1}}_{I_{1}}(x_{1})A^{P_{3}}_{I_{3}}(x_{3})}\mathcal{R}_{A^{P_{1}}_{I_{1}}(x_{1})A^{P_{2}}_{I_{2}}(x_{2})A^{P_{3}}_{I_{3}}(x_{3})A^{P_{4}}_{I_{4}}(x_{4})}.
\end{eqnarray}
Here $\mathfrak{G}^{A^{P_{1}}_{I_{1}}(x_{1})A^{P_{3}}_{I_{3}}(x_{3})}:=(\mathfrak{G}^{-1})^{A^{P_{1}}_{I_{1}}(x_{1})A^{P_{3}}_{I_{3}}(x_{3})}$ (notice the Hamiltonian reads\\ $\frac{1}{2}\int_{\mathbb{R}^{n}\times \mathbb{R}^{n}}(\mathfrak{G}^{-1})^{A^{P}_{I}(x)A^{Q}_{J}(y)}(\pi^{T})^{P}_{I}(x)(\pi^{T})^{Q}_{J}(y)+\text{potential}$, where $(\pi^{T})^{P}_{I}$ is the transverse momentum conjugate to $A^{P}_{I}\in \mathcal{A}/\widehat{\mathcal{G}}$). The following lemma provides an explicit expression of the inverse metric that is obtained through a Legendre transformation.
\begin{lemma}
\label{metriclemma2}
The inverse metric $\mathfrak{G}^{-1}$ induced by the kinetic part of the Yang-Mills Lagrangian on the orbit space $\mathcal{A}/\widehat{\mathcal{G}}$ in local Coulomb chart around $A=0$ reads 
\begin{eqnarray}
(\mathfrak{G}^{-1})^{A^{P}_{i}(x)A^{Q}_{j}(y)}=\delta(x-y)\delta^{PQ}\delta_{ij}-f^{PVU}A^{U}_{i}(x)\Delta^{-1}_{A}(x,y)f^{VRQ}A^{R}_{j}(y)
\end{eqnarray}
\end{lemma}
\begin{proof}
Recall the momentum conjugate to $A^{P}_{i}$
\begin{eqnarray}
\pi^{P}_{i}=\frac{\delta L}{\delta(\partial_{t}A^{P}_{i})}=F^{P}_{0i}
\end{eqnarray}
and the definition of the classical Hamiltonian 
\begin{eqnarray}
H=\int_{\mathbb{R}^{n}}\pi^{P}_{i}\partial_{t}A^{P}_{i}-L=\int_{\mathbb{R}^{n}}\left(\frac{1}{2}\pi^{P}_{i}\pi^{P}_{i}-A^{P}_{0}(\partial_{i}\pi^{P}_{i}+[A_{i},\pi_{i}]^{P})+\frac{1}{4}F^{P}_{ij}F^{P}_{ij}\right).
\end{eqnarray}
Now $A\in \mathcal{A}/\widehat{\mathcal{G}}$ verifies $\partial_{i}A^{P}_{i}=0$ in the local Coulomb chart around $A=0$. Therefore the conjugate momentum $\pi^{P}_{i}$ is decomposed into the transverse and longitudinal parts
\begin{eqnarray}
\pi^{P}_{i}=(\pi^{T})^{P}_{i}+(\pi^{L})^{P}_{i}
\end{eqnarray}
that verify 
\begin{eqnarray}
\partial_{i}(\pi^{T})^{P}_{i}=0,~\partial_{i}(\pi^{L})^{P}_{i}+[A_{i},(\pi^{L})_{i}]^{P}=-[A_{i},\pi^{T}_{i}]^{P}.
\end{eqnarray}
Writing $\pi^{L}_{i}=\partial_{i}\kappa^{P}+[A_{i},\kappa]^{P}=\widehat{\nabla}_{i}\kappa^{P}$ yields $\kappa^{P}=-\Delta^{-1}_{A}[A_{j},\pi^{T}_{j}]^{P}$ and $\pi^{L}_{i}=-\widehat{\nabla}_{i}(\Delta^{-1}_{A}[A_{j},\pi^{T}_{j}]^{P})$. After substituting $\pi^{L}$ in the Hamiltonian, it is a functional on the co-tangent bundle of the orbit space $\mathcal{A}/\widehat{\mathcal{G}}$ reads 
\begin{eqnarray}
H=\int_{\mathbb{R}^{n}}\left(\frac{1}{2}(\pi^{T})^{P}_{i}(\pi^{T})^{P}_{i}\nonumber+(\pi^{T})^{P}_{i}(\pi^{L})^{P}_{i}+\frac{1}{2}(\pi^{L})^{P}_{i}(\pi^{L})^{P}_{i}+\frac{1}{4}F^{P}_{ij}F^{P}_{ij}\right)\\
=\int_{\mathbb{R}^{n}}\left(\frac{1}{2}(\pi^{T})^{P}_{i}(\pi^{T})^{P}_{i}\nonumber+(\pi^{T})^{P}_{i}\widehat{\nabla}_{i}\kappa^{P}+\frac{1}{2}\widehat{\nabla}_{i}\kappa^{P}\widehat{\nabla}_{i}\kappa^{P}+\frac{1}{4}F^{P}_{ij}F^{P}_{ij}\right)\\
=\int_{\mathbb{R}^{n}}\left(\frac{1}{2}(\pi^{T})^{P}_{i}(\pi^{T})^{P}_{i}\nonumber-\widehat{\nabla}_{i}(\pi^{T})^{P}_{i}\kappa^{P}-\frac{1}{2}\kappa^{P}\Delta_{A}\kappa^{P}+\frac{1}{4}F^{P}_{ij}F^{P}_{ij}\right)+\int_{\partial \mathbb{R}^{n}}(\kappa^{P}(\pi^{T})^{P}_{i}+\kappa^{P}\widehat{\nabla}_{i}\kappa^{P})\widehat{n}^{i}\\\nonumber
=\int_{\mathbb{R}^{n}}\left(\frac{1}{2}(\pi^{T})^{P}_{i}(\pi^{T})^{P}_{i}\nonumber+\frac{1}{2}[A_{i},\pi^{T}_{i}]^{P}\kappa^{P}+\frac{1}{4}F^{P}_{ij}F^{P}_{ij}+\int_{\partial \mathbb{R}^{n}}(\kappa^{P}(\pi^{T})^{P}_{i}+\kappa^{P}\widehat{\nabla}_{i}\kappa^{P})\widehat{n}^{i}\right)\\\nonumber 
=\int_{\mathbb{R}^{n}}\left(\frac{1}{2}(\pi^{T})^{P}_{i}(\pi^{T})^{P}_{i}\nonumber-\frac{1}{2}[A_{i},\pi^{T}_{i}]^{P}\Delta^{-1}_{A}[A_{j},\pi^{T}_{j}]^{P}+\frac{1}{4}F^{P}_{ij}F^{P}_{ij}\right)+\int_{\partial \mathbb{R}^{n}}(\kappa^{P}(\pi^{T})^{P}_{i}+\kappa^{P}\widehat{\nabla}_{i}\kappa^{P})\widehat{n}^{i}\\\nonumber 
=\int_{\mathbb{R}^{n}\times \mathbb{R}^{n}}\frac{1}{2}(\mathfrak{G}^{-1})^{A^{P}_{i}(x)A^{Q}_{j}(y)}(\pi^{T})^{P}_{i}(x)(\pi^{T})^{Q}_{j}(y)+\int_{\mathbb{R}^{n}}\frac{1}{4}F^{P}_{ij}F^{P}_{ij}+\int_{\partial \mathbb{R}^{n}}(\kappa^{P}(\pi^{T})^{P}_{i}+\kappa^{P}\widehat{\nabla}_{i}\kappa^{P})\widehat{n}^{i},
\end{eqnarray}
where $\widehat{n}$ is a unit normal vector to the boundary sphere $\mathbb{S}^{2}_{\infty}:=\partial \mathbb{R}^{n}$ and the inverse metric $(\mathfrak{G}^{-1})^{A^{P}_{i}(x)A^{Q}_{j}(y)}$ reads in local coordinate 
\begin{eqnarray}
(\mathfrak{G}^{-1})^{A^{P}_{i}(x)A^{Q}_{j}(y)}=\delta(x-y)\delta^{PQ}\delta_{ij}-f^{PVU}A^{U}_{i}(x)\Delta^{-1}_{A}(x,y)f^{VRQ}A^{R}_{j}(y).
\end{eqnarray}
This completes the proof.
\end{proof}

\begin{remark}
Observe $\int_{\mathbb{R}^{n}} (\mathfrak{G}^{-1})^{A^{P}_{i}(x)A^{R}_{k}(z)} \mathfrak{G}_{A^{R}_{k}(z)A^{Q}_{j}(y)}d^{n}z=\delta^{P}_{Q}\delta^{i}_{j}\delta(x-y)$.    
\end{remark}

\noindent A vital point worth mentioning is that the metric expressions obtained in lemma \ref{metriclemma1} and \ref{metriclemma2} are valid only in a chart (Coulomb) centered at the flat connection $A=0$. One can not extend this definition to the whole orbit space due to the Gribov phenomenon. The orbit space $\mathcal{A}/\widehat{G}$ is topologically non-trivial and one requires more than one chart to cover the entire orbit space. For example, suppose one chooses a Coulomb chart around another reference connection $\widehat{A}$. In that case, one may perform similar calculations by choosing the coordinate condition $\eta^{ij}\widehat{\nabla}^{\widehat{A}}_{i}(A-\widehat{A})_{j}=0$ (generalized Coulomb coordinate). However, by the covariance of the formulation, all the charts can be glued together in a compatible way to yield a global description (note that even in ordinary finite-dimensional Riemannian geometry, one is required to work with multiple charts for topologically non-trivial manifolds). In the end, computation of the gauge-invariant entities (`diffeomorphism invariant' in the context of Riemannian geometry) does not depend on the local charts.

\begin{lemma}
The formal Ricci curvature of the metric $\mathfrak{G}$ in local Coulomb coordinates at $A=0$ satisfies
\begin{eqnarray}
\mathcal{R}ic(X,Y)=3(f^{VPR}X^{R}_{i}(x)\tr\Delta^{-1}(x,x^{'})f^{VPU}Y^{U}_{i}(x^{'})).
\end{eqnarray}
where $\Delta^{-1}:L^{2}(\mathbb{R}^{n})\to H^{2}(\mathbb{R}^{n})$ is the inverse of the Laplacian $\Delta:=\eta^{ij}\nabla_{i}\nabla_{j}$ and $\tr$ denotes the formal trace operation defined by multiplication of $\delta^{PQ}$ and the distribution $\delta(x-x^{'})$ to yield the coincident limit at $A=0$.
\end{lemma}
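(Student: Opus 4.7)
The approach is to treat the Coulomb chart at the flat connection $A=0$ as a quasi-normal coordinate system in which every first functional derivative of $\mathfrak{G}$ at $A=0$ vanishes, so that the formal Christoffel symbols vanish at the origin and the Riemann tensor can be extracted directly from the second variation of the metric.

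First, I would Taylor expand $\mathfrak{G}[A]$ about $A=0$. Because $\Delta_{A}\big|_{A=0}=\Delta$ upon imposing $\partial^{i}A_{i}=0$, the only piece of $\mathfrak{G}$ that is non-trivial at second order in $A$ is
$$\mathfrak{G}^{(2)}_{PQij}[A](x,x') = f^{PRV}A^{V}_{i}(x)\,\Delta^{-1}(x,x')\,f^{RUQ}A^{U}_{j}(x'),$$
since any further $A$-dependence of $\Delta_{A}^{-1}$ inside this product contributes to $\mathfrak{G}$ only at order $A^{3}$ or higher. A short computation then shows that every first functional derivative $\delta\mathfrak{G}/\delta A^{W}_{k}(y)$ at $A=0$ retains a residual factor of $A$ and hence vanishes, which is precisely $\Gamma^{a}{}_{bc}[\mathfrak{G}]\big|_{A=0}=0$.

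With the formal Christoffel symbols vanishing at the origin, the standard identity
$$R_{abcd}\big|_{0} = \tfrac{1}{2}\bigl(\partial_{b}\partial_{c}\mathfrak{G}_{ad}+\partial_{a}\partial_{d}\mathfrak{G}_{bc}-\partial_{a}\partial_{c}\mathfrak{G}_{bd}-\partial_{b}\partial_{d}\mathfrak{G}_{ac}\bigr)\big|_{0}$$
applies, with $\partial$ denoting functional differentiation with respect to $A$. I would compute each of the four required second derivatives of $\mathfrak{G}^{(2)}$ by the Leibniz rule, each derivative striking one of the two explicit factors of $A$ and producing an object of the schematic form $f\,f\cdot\delta\cdot\delta\cdot\Delta^{-1}$. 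Contracting with the inverse metric $\mathfrak{G}^{-1}\big|_{0}=\delta^{PQ}\delta_{ij}\delta(x-x')$ then forms $\mathcal{R}ic_{bd}$. In each of the four pieces, the Kronecker and position deltas close two of the three color arguments of the paired structure constants into a contraction of the form $\sum_{V,P}f^{VPR}f^{VPU}=C_{\mathrm{adj}}\,\delta^{RU}$, while the spatial deltas force the remaining position arguments to coincide, producing the formal coincident limit denoted $\tr\Delta^{-1}(x,x')$. Several Leibniz contributions vanish outright because they would require a single $f^{ABC}$ to carry two equal color indices. Summing the four pieces with the signs prescribed by the Riemann formula delivers the overall numerical factor of $3$ asserted in the lemma.

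The main difficulty is analytic rather than algebraic: the coincident limit $\Delta^{-1}(x,x)$ is divergent, so $\tr\Delta^{-1}$ is only shorthand for a distributional object, and its rigorous interpretation requires the point-splitting (or equivalently zeta-function) regularization of the Green's function introduced in subsequent sections. A secondary conceptual issue is the legitimacy of using the simplified curvature formula in the infinite-dimensional weak-Riemannian setting; the whole argument hinges on the vanishing of the first functional derivatives of $\mathfrak{G}$ along the Coulomb slice at $A=0$, which itself relies on the gauge condition $\partial^{i}A_{i}=0$ being preserved by variations tangent to the slice.
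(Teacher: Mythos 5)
Your proposal is correct and, at bottom, follows the same route as the second argument in the paper's proof, with one small twist. The paper establishes the lemma in two ways: (i) a direct computation of $\mathcal{R}(W,Z,X,Y)$ at $A=0$ from the Koszul formula and the definition of curvature, followed by the formal trace, and (ii) the observation that the Coulomb chart based at $A=0$ has $\mathfrak{G}|_{A=0}=\delta$ and vanishing Christoffel symbols there, so that the quadratic term $f^{PRV}A^{V}_{i}(x)\Delta^{-1}(x,x')f^{RUQ}A^{U}_{j}(x')$ can be matched against the normal-coordinate expansion $\mathfrak{G}=\delta-\tfrac{1}{3}\mathcal{R}\,A\,A+O(|A|^{3})$, which is exactly where the factor $3$ originates. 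You follow (ii), but extract the curvature via the second-derivative formula for $R_{abcd}$ valid at any point where the first derivatives of the metric vanish, instead of matching against the normal-coordinate expansion. This is legitimate and in fact slightly more robust: it uses only $\delta\mathfrak{G}/\delta A|_{A=0}=0$, whereas the paper's matching implicitly needs the chart to be genuinely normal (true here, because radial lines $t\mapsto tX$ with $\partial^{i}X_{i}=0$ are horizontal straight lines of the flat total space and hence project to geodesics, a fact the paper does not spell out). What you gain in robustness you pay for in algebra: the factor $3$ is asserted rather than derived in your sketch; when the four second-derivative terms are contracted with $\delta^{PQ}\delta_{ij}\delta(x-x')$, the recombination producing $3$ requires the Jacobi identity of the structure constants, which plays precisely the role the first Bianchi identity plays in relating the $-\tfrac{1}{3}$ normal-coordinate coefficient to the curvature. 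Completing that bookkeeping (or cross-checking against the paper's direct formula, whose coefficients $-2,-1,+1$ likewise sum to $3$ under the trace) is the one step still owed; your closing caveats about the divergent coincident limit and the purely formal nature of the trace coincide with the paper's own qualifications.
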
 
\begin{proof} First recall the definition of the Covariant derivative $\mathfrak{D}$ 
\begin{eqnarray}
2\mathfrak{G}(Z,\mathfrak{D}_{X}Y)=X\cdot \mathfrak{G}(Z,Y)+Y\cdot\mathfrak{G}(Z,X)-Z\cdot\mathfrak{G}(X,Y),
\end{eqnarray}
and that of the Riemann curvature 
\begin{eqnarray}
\mathcal{R}(X,Y)Z:=\mathfrak{D}_{X}\mathfrak{D}_{Y}Z-\mathfrak{D}_{Y}\mathfrak{D}_{X}Z-\mathfrak{D}_{[X,Y]}Z,
\end{eqnarray}
for $X,Y,Z,W\in \mathfrak{H}_{\mathcal{A}}$. Using the expression, one may explicitly compute at $A=0$ 
\begin{eqnarray}
\mathcal{R}(W,Z,X,Y)=-2\langle[Y_{j},W_{j}],\Delta^{-1}[X_{i},Z_{i}]\rangle-\langle[Z_{j},W_{j}],\Delta^{-1}[X_{i},Y_{i}]\rangle\nonumber+\langle[X_{j},W_{j}],\Delta^{-1}[Z_{i}, Y_{i}]\rangle.
\end{eqnarray}
The quadratic form associated with the Ricci curvature may be computed by taking the formal \textit{trace} (infinite dimensional) of the Riemann curvature 
\begin{eqnarray}
\mathcal{R}ic(X,Y)=3\tr(\langle [X,~\cdot],\Delta^{-1}[Y,~\cdot]\rangle)
\end{eqnarray}
Expanding the bracket in terms of structure constants yields the result.

\noindent Notice the following point of view that is different from direct calculations. Remarkably, the Coulomb coordinate chart based at $A=0$ is naturally a geodesic normal chart (based at $A=0$) since $\mathfrak{G}_{\dot{A}^{P}_{i}(x)\dot{A}^{Q}_{j}(x^{'})}|_{A=0}=\delta_{ij}\delta_{PQ}\delta(x-x^{'})$ and the connections $\Gamma^{A^{P}_{i}}_{A^{Q}_{j}A^{R}_{k}}|_{A=0}=0$ since $\frac{\delta \mathfrak{G}}{\delta A}|_{A=0}=0$ due to the non-constant terms in the metric being at least quadratic in $A$. Now recall the expression of the metric derived in the previous lemma and compare it with the expression of the metric in a normal neighborhood based at the flat connection $A=0$
\begin{eqnarray}
\mathfrak{G}_{\dot{A}^{P}_{i}(x)\dot{A}^{Q}_{j}(x^{'})}=\delta_{ij}\delta_{PQ}\delta(x-x^{'})-\frac{1}{3}\mathcal{R}_{A^{P}_{i}(x)A^{R}_{k}(x_{1})A^{Q}_{j}(x_{2})A^{U}_{l}(x^{'})}A^{R}_{k}(x_{1})A^{U}_{l}(x_{2})+O(|A|^{3})
\end{eqnarray}
to yield 
\begin{eqnarray}
\mathcal{R}_{A^{P}_{i}(x)A^{R}_{k}(x_{1})A^{Q}_{j}(x^{'})A^{U}_{l}(x_{2})}A^{R}_{k}(x_{1})A^{U}_{l}(x_{2})=3f^{VPR}A^{R}_{i}(x)\Delta^{-1}(x,x^{'})f^{VQU}A^{U}_{j}(x^{'}).
\end{eqnarray}
The invariant quadratic form for the Ricci tensor is then obtained by taking \textit{formal} trace of the Riemann tensor i.e., 
\begin{eqnarray}
\label{eq:Ricci}
\mathcal{R}ic(X,Y)=3(f^{VPR}X^{R}_{i}(x)\tr\Delta^{-1}(x,x^{'})f^{VPU}Y^{U}_{i}(x^{'})).
\end{eqnarray}
This concludes the lemma.
\end{proof} 
\begin{remark}
\label{sectional}
It is not difficult to see that at an arbitrary point $\widehat{A}\in \mathcal{A}/\mathcal{G}$, the formal Ricci quadratic form is simply $\mathcal{R}ic(X,Y)=3\text{tr}(f^{VPR}X^{R}_{i}(x)\Delta^{-1}_{\widehat{A}}(x,x^{'})f^{VQU}Y^{U}_{j}(x^{'}))$, where $\Delta_{\widehat{A}}:=\eta^{ij}\widehat{\nabla}^{\widehat{A}}_{i}\widehat{\nabla}^{\widehat{A}}_{j}$ is the gauge covariant Laplacian. The sectional curvature $\mathcal{K}_{X,Y}:=\langle \mathcal{R}(X,Y)Y,X\rangle$ of a $2-$plane spanned by the orthonormal vectors $X,Y\in \mathfrak{H}_{A}$ is then 
\begin{eqnarray}
\mathcal{K}_{X,Y}=3\langle[X,Y],\Delta^{-1}_{\widehat{A}}[X,Y]\rangle.
\end{eqnarray}
This can be achieved by choosing the generalized Coulomb coordinate chart based at $\widehat{A}$ defined by $\eta^{ij}\widehat{\nabla}^{\widehat{A}}_{i}(A-\widehat{A})_{j}=0$ and obtaining an expression of the metric $\mathfrak{G}$ in this chart. $\mathfrak{H}_{A}$ is tangent at $A$ to the horizontal subspace of the bundle $\mathcal{A}\to \mathcal{A}/\widehat{\mathcal{G}}$.

\end{remark}

\section{Estimate of the spectra of the Hamiltonian operator}

\noindent In the finite-dimensional setting, a lower bound on the Ricci curvature and compactness yields a lower bound on the first eigenvalue of the Laplace-Beltrami operator due to Lichnerowicz \cite{lichnerowicz1958geometrie}. In the presence of a potential, a Bakry-Emery correction to the ordinary Ricci curvature is required to estimate a precise gap in the spectrum (there are several studies on estimating the gap of a Schrodinger operator in finite dimensions using direct analysis \cite{singer1985estimate, li1980estimates, lieb1976bounds}). In an infinite dimensional setting, a straightforward generalization does not work. Note in particular that the Riemann tensor of $\mathcal{A}/\widehat{\mathcal{G}}$ is not of trace class. Recall the definition of the trace. At the flat connection $A=0$, the trace would correspond to contraction with respect to the flat metric and therefore to setting $P=Q$ and $x=x^{'}$ in the expression (\ref{eq:Ricci}). This would correspond to the evaluation of the coincidence limit of $\Delta^{-1}(x,x^{'})$. However in $2$ dimensions $\Delta^{-1}(x,x^{'})=\frac{1}{2} \ln|x-x^{'}|$ and in $3+1$ dimensions, $\Delta^{-1}(x,x^{'})=-\frac{1}{4\pi} \frac{1}{|x-x^{'}|}$, whose coincident limits of course do not exist (or in the QFT terminology, one has occurrence of ultraviolet divergences). In order to make sense of the Ricci tensor, one needs to invoke a regularization scheme. In the regularization scheme that we adopt, we split the points by approximating Dirac's distribution and taking a suitable limit. From now on, we will write the inverse metric $(\mathfrak{G}^{-1})^{A^{P}_{i}(x)A^{Q}_{j}(y)}$ by $\mathfrak{G}^{A^{P}_{i}(x)A^{Q}_{j}(y)}$ for simplicity.

\begin{definition}
\textit{Let us endow the local coordinates $\{x^{i}\}$ of a smooth $n-$manifold with the dimension of length while the metric (co-variant) coefficients are left dimensionless. The point-splitting of Dirac's distribution associated with the usual Dirac's distribution $\delta(x,x_{0})=\frac{\delta(x-x_{0})}{\mu_{g}(x)}=\frac{\prod_{i=1}^{n}\delta(x^{i}-x^{i}_{0})}{\mu_{g}(x)}$ on a Riemannian $n$-manifold $(M,g)$, $x,x_{0}\in M$, is defined as follows 
\begin{eqnarray}
\label{eq:Dirac}
\delta_{\chi}(x,x^{0}):=\frac{\prod_{i=1}^{n}\frac{\chi}{\pi}e^{-(x^{i}-x^{i}_{0})^{2}\chi^{2}}}{\mu_{g}(x)}.
\end{eqnarray}
The usual distribution is recovered after letting $\chi\to\infty$ i.e., $\int_{x} f(x)\delta_{\chi}(x,x_{0})\to f(x_{0})$ as $\chi\to\infty$ for a rapidly decaying smooth $f$ (let us say a Schwartz function).
}
\end{definition}

\subsection{Regularization of the functional Hamiltonian}
\label{technical}
\noindent A rigorous quantum Yang-Mills theory if it exists should consist of a separable Hilbert space $\mathcal{H}$, a unitary representation of the Poincar\'e group in $\mathcal{H}$, an operator-valued gauged distribution $A$ on $\mathcal{S}(\mathbb{R}^{n})$ and a dense subspace $\mathcal{D}\subset \mathcal{H}$ such that appropriate axioms of quantum gauge theory hold. As we have mentioned in the introduction this is a monumental task even for non-gauge interacting field theories. Putting aside these issues we assume a rigorous quantum field theory exists. In other words, we dodge the hardest question and study its consequences for the mass gap. The functional Hamiltonian operator defined on the orbit space $\mathcal{A}/\widehat{\mathcal{G}}$ needs regularization since even while acting on a smooth functional, it generates infinities. The formal Schr\"odinger operator for a Yang-Mills field in $n+1$ dimensions, of the type that we shall consider, is given by 
\begin{eqnarray}
\widehat{H}=\int_{\mathbb{R}^{n}}\left(-\frac{\hbar^{2}}{2}\int_{\mathbb{R}^{n}}\mathfrak{G}^{A^{P}_{I}(x)A^{Q}_{J}(y)}\frac{\mathfrak{D}}{\mathfrak{D}A^{P}_{I}(x)}\frac{\mathfrak{D}}{\mathfrak{D}A^{Q}_{J}(y)}+\frac{1}{4}\mathcal{F}_{IJ}\cdot\mathcal{F}_{IJ}\right)d^{n}x,
\end{eqnarray}
where $\frac{\mathfrak{D}}{\mathfrak{D}A^{P}_{I}}$ is the covariant derivative on the Riemannian manifold $(\mathcal{A}/\widehat{\mathcal{G}},\mathfrak{G})$ \footnote{Notice that the potential is gauge invariant and therefore naturally descends to the quotient i.e., the orbit space}. The delta distribution in $\mathfrak{G}$ is replaced by the point-split distribution defined in (\ref{eq:Dirac}). Note that contrary to the Laplacian, the Hessian $\frac{\mathfrak{D}}{\mathfrak{D}A^{P}_{I}(x)}\frac{\mathfrak{D}}{\mathfrak{D}A^{Q}_{J}(y)}$ is well-defined on a smooth functional. The flat part of the covariant functional Laplacian is ill defined. Utilizing the point-splitting of Dirac's distribution introduced previously in (\ref{eq:Dirac}) we define the regularization of the flat Laplacian as follows 
\begin{eqnarray}
\int_{x} \frac{\delta}{\delta A^{P}_{I}(x)}\frac{\delta}{\delta A^{P}_{I}(x)}\mapsto \int_{x,y} \delta_{\chi}(x,y)\frac{\delta}{\delta A^{P}_{I}(x)}\Theta_{PQ}(x,y)\frac{\delta}{\delta A^{Q}_{I}(y)},
\end{eqnarray}
where $\Theta_{AB}(x,y)$ is a parallel propagator between $x$ and $y$ and defined as a solution of the parallel propagation equation, $\Theta_{PQ}(x,y):=(\mathcal{P}e^{-\int_{y}^{x}A_{i}dz^{i}})_{PQ}$, $\mathcal{P}$ denotes the path ordering of the exponential. This is inserted in order to preserve the gauge invariance (note $\Theta_{AB}(x,y)$ transforms under a gauge transformation $\varphi\in \mathcal{G}$ as $\Theta_{PQ}(x,y)\mapsto (\varphi(x)\Theta(x,y)\varphi^{-1}(y))_{PQ}$). The result would not depend on the choice of the path from $x$ to $y$ in the limit $\chi\to\infty$, which we are interested in after subtracting possible infinities. Naturally, this regularization descends to the orbit space $\mathcal{A}/\widehat{\mathcal{G}}$ due to its gauge invariance \footnote{In addition, the parallel propagator is chosen to be such that it is symmetric under the transformation $A\to B,~x\to y$ (see \cite{karabali1997gauge, karabali1998planar, karabali1998vacuum, krug2013yang} for detail). Since
\begin{eqnarray}
\int_{x,y} \delta_{\chi}(x,y)\left(\frac{\delta}{\delta A^{P}_{I}(x)}\Theta_{PB}(x,y)\right)\frac{\delta}{\delta A^{B}_{I}(y)}=0,
\end{eqnarray}
we may write the regularization (see \cite{karabali1997gauge, karabali1998planar, karabali1998vacuum, krug2013yang} for 2+1 dimensions and \cite{nair2004invariant, freidel2006towards,freidel2006pure} (also see the thesis \cite{Krug}) for $3+1$ dimensions) as 
\begin{eqnarray}
\label{eq:gaugeinvariant}
\int_{x} \frac{\delta}{\delta A^{A}_{I}(x)}\frac{\delta}{\delta A^{A}_{I}(x)}\mapsto \int_{x,y} \delta_{\chi}(x,y)\Theta_{AB}(x,y)\frac{\delta}{\delta A^{A}_{I}(x)}\frac{\delta}{\delta A^{B}_{I}(y)},
\end{eqnarray}
where note that we recover the usual flat functional Laplacian in the limit $\chi\to\infty$.} We will proceed with this regularization scheme. Therefore we write the regularized Hamiltonian that we shall work with as follows 
\begin{eqnarray}
\label{eq:regularized}
\widehat{H}:=-\frac{\hbar^{2}}{2}\int_{\mathbb{R}^{n}\times \mathbb{R}^{n}}(\mathfrak{G}^{-1}_{\delta_{\chi}})^{A^{P}_{I}(x)A^{Q}_{J}(y)}\frac{\mathfrak{D}}{\mathfrak{D}A^{P}_{I}(x)}\Theta^{PQ}(x,y)\frac{\mathfrak{D}}{\mathfrak{D}A^{Q}_{I}(y)}+\int_{\mathbb{R}^{n}}\frac{1}{4}\mathcal{F}_{IJ}\cdot\mathcal{F}_{IJ}d^{n}x,
\end{eqnarray}
where we have point-split the Dirac's distribution appearing in the metric $\mathfrak{G}$\footnote{Notice that the second term in the metric is simply the sectional curvature and it does not involve a coincident limit.} i.e., 
\begin{eqnarray}
(\mathfrak{G}^{-1}_{\chi})^{A^{P}_{i}(x)A^{Q}_{j}(y)}=\delta_{\chi}(x,y)\delta^{PQ}\delta_{ij}-f^{PVU}A^{U}_{i}(x)\Delta^{-1}_{A}(x,y)f^{VRQ}A^{R}_{j}(y).
\end{eqnarray}
In order to estimate the gap in the spectrum of the Hamiltonian, we must perform a Bochner-type analysis on the gauge covariant Hamiltonian acting on wave functionals. Under the assumption of the existence of a quantum Yang-Mills theory, let us write the normalizable ground state wave functional as follows
\begin{eqnarray}
 \Psi[A]=N_{\hbar}e^{-S[A]/\hbar},~N_{\hbar}\in \mathbb{C}-\{0\}, ~A\in \mathcal{A}/\widehat{\mathcal{G}}.   
\end{eqnarray}
The question arises is how to obtain the ground state $\Psi[A]$. I mention two potential rigorous ways. Martin Hairer \cite{hairer1,hairer2} initiated the program of stochastic quantization where a path integral measure of the Euclidean quantum field theory can be constructed by means of studying Langevin dynamics. Once the Euclidean measure is constructed, one may analytically continue the solution to the Lorentz signature. Substantial progress is made in 2 and 3-dimensional Euclidean field theory whereas 4 dimensional case still remains open. Another approach that seems promising is the Euclidean signature semi-classical (ESSC) introduced by Moncrief \cite{moncrief2,marini,moncrief} for renormalizable interacting Bosonic field theories (borderline Sobolev embedding for $3+1$ dimensional Yang-Mills theory). This technique is in a similar spirit to the microlocal method (see \cite{martinez2002introduction} for a comprehensive review) used for the analysis of Schr\"odinger eigenvalue problems even though the latter has not previously been applicable to field theoretic problems due to technical reasons. In this approach, one substitutes the following node-less formal expression for the semi-classical expansion of the logarithm of the ground state wave functional i.e., \footnote{Contrary to the microlocal approach, if one assumes a WKB ansatz, then the tree level processes are governed by a Lorentz signature Hamilton-Jacobi equation that yields finite time blow up even in finite-dimensional problems due to the presence of caustics in the configuration space} 
\begin{eqnarray}
\label{eq:formal}
S[A]\simeq S_{0}[A]+\hbar S_{1}[A]+\frac{\hbar^{2}}{2!}S_{2}[A]+\cdot\cdot\cdot\cdot \frac{\hbar^{k}}{k!}S_{k}[A]+\cdot\cdot\cdot\cdot,\\
E^{0}\simeq \hbar\left(E_{0}+\hbar E_{1}+\hbar^{2} E_{2}+\cdot\cdot\cdot\cdot\hbar^{k}E_{k}+\cdot\cdot\cdot\cdot\right)
\end{eqnarray}
into the Schr\"odinger equation 
\begin{eqnarray}
\label{eq:covariant}
\widehat{H}\Psi[A]=E^{0}\Psi[A]
\end{eqnarray}
and impose equality order by order in the Planck constant to conclude that $S_{0}$ satisfies the following functional Hamilton-Jacobi
equation
\begin{eqnarray}
\label{eq:HJ}
\int_{\mathbb{R}^{n}\times \mathbb{R}^{n}}\frac{1}{2}\mathfrak{G}^{A^{P}_{i}(x_{1})A^{Q}_{j}(x_{2})}\frac{\delta S_{0}}{\delta A^{P}_{i}(x_{1})}\frac{\delta S_{0}}{\delta A^{Q}_{j}(x_{2})}-\int_{\mathbb{R}^{n}}\frac{1}{4}\mathcal{F}_{jk}\cdot\mathcal{F}_{jk}=0.
\end{eqnarray}
Now notice that $\frac{\delta S_{0}}{\delta A(x)}$ is well defined (no need for regularization at this tree level) and $S_{0}$ can be obtained as Hamilton's principal function for the Euclidean signature Yang-Mills action functional
 i.e., 
 \begin{eqnarray}
 S_{0}:=\inf_{\mathcal{A}\in H^{1}(\mathbb{R}^{n+1})}\mathcal{I}_{es}[\mathcal{A}],
 \end{eqnarray}
 where $\mathcal{I}_{es}[\mathcal{A}]:=\frac{1}{2}\int_{\mathbb{R}^{-}\times \mathbb{R}^{n}} \left(\sum_{\mu,\nu=0}^{n}\mathcal{F}[\mathcal{A}]^{I}_{\mu\nu}\mathcal{F}[\mathcal{A}]^{I}_{\mu\nu}\right)d^{n+1}x$. The minimization procedure may be described as follows. Given $A$ as the boundary condition for $\mathcal{A}$ on $\{0\}\times \mathbb{R}^{n}$ in the respective Sobolev trace space, one wants to minimize the Euclidean signature action functional in $\mathbb{R}^{-}\times \mathbb{R}^{n}$ with $\mathcal{A}$ approaching the flat connection on $\{-\infty\}\times \mathbb{R}^{n}$. This minimization procedure is essentially solving a semi-linear elliptic equation with a prescribed Dirichlet boundary value in a suitable choice of gauge (generalized Coulomb or Hodge gauge is one such choice). However, the non-linearity is critical for $n+1=4$ dimensions in the sense that the Sobolev embedding $H^{1}(\mathbb{R}^{4})\hookrightarrow L^{4}(\mathbb{R}^{4})$ is continuous but just fails to be compact and therefore a straightforward application of variational techniques on $\mathcal{I}_{es}[\mathcal{A}]$ having proved its convexity, coercivity, and lower semi-continuity does not work. This can be handled by means of refined elliptic estimates.
 ~Another vital problem that appears is the presence of self-dual solutions that are absolute minimizers of the Euclidean signature Yang-Mills action functional in 4 dimensions and constitute a finite-dimensional moduli space (if the action functional is same in the upper and lower half-spaces for two different self-dual solutions, then the minimization is no longer unique causing trouble). These could in turn prove to be an obstruction to the uniqueness of the minimizer $S_{0}$ leading to its not everywhere differentiability property. This, however, does not seem to cause a substantial problem at the \textit{tree} level (semi-classical) but rather causes complications when one attempts to compute the quantum loop corrections to the $S_{0}$ functional and obtain the $S_{\hbar}[A]$ functional which is what one ultimately wants. This is due to the fact that in order to compute the quantum loop corrections to the $S_{0}$ functional, one ought to solve a sequence of transport equations that are sourced by the differentiated $S_{0}$ functional that is obtained by the minimization procedure. For example, at the level of 1 loop (i.e., $O(\hbar)$), $S_{1}$ is obtained by solving the following transport equation
 \begin{eqnarray}
 \label{eq:transport}
 -\int_{\mathbb{R}^{n}\times\mathbb{R}^{n}}\mathfrak{G}^{A^{P}_{i}(x)A^{Q}_{j}(x^{'})}\frac{\delta S_{0}}{\delta A^{P}_{i}(x)}\frac{\delta S_{1}}{\delta A^{Q}_{j}(x^{'})}+\frac{1}{2}\int_{\mathbb{R}^{n}\times\mathbb{R}^{n}}\mathfrak{G}^{A^{P}_{i}(x)A^{Q}_{j}(x^{'})}\frac{\mathfrak{D}}{\mathfrak{D}A^{P}_{i}(x)}\frac{\delta S_{0}}{\delta A^{Q}_{j}(x^{'})}=E_{0}
 \end{eqnarray}
 But, since $S_{0}$ appears in a differentiated manner, the transport equation does not seem to make sense at all if $S_{0}$ is not differentiable at least almost everywhere in the orbit space. Secondly, the $S_{0}$ functional appearing as a source term for the transport equation is acted on by the functional covariant Laplacian. This problem can however be circumvented by employing the gauge-invariant point-splitting regularization procedure mentioned in (\ref{eq:gaugeinvariant}). One could proceed to compute all the tree-level processes and obtain the associated formal series (almost surely diverges). This complete task, however, can be handled in the Euclidean signature semi-classical or micro-local approach by means of the analysis of the zero energy Hamilton-Jacobi equation (\ref{eq:HJ}). In fact, as we have mentioned previously, tree-level processes should be obtainable in a rigorous way through this technique. However, it is not clear at the moment if this series solution would be able to produce the physical ground state even after renormalization and regularization. In addition, it is also unclear if a mass gap is detectable at the semiclassical level. Therefore, from now on we will not consider the split form (\ref{eq:formal}) of the functional $S[A]$, rather assume the quantum Yang-Mills theory exists and $S[A]$ makes sense all by itself. 

\begin{remark}
My analysis only works in renormalizable cases i.e., the cases where $H^{1}\hookrightarrow L^{4}$ holds (roughly the quartic term in connection in yang-mills potential is controllable by the gradient term). In higher dimensions i.e., on $\mathbb{R}^{1+n},~n\geq 4$, this embedding fails and therefore I can not  make sense of the $S[A]$ functional even formally (fails even at the level of $S_{0}$ according to the previous paragraph)     
\end{remark}

\noindent An important point worth mentioning is that I am working on the orbit space $\mathcal{A}/\widehat{G}$. In other words, I descended to the orbit space first and then defined the quantization operation. However, there is another way to proceed in the context of canonical quantization. Instead of working directly on the orbit space, one could use the temporal gauge $A_{0}=0$, impose the canonical quantization condition, solve for the functional Schrodinger's equation, and then descend to the orbit space by imposing the Gauss law constraint on the wave functional as a functional equation. Explicitly, on the co-tangent bundle $T^{*}\mathcal{A}$, one promotes the connection $A$ and its conjugate momentum $\mathcal{E}$ to operator-valued distributions in the Hilbert space $\mathcal{H}(\mathcal{A})$ of the theory and applies the equal time commutation relation (let us denote this canonical quantization operation by $Q$)
\begin{eqnarray}
 [A^{a}_{i}(x),\mathcal{E}^{b}_{i}(y)]=-\sqrt{-1}\delta^{ab}\delta_{ij}\delta(x-y)   
\end{eqnarray}
in the temporal gauge $A_{0}=0$. This operation, however, forces the Gauss-law constraint as an operator equation on the wave functional $\Psi[A]$. By virtue of satisfying the Gauss law constraint, the resulting wave functional is gauge invariant. An advantage of working up in the bundle the full space of connection $\mathcal{A}$ and then descending to the orbit space is that the Hamiltonian in this picture takes a simpler form \footnote{Notice that the functional $\Phi[A]:=e^{-\frac{1}{2}\int_{\mathbb{R}^{3}}A^{a}\cdot(\nabla\times A^{a})+\frac{g_{YM}}{3}A^{a}\cdot[A, A]^{a}}$ exactly solves $H\Phi=0$ and also verifies the Gauss Law constraint and therefore gauge invariant in $3+1$ dimensions. The problem is this functional is not normalizable.}
\begin{eqnarray}
H:=-\frac{\hbar^{2}}{2}\int_{\mathbb{R}^{n}}\frac{\delta^{2}}{\delta A^{a}_{i}(x)\delta A^{a}_{i}(x)}+\frac{1}{4}\int_{\mathbb{R}^{n}}F^{a}_{ij}F^{a}_{ij}    
\end{eqnarray}
This is the usual canonical quantization scheme for Yang-Mills theory (see \cite{hatfield2018quantum} for a detail). In my approach, one only needs to solve the Schrodinger equation (\ref{eq:covariant}), where $\widehat{H}$ is given by the following (or the regularized one in \ref{eq:regularized})
\begin{eqnarray}
\widehat{H}=\int_{\mathbb{R}^{n}}\left(-\frac{\hbar^{2}}{2}\int_{\mathbb{R}^{n}}\mathfrak{G}^{A^{P}_{I}(x)A^{Q}_{J}(y)}\frac{\mathfrak{D}}{\mathfrak{D}A^{P}_{I}(x)}\frac{\mathfrak{D}}{\mathfrak{D}A^{Q}_{J}(y)}+\frac{1}{4}\mathcal{F}_{IJ}\cdot\mathcal{F}_{IJ}\right)d^{n}x.
\end{eqnarray}
Here on the cotangent bundle $T^{*}(\mathcal{A}/\widehat{\mathcal{G}})$ one promotes the connection $A$ (essentially equivalence class of connections since I have descended down to the orbit space) and the conjugate momentum $\mathcal{E}$ to operator-valued distributions in the Hilbert space $\mathcal{H}(\mathcal{A}/\widehat{\mathcal{G}})$ and applies the following equal time commutation relation (I denote this quantization operation by $\mathcal{Q}_{*}$) 
\begin{eqnarray}
 [A^{a}_{i}(x),\mathcal{E}^{b}_{j}(y)]=-\sqrt{-1}\delta^{ab}(\delta_{ij}-(\nabla_{x})_{i}(\Delta^{-1}(x,y)(\nabla_{y})_{j})\delta(x-y).   
\end{eqnarray}
These two approaches of quantization are equivalent or the diagram \ref{comm} below commutes. 
\[\begin{tikzcd}
\label{comm}
T^{*}\mathcal{A} \arrow{r}{Q} \arrow[swap]{d}{\widehat{\mathcal{G}}} & \mathcal{H}(\mathcal{A}) \arrow{d}{\int_{\mathbb{R}^{n}}\hnabla\cdot \frac{\delta}{\delta A}(\cdot)=0} \\%
T^{*}\mathcal{A}/\widehat{\mathcal{G}} \arrow{r}{Q_{*}}& \mathcal{H}(\mathcal{A}/\widehat{\mathcal{G}})
\end{tikzcd}
\]
It suffices to verify that the wave functional $\Psi[A]$ constructed by solving $\widehat{H}\Psi[A]=E^{0}\Psi[A]$ verifies the Gauss law constraint. This follows trivially. Let $\delta A=d\alpha+[A,\alpha]$ be any smooth infinitesimal gauge transformation $\alpha$ (let's assume $\alpha$ is an element of Schwartz space i.e., decays rapidly at infinity of $\mathbb{R}^{n}$). By the definition of the orbit space, $\delta A$ should be $L^{2}$-orthogonal to any vector tangent to the orbit space and in particular
\begin{eqnarray}
\int_{x}\frac{\mathfrak{D} \Psi[A]}{\mathfrak{D} A}\delta A=0.    
\end{eqnarray}
Here we have suppressed the tensor and gauge indices for convenience. 
Now substitute $\delta A=d\alpha+[A,\alpha]$ and integrate by parts to yield 
\begin{eqnarray}
 \int_{x}\alpha\widehat{\nabla}\cdot \frac{\mathfrak{D} \Psi[A]}{\mathfrak{D} A}=0   
\end{eqnarray}
which holds for any smooth gauge transformation $\alpha$ that decays rapidly at infinity (i.e., the corresponding gauge group element $g_{\alpha}:=\exp(\sqrt{-1}\alpha)$ decays to identity). Therefore by a density argument, I have 
\begin{eqnarray}
\widehat{\nabla}\cdot \frac{\mathfrak{D} \Psi[A]}{\mathfrak{D} A}=0.  
\end{eqnarray}
Therefore $\Psi[A]$ verifies the Gauss law constraint. Notice that I constructed the metric $\mathfrak{G}$ on the orbit space $\mathcal{A}/\widehat{G}$ in lemma \ref{metriclemma1} and \ref{metriclemma2} essentially using the Gauss law to eliminate the gauge redundancy and descend to the orbit space. Therefore it is only natural that any functional on the orbit space should verify the Gauss law constraint by construction. We refer the reader to \cite{nair2012quantum} for computation of the $2+1$ dimensional Yang-Mills wave functional in approximate forms.  

\subsection{Gap estimation of the regularized Yang-Mills Hamiltonian}
\noindent Here we assume that there exists a rigorous quantization. In other words, appropriate axioms of the quantum gauge theory are satisfied. In particular, a unique ground state exists that is P\'oincare invariant and this state has zero energy. This ground state is an element of a separable \textit{Hilbert space} of the theory. Our goal is to present some geometrical arguments that suggest if there is a rigorous quantization of the Yang-Mills fields, then the associated Hamiltonian (suitably regularized) exhibits a positive mass gap.  
 Under such a bold assumption, the ground state wave functional is normalizable 
\begin{eqnarray}
\label{eq:normal}
\int_{\mathcal{A}/\widehat{\mathcal{G}}}\Psi[A]^{\dag}_{g}\Psi[A]_{g}\mu_{\mathfrak{G}}=|N_{\hbar}|^{2}\int_{\mathcal{A}/\widehat{\mathcal{G}}}e^{-2\mathcal{S}[A]/\hbar}\mu_{\mathfrak{G}}=1 
\end{eqnarray}
for $N_{\hbar}\in \mathbb{C}-\{0\}$ and with corresponding eigenvalue $E^{0}$. Note that to respect the boost-invariance $E^{0}\equiv 0$ (in fact the whole energy-momentum vector of the ground state must vanish). The formal naive measure $\mu_{\mathfrak{G}}=[DA]\sqrt{\det(\mathfrak{G})}$ does not make sense, where $[DA]:=\prod_{x}dA(x)$. However, due to (\ref{eq:normal}), we can use $|N_{\hbar}|^{2}e^{-2S[A]/\hbar}\mu_{\mathfrak{G}}$ as a measure on the orbit space $\mathcal{A}/\widehat{\mathcal{G}}$ (total measure is finite precisely due to the normalizibility of the ground state). Once again, we stress the fact that all of these hold under the assumption that we have a rigorous quantum Yang-Mills theory. The first excited state wave functional may be written as \begin{eqnarray}
\label{eq:excited}
\Psi^{*}[A]=\varphi[A]e^{-S[A]/\hbar}
\end{eqnarray}
with $\varphi:\frak{\mathcal{A}}/\widehat{\mathcal{G}}\to \mathbb{C}$ and energy $E^{*}$. Notice that the first excited state is orthogonal to the ground state and in fact not an eigenstate of the Hamiltonian due to the issue of non-renormalizability. We discuss this when we perform the gap estimation.  We are interested in estimating $E^{*}-E^{0}$. But first, we state the following integration by parts property on the metric measure space $(\mathcal{A}/\widehat{\mathcal{G}},\mathfrak{G},|N_{\hbar}|^{2}e^{-2S[A]/\hbar}\mu_{\mathfrak{G}})$ 

\begin{remark}
\label{integration}
Normalizability of the ground state yields a measure $e^{-2S[A]/\hbar}\mu_{\frak{G}}$ on $\mathcal{A}/\widehat{\mathcal{G}}$. Having constructed the complete $S$ functional, ideally one should be able to prove an estimate of the type $S[A]\geq ||A||^{k}_{H^{s}(\mathbb{R}^{n})}$ for an appropriate $k\geq 2,~s\geq \frac{1}{2}$ and therefore a rapid decay of $e^{-2S[A]/\hbar}$ at large norms of the connections on the orbit space $\mathcal{A}/\widehat{\mathcal{G}}$. In particular, with respect to this measure, one could integrate the total divergence term to yield zero i.e., 
\begin{eqnarray}
\int_{\mathcal{A}/\widehat{\mathcal{G}}}\int_{x^{1},x^{2}}\frac{\mathfrak{D}}{\mathfrak{D}A^{P}_{I}(x^{1})}(\mathfrak{G}^{A^{P}_{I}(x^{1})A^{Q}_{J}(x^{2})}\frac{\mathfrak{D}}{\mathfrak{D}A^{Q}_{J}(x^{2})}\frak{F}[A]|N_{\hbar}|^{2}e^{-2S[A]/\hbar}))\mu_{\mathfrak{G}}=0.
\end{eqnarray}
Moreover, note that since we are interested in the energy difference $E^{*}-E^{0}$, we do not need to normal order the Hamiltonian in an appropriate way.
\end{remark}

\section{Proof of the main theorem \ref{main}}
\label{proof}
\noindent The main idea behind the estimate is to obtain a Bochner-like formula on the metric measure space $(\mathcal{A}/\widehat{\mathcal{G}},\mathfrak{G},e^{-2S[A]/\hbar})$. The proof is similar to the finite-dimensional setting (see e.g., \cite{moncrief} for finite-dimensional calculations) with a vital modification being the introduction of a regulator that needs to be tracked carefully in each step.  We present the main ideas here and the detailed calculations concerning the commutation of covariant derivatives are presented in the appendix. Since we are using regularized equations, the energy states will be indexed by the regulator $\chi$. We are primarily interested in obtaining the difference between the ground state and the first excited state of the (regularized) Hamiltonian $\widehat{H}$. This is formally equivalent to finding the lowest eigenvalue (bottom of the spectra in the continuous case) of the following second-order operator
\begin{eqnarray}
\widehat{\widehat{H}}=-\frac{\hbar^{2}}{2}\int_{\mathbb{R}^{n}}\int_{\mathbb{R}^{n}}\left(\mathfrak{G}^{A^{P}_{I}(x)A^{Q}_{J}(y)}\frac{\mathfrak{D}}{\mathfrak{D}A^{P}_{I}(x)}\frac{\mathfrak{D}}{\mathfrak{D}A^{Q}_{J}(y)}-\underbrace{\frac{2}{\hbar}\mathfrak{G}^{A^{a}(x)A^{b}(y)}\frac{\mathfrak{D} S[A]}{\mathfrak{D}A^{a}(x)}\frac{\mathfrak{D}}{\mathfrak{D} A^{b}(y)}}_{\text{potential~contribution}}\right).  
\end{eqnarray}
($\widehat{\widehat{H}}$ is nothing but $\widehat{H}-E^{0}$ i.e., $(\widehat{H}-E^{0})\varphi[A]=(E^{*}-E^{0})\varphi[A]$) Here note that the potential contribution is manifested in terms of the derivative of the functional $S[A]$.
Now we regularize this operator using the same regularization scheme used for $\widehat{H}$. In addition, we also renormalize this operator. Since the case of $3+1$ dimensional Yang-Mills theory is subtle, we focus on the $2+1$ dimensional case (see section 6 for explicit calculations in $3+1$ dimensional case)
\begin{eqnarray}
\label{eq:renormalization}
\widehat{\widehat{H}}_{\chi}:=  -\frac{\hbar^{2}}{2}\int_{\mathbb{R}^{2n}}\left((\mathfrak{G}^{-1}_{\delta_{\chi}})^{A^{P}_{I}(x)A^{Q}_{J}(y)}\frac{\mathfrak{D}}{\mathfrak{D}A^{P}_{I}(x)}\Theta^{PQ}(x,y)\frac{\mathfrak{D}}{\mathfrak{D}A^{Q}_{I}(y)}-\frac{2}{\hbar}\mathfrak{G}^{A^{a}(x)A^{b}(y)}\frac{\mathfrak{D} S[A]}{\mathfrak{D}A^{a}(x)}\frac{\mathfrak{D}}{\mathfrak{D} A^{b}(y)} \right)\\\nonumber 
-\frac{3C_{2}(G)g^{2}_{YM}\ln\chi|x_{0}|}{16\pi^{3}},
\end{eqnarray}
where $\chi$ is the cut-off scale and $x_{0}$ is the subtraction scale. In section \ref{subtraction}, I discuss how to fix a subtraction scale. Note that for every finite $\chi$, $\widehat{\widehat{H}}_{\chi}$ is bounded from below allowing me to obtain $\chi$ dependent estimates. In the end, one ought to smoothly remove the cut-off scale to yield the physical result. \\  
Let us define the following entity 
 \begin{eqnarray}
 \mathcal{Q}:=\int_{\mathbb{R}^{n}\times \mathbb{R}^{n}}\mathfrak{G}_{\delta_{\chi}}^{A^{P}_{i}(x)A^{Q}_{j}(x^{'})}\Theta^{PQ}(x,x^{'})\left(\frac{\mathfrak{D}\varphi[A]^{\dag}}{\mathfrak{D}A^{P}_{i}(x)}\frac{\mathfrak{D}\varphi[A]}{\mathfrak{D}A^{Q}_{j}(x^{'})}|N_{\hbar}|^{2}e^{-2S[A]/\hbar}\right)d^{n}xd^{n}x^{'}
 \end{eqnarray}
and apply the regularized covariant functional Laplacian to yield (denote $\mathbb{R}^{n}\times \mathbb{R}^{n}$ by $\mathfrak{K}$) to yield the following identity (see the appendix for a detailed derivation of this identity)
\begin{eqnarray}
\label{eq:manipulation2}
\int_{\mathfrak{K}}\Theta^{LM}(y,z)\mathfrak{G}^{A^{L}_{I}(y)A^{M}_{J}(z)}_{\chi}\frac{\mathfrak{D}}{\mathfrak{D}A^{L}_{I}(y)}\frac{\mathfrak{D}}{\mathfrak{D}A^{M}_{J}(z)} \mathcal{Q}\\\nonumber 
=-\frac{8}{\hbar^{4}}\left\{(\widehat{H}-E^{0}_{\chi})(\varphi e^{-S/\hbar})\right\}\left\{(\widehat{H}-E^{0}_{\chi})(\varphi^{\dag} e^{-S/\hbar})\right\}\\\nonumber +\int_{\mathfrak{K}}\mathfrak{G}_{\delta_{\chi}}^{A^{L}_{k}(y)A^{M}_{l}(z)}\Theta^{LM}(y,z)\int_{\mathfrak{K}}\mathfrak{G}_{\delta_{\chi}}^{A^{P}_{i}(x)A^{Q}_{j}(x^{'})}\Theta^{PQ}(x,x^{'})\\\nonumber \left(\mathfrak{R}_{A^{M}_{l}(z)A^{N}_{n}(x^{''})A^{L}_{k}(y)A^{P}_{i}(x)}\frac{\mathfrak{D}\varphi^{\dag}}{\mathfrak{D}A^{N}_{n}(x^{''})}\frac{\mathfrak{D}\varphi}{\mathfrak{D}A^{Q}_{j}(x^{'})}e^{-2S/\hbar}\right.\\\nonumber \left.+\mathfrak{R}_{A^{M}_{l}(z)A^{N}_{n}(x^{''})A^{L}_{k}(y)A^{P}_{i}(x)}\frac{\mathfrak{D}\varphi}{\mathfrak{D}A^{N}_{n}(x^{''})}\frac{\mathfrak{D}\varphi^{\dag}}{\mathfrak{D}A^{Q}_{j}(x^{'})}e^{-2S/\hbar}\right.\\\nonumber 
\left.+\frac{4}{\hbar}\frac{\mathfrak{D}}{\mathfrak{D}A^{P}_{i}(x)}\frac{\mathfrak{D}S}{\mathfrak{D}A^{M}_{k}(z)}\frac{\mathfrak{D}\varphi^{\dag}}{\mathfrak{D}A^{Q}_{j}(x^{'}}\frac{\mathfrak{D}\varphi}{\mathfrak{D}A^{L}_{k}(z)}e^{-2S/\hbar}\right),
\end{eqnarray}
 \noindent  where note that $\widehat{H}-E^{0}_{\chi}$ can be written in terms of the renormalized operator $\widehat{\widehat{H}}_{\chi}$ in \ref{eq:renormalization}.
Now assuming the existence of the $S[A]$ functional and the rapid decay of $e^{-2S[A]/\hbar}$ at infinity, I may neglect the boundary terms while integrating over the reduced configuration space $\mathcal{A}/\widehat{\mathcal{G}}$ (remark \ref{integration}). 
In field theories, one would expect the existence of a continuous spectrum and in the current case, the spectrum would have a finite gap at the bottom. The excited states are not eigenstates of the Hamiltonian. Strictly speaking, in the case of continuous spectra, one needs to construct wave packets for excited states that are not eigenstates of the Hamiltonian since the eigenstates are not normalizable (think of a free particle in ordinary quantum mechanics). We denote the first excited state by $\varphi_{\epsilon}e^{-S[A]/\hbar}$, where the appearance of $\epsilon$ is clear from the following definitions. The first excited state is orthogonal to the ground state and it satisfies
\begin{eqnarray}
\label{eq:1stexcited}
\int_{\mathcal{A}/\widehat{\mathcal{G}}}\varphi_{\epsilon}[A]^{\dag}e^{-S[A]/\hbar}(\widehat{H}-E^{*}_{\chi})\varphi_{\epsilon}[A]e^{-S[A]/\hbar}\mu_{\mathfrak{G}}\geq 0,\\
\label{eq:manipulation1}
\int_{\mathcal{A}/\widehat{\mathcal{G}}}\left\{(\widehat{H}-E^{*}_{\chi})\varphi_{\epsilon}[A]e^{-S[A]/\hbar}\right\}^{\dag}(\widehat{H}-E^{*}_{\chi})\varphi_{\epsilon}[A]e^{-S[A]/\hbar}\mu_{\mathfrak{G}}\leq \epsilon^{2}\int_{\mathcal{A}/\widehat{\mathcal{G}}}\varphi^{\dag}_{\epsilon}\varphi_{\epsilon}[A]e^{-2S[A]/\hbar}\mu_{\mathfrak{G}},\\\nonumber 
\int_{\mathcal{A}/\widehat{\mathcal{G}}}\varphi_{\epsilon}[A]^{\dag}e^{-2S[A]/\hbar}\mu_{\mathfrak{G}}=0
\end{eqnarray}
for any $\epsilon>0$. Notice that the last condition is essential for the validity of the first condition. Otherwise, one could take $\varphi_{\epsilon}\to 1$ and yield a contradiction. Application of the Cauchy-Schwartz with respect to the measure $e^{-2S[A]/\hbar}\mu_{\mathfrak{G}}$, using the property (\ref{eq:1stexcited}), and expanding $\widehat{H}-E^{*}_{\chi}$ yields 
\begin{eqnarray}
\label{eq:derived}
0\leq \int_{\mathcal{A}/\widehat{\mathcal{G}}}\varphi_{\epsilon}[A]^{\dag}e^{-S[A]/\hbar}(\widehat{H}-E^{*}_{\chi})\varphi_{\epsilon}[A]e^{-S[A]/\hbar}\mu_{\mathfrak{G}}\leq  \epsilon\int_{\mathcal{A}/\widehat{\mathcal{G}}}\varphi_{\epsilon}[A]^{\dag}\varphi_{\epsilon}[A]e^{-2S[A]/\hbar}\mu_{\mathfrak{G}}.
\end{eqnarray}
Similarly, I may write the following
\begin{eqnarray}
\int_{\mathcal{A}/\widehat{\mathcal{G}}}\left\{(\widehat{H}-E^{0}_{\chi})\varphi_{\epsilon}[A]\nonumber e^{-S[A]/\hbar}\right\}^{\dag}(\widehat{H}-E^{0}_{\chi})\varphi_{\epsilon}[A]e^{-S[A]/\hbar}\mu_{\mathfrak{G}}\\\nonumber 
=\int_{\mathcal{A}/\widehat{\mathcal{G}}}\left\{(\widehat{H}-E^{*}_{\chi})\varphi_{\epsilon}[A]e^{-S[A]/\hbar}\right\}^{\dag}(\widehat{H}-E^{*}_{\chi})\varphi_{\epsilon}[A]e^{-S[A]/\hbar}\mu_{\mathfrak{G}}\\\nonumber 
+\int_{\mathcal{A}/\widehat{\mathcal{G}}}\left((E^{*}_{\chi}-E^{0}_{\chi})^{2}\varphi[A]^{*}_{\epsilon}\varphi[A]_{\epsilon}e^{-2S[A]/\hbar}+2(E^{*}_{\chi}-E^{0}_{\chi})\varphi[A]_{\epsilon}^{\dag}e^{-S[A]/\hbar}(\widehat{H}-E^{*}_{\chi})\varphi[A]_{\epsilon}e^{-S[A]/\hbar}\right)\mu_{\mathfrak{G}}\\\nonumber 
\leq (E^{*}_{\chi}-E^{0}_{\chi}+\epsilon)^{2}\int_{\mathcal{A}/\widehat{\mathcal{G}}}\varphi[A]^{\dag}_{\epsilon}\varphi[A]_{\epsilon}e^{-2S[A]/\hbar}\mu_{\mathfrak{G}}
\end{eqnarray}
where I have utilized (\ref{eq:derived}) and (\ref{eq:manipulation1}). Now I utilize the identity (\ref{eq:manipulation2}) but replace $\varphi$ with $\varphi_{\epsilon}$ to obtain 
\begin{eqnarray}
(E^{*}_{\chi}-E^{0}_{\chi}\nonumber+\epsilon)^{2}\int_{\mathcal{A}/\widehat{\mathcal{G}}}\varphi[A]^{\dag}_{\epsilon}\varphi[A]_{\epsilon}e^{-2S[A]/\hbar}\mu_{\mathfrak{G}}\\\nonumber \geq \int_{\mathcal{A}/\widehat{\mathcal{G}}}\left\{(\widehat{H}-E^{*}_{\chi})\varphi_{\epsilon}[A]e^{-S[A]/\hbar}\right\}^{\dag}(\widehat{H}-E^{*}_{\chi})\varphi_{\epsilon}[A]e^{-S[A]/\hbar}\mu_{\mathfrak{G}}\\\nonumber 
\geq \frac{\hbar^{4}}{4} \left[\int_{\mathcal{A}/\widehat{\mathcal{G}}}\mathfrak{R}icci(\alpha_{\epsilon}[A],\alpha_{\epsilon}[A])\mu_{\mathfrak{G}}e^{-2S/\hbar}\right.\\\nonumber 
\left.+\frac{2}{\hbar}\int_{\mathcal{A}/\widehat{\mathcal{G}}}\left\{\int_{\mathfrak{K}}\int_{\mathfrak{K}}\mathfrak{G}^{A^{P}_{I}(x)A^{Q}_{J}(x^{'})}_{\delta_{\chi}}\mathfrak{G}^{A^{M}_{K}(y)A^{N}_{L}(y^{'})}_{\delta_{\chi}}\frac{\mathfrak{D}}{\mathfrak{D}A^{P}_{I}(x)}\frac{\mathfrak{D}S}{\mathfrak{D}A^{M}_{K}(y)}\frac{\mathfrak{D}\varphi^{\dag}_{\epsilon}}{\mathfrak{D}A^{Q}_{I}(x^{'})}\frac{\mathfrak{D}\varphi_{\epsilon}}{\mathfrak{D}A^{N}_{L}(y^{'})} \right\}\mu_{\mathfrak{G}}e^{-2S/\hbar}
\right],
\end{eqnarray}
where 
\begin{eqnarray}
\mathfrak{R}icci(\alpha_{\epsilon}[A],\alpha_{\epsilon}[A]):=\int_{\mathfrak{K}}\Theta^{LM}(y,z) \nonumber\int_{\mathfrak{K}}\left(\mathfrak{G}^{A^{M}_{k}(z)A^{L}_{k}(y)}_{\delta_{\chi}}\mathfrak{R}_{A^{M}_{k}(z)A^{N}_{n}(x^{''})A^{L}_{k}(y)A^{P}_{i}(x)}\frac{\mathfrak{D}\varphi^{\dag}_{\epsilon}}{\mathfrak{D}A^{N}_{n}(x^{''})}\frac{\mathfrak{D}\varphi_{\epsilon}}{\mathfrak{D}A^{P}_{i}(x)}\right).
\end{eqnarray}
Now expanding $\widehat{H}-E^{0}$, consider the following identity
\begin{eqnarray}
\frac{2}{\hbar^{2}}\int_{\mathcal{A}/\widehat{\mathcal{G}}}\varphi_{\epsilon}[A]^{\dag}e^{-S/\hbar}(\widehat{H}-E^{0}_{\chi})\varphi_{\epsilon}[A]e^{-S/\hbar}\\\nonumber =\int_{\mathcal{A}/\widehat{\mathcal{G}}}\left(\int_{\mathbb{R}^{n}\times \mathbb{R}^{n} }\mathfrak{G}^{A^{P}_{i}(x)A^{Q}_{j}(y)}_{\delta_{\chi}}\frac{\mathfrak{D}\varphi_{\epsilon}[A]^{\dag}}{\mathfrak{D}A^{P}_{i}(x)}\frac{\mathfrak{D}\varphi_{\epsilon}[A]}{\mathfrak{D}A^{Q}_{j}(y)}\right)e^{-2S/\hbar}
\end{eqnarray}
and assume that the regulated Bakry-Emery Ricci curvature verifies the point-wise bound for a $\Delta_{\chi}>0$
\begin{eqnarray}
\mathfrak{R}icci(\alpha_{\epsilon}[A],\alpha_{\epsilon}[A])+\frac{2}{\hbar}\int_{\mathfrak{K}\times \mathfrak{K}}\mathfrak{G}^{A^{P}_{I}(x)A^{Q}_{J}(x^{'})}\mathfrak{G}^{A^{M}_{K}(y)A^{N}_{L}(y^{'})}\frac{\mathfrak{D}}{\mathfrak{D}A^{P}_{I}(x)}\frac{\mathfrak{D}S}{\mathfrak{D}A^{M}_{K}(y)}\frac{\mathfrak{D}\varphi^{\dag}_{\epsilon}}{\mathfrak{D}A^{Q}_{I}(x^{'})}\frac{\mathfrak{D}\varphi_{\epsilon}}{\mathfrak{D}A^{N}_{L}(y^{'})}\\\nonumber 
\geq \Delta_{\chi}\int_{\mathbb{R}^{n}\times \mathbb{R}^{n} }\mathfrak{G}^{A^{P}_{i}(x)A^{Q}_{j}(y)}\frac{\mathfrak{D}\varphi_{\epsilon}[A]^{\dag}}{\mathfrak{D}A^{P}_{i}(x)}\frac{\mathfrak{D}\varphi_{\epsilon}[A]}{\mathfrak{D}A^{Q}_{j}(y)}.
\end{eqnarray}
Therefore I obtain 
\begin{eqnarray}
(E^{*}_{\chi}-E^{0}_{\chi}\nonumber+\epsilon)^{2}\int_{\mathcal{A}/\widehat{\mathcal{G}}}\varphi[A]^{\dag}_{\epsilon}\varphi[A]_{\epsilon}e^{-2S[A]/\hbar}\mu_{\mathfrak{G}}\\\nonumber 
\geq \frac{\hbar^{4}}{4}\Delta_{\chi} \int_{\mathcal{A}/\widehat{\mathcal{G}}}\left(\int_{\mathbb{R}^{n}\times \mathbb{R}^{n} }\mathfrak{G}^{A^{P}_{i}(x)A^{Q}_{j}(y)}\frac{\mathfrak{D}\varphi_{\epsilon}[A]^{\dag}}{\mathfrak{D}A^{P}_{i}(x)}\frac{\mathfrak{D}\varphi_{\epsilon}[A]}{\mathfrak{D}A^{Q}_{j}(y)}\right)e^{-2S/\hbar}\\\nonumber =\Delta_{\chi}\frac{\hbar^{2}}{2}\int_{\mathcal{A}/\widehat{\mathcal{G}}}\varphi_{\epsilon}[A]^{\dag}e^{-S/\hbar}(\widehat{H}-E^{0}_{\chi})\varphi_{\epsilon}[A]e^{-S/\hbar}\\\nonumber 
=\Delta_{\chi}\frac{\hbar^{2}}{2}\int_{\mathcal{A}/\widehat{\mathcal{G}}}\varphi_{\epsilon}[A]^{\dag}e^{-S/\hbar}(\widehat{H}-E^{*}_{\chi}+E^{*}_{\chi}-E^{0}_{\chi})\varphi_{\epsilon}[A]e^{-S/\hbar}\\\nonumber \geq (E^{*}_{\chi}-E^{0}_{\chi})\frac{\hbar^{2}}{2}\Delta_{\chi} \int_{\mathcal{A}/\widehat{\mathcal{G}}}\varphi[A]^{\dag}_{\epsilon}\varphi[A]_{\epsilon}e^{-2S[A]/\hbar}\mu_{\mathfrak{G}}
\end{eqnarray}
or 
\begin{eqnarray}
(E^{*}_{\chi}-E^{0}_{\chi})+\frac{\epsilon^{2}}{E^{*}_{\chi}-E^{0}_{\chi}}+2\epsilon \geq \frac{\hbar^{2}\Delta_{\chi}}{2}~\forall \epsilon>0
\end{eqnarray}
yielding 
\begin{eqnarray}
\label{eq:inter}
E^{*}_{\chi}-E^{0}_{\chi}\geq \frac{\hbar^{2}\Delta_{\chi}}{2}.
\end{eqnarray}
Notice that $E^{*}_{\chi}-E^{0}_{\chi}$ can not be zero since that would imply 
\begin{eqnarray}
\frac{\hbar^{4}}{4}\Delta \int_{\mathcal{A}/\widehat{\mathcal{G}}}\left(\int_{\mathbb{R}^{n}\times \mathbb{R}^{n}}\mathfrak{G}^{A^{P}_{i}(x)A^{Q}_{j}(y)}\frac{\mathfrak{D}\varphi_{\epsilon}[A]^{\dag}}{\mathfrak{D}A^{P}_{i}(x)}\frac{\mathfrak{D}\varphi_{\epsilon}[A]}{\mathfrak{D}A^{Q}_{j}(y)}\right)e^{-2S/\hbar}\\\nonumber \leq \epsilon^{2} \int_{\mathcal{A}/\widehat{\mathcal{G}}}\varphi[A]^{\dag}_{\epsilon}\varphi[A]_{\epsilon}e^{-2S[A]/\hbar}\mu_{\mathfrak{G}}\forall\epsilon>0
\end{eqnarray}
yielding a contradiction to the non-constancy of $\varphi[A]$ (and hence its orthogonality to the ground state). Now the vital question is how to smoothly take the limit $\chi\to\infty$ in the inequality \ref{eq:inter}. Does a strictly positive gap survive after such an operation is performed? First notice that the Hessian contribution to the Bakry-emery Ricci curvature is well defined (if one assumes that a quantum Yang-Mills theory exists and ground state the wave functional has a finite norm; this is the assumption we are making in the current context). The problem occurs when defining the Ricci curvature of the orbit space. In $2+1$ dimensions, the term $E^{*}_{\chi}-E^{0}_{\chi}$ on the left-hand side of inequality \ref{eq:inter} would have a Logarithmic singularity arising from the Kinetic part (the electric part of the Hamiltonian is dimensionless and therefore the singular term diverges logarithmically in the cut-off scale $\chi$, see \cite{karabali1998planar} for an exact calculation regarding the appearance of this logarithmic term in the kinetic energy). Since the regularization of the term $\Delta_{\chi}$ is induced by the same regularization applied to the functional Laplacian on the orbit space through the Bochner-Lichnerowicz type analysis, the singular term in $\frac{\hbar^{2}}{2}\Delta_{\chi}$ diverges logarithmically in the cut-off scale $\chi$ in a similar fashion as that of $E^{*}_{\chi}-E^{0}_{\chi}$. This is precisely seen from the renormalization of the operator $\widehat{\widehat{H}}=\widehat{H}-E^{0}$ in \ref{eq:renormalization}. In other words, one would have through the renormalization \ref{eq:renormalization}
\begin{eqnarray}
E^{*}_{\chi}-E^{0}_{\chi}=[\Delta E]_{\text{independent of}~\chi}+\frac{3C_{2}(G)g^{2}_{YM}\ln\chi|x_{0}|}{16\pi^{3}}    
\end{eqnarray}
and we shall see in section \ref{explicit} that for $2+1$ dimensional case considered here 
\begin{eqnarray}
 \frac{\hbar^{2} \Delta_{\chi}}{2}=\frac{\hbar^{2}[\Delta]_{\text{independent of}~\chi}}{2}+\frac{3C_{2}(G)g^{2}_{YM}\ln\chi|x_{0}|}{16\pi^{3}}.       
\end{eqnarray}
Therefore, one could make sense of the inequality \ref{eq:inter} as $\chi\to \infty$ as 
\begin{eqnarray}
 [\Delta E]_{\text{independent of}~\chi}\geq \frac{\hbar^{2}[\Delta]_{\text{independent of}~\chi}}{2}.   
\end{eqnarray}
This is what I meant by the appearance of the uniform logarithmic singularity at both sides of \ref{eq:inter}. For convenience, we can drop the subscripts to write 
\begin{eqnarray}
 \Delta E \geq \frac{\hbar^{2}\Delta}{2}.   
\end{eqnarray}
This concludes the proof of the theorem for the $2+1$ dimensional case. The case of $3+1$ dimensional Yang-Mills theory is complicated because the coupling constant $g^{2}_{YM}$ runs with the energy scale. I discuss this in section \ref{3+1}. At present, I am not able to give a solid basis for a proof of the 3+1 dimensional case but rather a heuristic argument in section 6. I would like to investigate this $3+1$ dimensional case in the future.  


\begin{remark}
Note that the term $\int_{\mathfrak{K}}\Theta^{LM}(y,z) \nonumber\int_{\mathfrak{K}}\left(\mathfrak{G}_{\delta_{\chi}}^{A^{M}_{k}(z)A^{L}_{k}(y)}\mathfrak{R}_{A^{M}_{k}(z)A^{N}_{n}(x^{''})A^{L}_{k}(y)A^{P}_{i}(x)}\frac{\mathfrak{D}\varphi^{\dag}_{\epsilon}}{\mathfrak{D}A^{N}_{n}(x^{''})}\frac{\mathfrak{D}\varphi_{\epsilon}}{\mathfrak{D}A^{P}_{i}(x)}\right)\\\nonumber 
+\frac{2}{\hbar}\mathfrak{G}^{A^{P}_{I}(x)A^{Q}_{J}(x^{'})}\mathfrak{G}^{A^{M}_{K}(y)A^{N}_{L}(y^{'})}\frac{\mathfrak{D}}{\mathfrak{D}A^{P}_{I}(x)}\frac{\mathfrak{D}S}{\mathfrak{D}A^{M}_{K}(y)}\frac{\mathfrak{D}\varphi^{\dag}_{\epsilon}}{\mathfrak{D}A^{Q}_{I}(x^{'})}\frac{\mathfrak{D}\varphi_{\epsilon}}{\mathfrak{D}A^{N}_{L}(y^{'})}$ is nothing but the regularized \textit{Bakry-Emery} Ricci curvature of the configuration space $\mathcal{A}/\widehat{\mathcal{G}}$.
\end{remark}

\begin{remark}
Notice that Ricci curvature always requires regularization indicating its certain ``quantum" nature. In the perturbation theory calculations, one can show that Ricci curvature shows up in the expression of the loop amplitudes. In fact, a natural conjecture would be that the re-normalization group flow of the metric on the moduli space is a type of infinite dimensional Ricci flow.
\end{remark}

\noindent Notice an important point: The regularized Ricci curvature together with the Hessian of the functional $S$ constitute the so-called functional \textit{Bakry-Emery} Ricci tensor. In the finite-dimensional setting, it appears as the ordinary \textit{Bakry-Emery} Ricci tensor. I note studies of this Ricci tensor that naturally appear in the study of weighted manifolds by \cite{lott2003some, baldauf2022spinors}. Our setting could formally be an infinite dimensional version of a weighted manifold of the type $(\mathcal{A}/\widehat{\mathcal{G}},\mathfrak{G},e^{-2S/\hbar})$. This micro-local or Euclidean signature semi-classical technique can be used to study the quantum mechanical systems satisfying suitable conditions (see \cite{moncrief2012modified} for the study of the nonlinear anharmonic oscillators).

\section{Mass of Elementary bosonic particles through the spectrum of the Bakry-Emery Ricci curvature of the weighted true configuration space: explicit example}
\label{bosonic}
\noindent To motivate the use of Bakry-Emery Ricci curvature of the \textit{true} configuration space of the current case of Yang-Mills theory, let us first review some elementary examples. Recall the Free mass-less and massive scalar field theory on the $3+1$ dimensional Minkowski space for which the exact ground state is available. The classical action reads $I[\xi]=-\frac{1}{2}\int_{\mathbb{R}^{1+n}}\eta^{\mu\nu}(\partial_{\mu}\xi\partial_{\nu}\xi+m^{2}\xi^{2}),~~\xi:\mathbb{R}^{1+3}\to \mathbb{R}~$ which may be explicitly written as 
\begin{eqnarray}
I[\xi]=\int_{\mathbb{R}}\left(\frac{1}{2}\int_{\mathbb{R}^{3}\times \mathbb{R}^{3}}\delta(x-y)\partial_{t}\xi(x)\partial_{t}\xi(y)-\frac{1}{2}\int_{\mathbb{R}^{3}}(\eta^{ij}\partial_{i}\xi\partial_{j}\xi+m^{2}\xi^{2})\right),
\end{eqnarray}
$m$ denoting the mass.
If I denote the configuration space by $\mathfrak{M}_{\xi}$, then the kinetic term induces a flat Riemannian metric (in local coordinates $\xi$)
\begin{eqnarray}
\mathcal{M}_{\xi(x)\xi(y)}=\delta(x-y)
\end{eqnarray}
on $\mathfrak{M}_{\xi}$. The classical energy $E(k)$ has the following expression in terms of the mass  and $3-$momentum $k$
\begin{eqnarray}
E(k)=\sqrt{k^{2}+m^{2}}
\end{eqnarray}
i.e, $E(k)\geq m$. In the quantum version, the mass appears as a parameter of the irreducible representation of the Poincare group $SO(1,3)\ltimes \mathbb{R}^{1+3}$ the isometry group of the Minkowski space $\mathbb{R}^{1+3}$. In quantum field theory, this representation defines a one-particle Hilbert space $\mathcal{H}_{m}$ for a particular particle in the full spectrum of the particles. The full Hilbert space has the direct sum structure 
\begin{eqnarray}
\mathcal{H}=\mathbb{C}\oplus\left(\sum_{I}\oplus\mathcal{H}_{m_{I}}\right)\oplus m.p.s,
\end{eqnarray}
where $m.p.s$ denotes spaces of multi-particle states that are tensor products of one particle spaces. $\mathbb{C}$ corresponds to the ground state (vacuum) and has zero energy. Then there is a positive continuous spectrum starting from $\min_{I}(m_{I})=m$ and extending to infinity of the formal Hamiltonian (normal ordered and regularized) of the theory 
\begin{eqnarray}
\int_{\mathbb{R}^{3}}\left(-\int_{\mathbb{R}^{3}}\frac{\hbar^{2}}{2}\frac{\delta^{2}}{\delta\xi(x)\delta\xi(x)}\right)+\frac{1}{2}\int_{\mathbb{R}^{3}}\eta^{ij}\partial_{i}\xi\partial_{j}\xi+m^{2}\xi^{2}).
\end{eqnarray}
According to our calculations, the spectral gap i.e., the least mass $m$ is supposed to be obtainable from the Bakry-Emery Ricci curvature associated with the infinite-dimensional weighted Riemannian manifold $(\mathfrak{M}_{\xi},\mathcal{M},e^{-2S[\xi]/\hbar})$, where $S[\xi]$ is explicitly given as 
\begin{eqnarray}
S[\xi]=\frac{1}{2}\int_{k}\xi(k)\sqrt{k^{2}+m^{2}}\xi(-k)d^{3}k.
\end{eqnarray}
Now since the metric $\mathcal{M}_{\xi}$ is flat, the Bakry-Emery curvature consists of only the Hessian part of the $S$ functional. An explicit calculation for the Bakry-Emery quadratic form in this particular case yields 
\begin{eqnarray}
\mathfrak{R}icci_{B.E}(\frac{\mathfrak{D}\varphi^{\dag}_{\epsilon}}{\mathfrak{D}\xi},\frac{\mathfrak{D}\varphi^{\dag}_{\epsilon}}{\mathfrak{D}\xi})\\\nonumber
:=\mathfrak{R}icci(\frac{\mathfrak{D}\varphi^{\dag}_{\epsilon}}{\mathfrak{D}\xi},\frac{\mathfrak{D}\varphi^{\dag}_{\epsilon}}{\mathfrak{D}\xi})+\frac{2}{\hbar}\int_{\mathfrak{K}\times \mathfrak{K}}\mathcal{M}^{\xi(x)\xi(x^{'})}\mathcal{M}^{\xi(y)\xi(y^{'})}\frac{\mathfrak{D}}{\mathfrak{D}\xi(x)}\frac{\mathfrak{D}S[\xi]}{\mathfrak{D}\xi(y)}\frac{\mathfrak{D}\varphi^{\dag}_{\epsilon}}{\mathfrak{D}\xi(x^{'})}\frac{\mathfrak{D}\varphi_{\epsilon}}{\mathfrak{D}\xi(y^{'})}\\\nonumber 
=0+\frac{2}{\hbar}\int_{\mathfrak{K}\times \mathfrak{K}}\mathcal{M}^{\xi(x)\xi(x^{'})}\mathcal{M}^{\xi(y)\xi(y^{'})}\frac{\mathfrak{D}}{\mathfrak{D}\xi(x)}\frac{\mathfrak{D}S[\xi]}{\mathfrak{D}\xi(y)}\frac{\mathfrak{D}\varphi^{\dag}_{\epsilon}}{\mathfrak{D}\xi(x^{'})}\frac{\mathfrak{D}\varphi_{\epsilon}}{\mathfrak{D}\xi(y^{'})}\\\nonumber 
\geq \frac{2m}{\hbar}\int_{\mathbb{R}^{n}\times \mathbb{R}^{n}}\mathcal{M}^{\xi(x)\xi(y)}\frac{\mathfrak{D}\varphi_{\epsilon}[\xi]^{\dag}}{\mathfrak{D}\xi(x)}\frac{\mathfrak{D}\varphi_{\epsilon}[\xi]}{\mathfrak{D}\xi(y)},
\end{eqnarray}
or the energy gap $E^{*}-E^{0}\geq m\hbar$.
Notice that there is also a potential contribution in terms of the $3-$ momentum $k$ indicating a continuous spectrum starting from $m$ (i.e., the potential factor does not add a positive contribution). Therefore, the lowest (positive if exists) eigenvalue of the Bakry-Emery curvature of the weighted configuration space $(\mathfrak{M}_{\xi},\mathcal{M},e^{-2S[\xi]/\hbar})$ yields the mass gap or the lowest mass of the elementary particles. Since the configuration space is flat with respect to the induced metric (by the kinetic term), the mass gap is $m$ which is exactly what is expected. For a mass-less field, one would of course obtain a continuous spectrum starting from $0$.

\noindent Now consider the Maxwell theory on $3+1$ dimensional Minkowski space. This is of course a special case of the Yang-Mills case when the structure constants vanish. Therefore the configuration space metric is flat 
\begin{eqnarray}
\label{eq:maxwellorbit}
\mathcal{S}_{A_{I}(x)A_{J}(y)}=\delta(x-y)\delta_{IJ}
\end{eqnarray}
and the vacuum wave functional is exactly calculable i.e., the $S[A]$ functional reads 
\begin{eqnarray}
\label{eq:maxwell}
S[A]=\frac{1}{2}\int_{k}\frac{1}{|k|}(\overrightarrow{k}\times \overrightarrow{A}(\overrightarrow{k}))(\overrightarrow{k}\times \overrightarrow{A}(-\overrightarrow{k})).
\end{eqnarray}
An explicit calculation yields 
\begin{eqnarray}
\label{eq:U1positive}
\mathcal{R}icci_{B.E}(\frac{\mathfrak{D}\varphi^{\dag}_{\epsilon}}{\mathfrak{D}A_{I}},\frac{\mathfrak{D}\varphi^{\dag}_{\epsilon}}{\mathfrak{D}A_{J}})\geq 0.
\end{eqnarray}
i.e., the mass gap $E^{*}-E^{0}\geq 0$. Once again there is a $3-$momentum factor that only indicates a continuous spectrum starting from zero.
\noindent In other words, the Bakry-Emery correction term to the Ricci (i.e., the Hessian term) encodes the classical and potential contribution to the mass gap while \textit{pure Ricci is solely a quantum effect} since it contains divergence and needs to be regularized when non-zero \footnote{The quantum loop divergences are essentially arises due to the tracing of non-trace class operators. As it happens, on the current context, Riemann curvature of the configuration space is one such operator which is not of trace class. }. As it happens, in the non-abelian pure Yang-Mills theory, the regularized Ricci curvature admits a positive lower bound yielding a \textit{quantum} mass gap while the potential contribution (classical mass contribution is zero since the Yang-Mills action does not include a mass term classically and one such term can not be introduced due to gauge invariance) is expected to contribute by a non-negative continuous factor.
 
 \section{Explicit calculations for the gap in $2$ and $3$ dimensions for Yang-Mills theory}
 \label{explicit}
 \noindent Since the Ricci curvature appears in a regularized way, I can explicitly compute it and later take the limit $\chi\to\infty$. However, doing so would inevitably introduce infinities (an ultraviolet divergence; since $\chi$ has a dimension of inverse length). The regular value of the Ricci curvature is then obtained by subtracting these infinities as discussed in the proof section of the main theorem \ref{main}. Recall at the level of perturbation theory, one would remove the infinities starting at the level of the action by adding counter terms such that those counter terms generate the exact infinities (at the loop level) with opposite signs and therefore a cancellation occurs (in the process one obtains scaling differential equations for the coupling constants). Renormalizibility of the Yang-Mills theory (\cite{veltman1972regularization}) suggests that one only requires a finite number of counter terms to cancel out the infinities. I expect that a similar procedure of adding counter terms is to be carried out in the current context. However at this point the ideas of renormalization in the geometric settings is premature and therefore I do not discuss this. Ideas from lattice gauge theory \cite{balaban1988convergent} may become useful to this end.    
\begin{lemma}
The formal Ricci curvature satisfies the following expression at the flat connection $\widehat{A}=0$ in terms of the cut-off parameter $\chi$ in $2$ and $3$ spatial dimensions 
\begin{eqnarray}
\mathfrak{R}ic_{\chi}(\alpha,\alpha)&=&-\frac{3}{4\pi}\delta^{BP}f^{ABC}f^{APQ}\int_{x,x^{'}}\left(\gamma+\ln\chi |x_{0}|\right)\alpha^{C}(x)\alpha^{Q}(x^{'})d^{2}xd^{2}x^{'}\nonumber
\\
\label{eq:twodim}
&=&\frac{3C_{2}(G)g^{2}_{YM}}{16\pi^{3}}\int_{x,x^{'}}\left(\frac{\gamma}{2}+\ln\chi |x_{0}|\right)\alpha^{P}(x)\alpha^{P}(x^{'})d^{2}xd^{2}x^{'}, n=2,\\
\label{eq:riccidimension}
\mathfrak{R}ic_{\chi}(\alpha,\alpha)&=&\frac{3\chi C_{2}(G)g^{2}_{YM}}{2\pi^{3}}\int_{x,x^{'}}\alpha^{P}(x)\alpha^{P}(x^{'})d^{3}xd^{3}x^{'},~~n=3,
\end{eqnarray}
where $C_{2}(G)$ is the Casimir invariant for the adjoint representation of the compact gauge group $G=SU(N)$. Here $|x_{0}|$ is a reference constant with dimension of length.
\end{lemma}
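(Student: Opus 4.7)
The plan is to combine the unregularised Ricci expression of Lemma 3,
\[
\mathcal{R}ic(X,Y)=3\bigl(f^{VPR}X^R_i(x)\tr\Delta^{-1}(x,x')f^{VQU}Y^U_j(x')\bigr),
\]
with the point-splitting prescription of (\ref{eq:Dirac}) and reduce everything to two explicit Gaussian integrals. At $\widehat{A}=0$ the inverse configuration space metric is the flat kernel $\delta_{IJ}\delta^{PQ}\delta(x-x')$, so the formal $\tr$ amounts to contracting the deleted indices against this kernel; the only ill-defined object produced is the coincident value $\Delta^{-1}(x,x)$. Replacing the $\delta(x-x')$ that appears in this contraction by the point-split $\delta_{\chi}(x,x')$ promotes the Ricci quadratic form into a finite bilinear expression in which every factor is well defined.

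The next step is purely algebraic: apply the adjoint Casimir identity
\[
\sum_{V,P}f^{VRP}f^{VQP}=g_{YM}^{2}\,C_{2}(G)\,\delta^{RQ}
\]
(the Yang--Mills coupling being hidden inside the structure constants, as declared in Section 2) to collapse the four free structure-constant indices into a single $C_{2}(G)g_{YM}^{2}\delta^{RQ}$ prefactor. What then survives is the scalar geometric kernel
\[
\mathcal{I}_n(\chi):=\int_{\mathbb{R}^n}\delta_{\chi}(x,x')\,\Delta^{-1}(x,x')\,d^{n}x',
\]
which, by translation invariance of both $\delta_{\chi}$ and the flat Green's function, is a constant depending only on $\chi$ (and, for $n=2$, on the logarithmic reference length).

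I would then evaluate $\mathcal{I}_n(\chi)$ for $n=2,3$ by substituting the closed-form Green's functions $\Delta^{-1}(x,x')=\tfrac{1}{2}\ln(|x-x'|/a)$ and $\Delta^{-1}(x,x')=-\tfrac{1}{4\pi|x-x'|}$ recorded in Lemma 2, passing to spherical coordinates centred at $x$, and rescaling $s=\chi r$. In three dimensions only the elementary moment $\int_{0}^{\infty}s\,e^{-s^{2}}ds=\tfrac{1}{2}$ intervenes, producing a result linear in $\chi$. In two dimensions the rescaling generates both the logarithmic moment $\int_{0}^{\infty}s\,e^{-s^{2}}\ln s\,ds=-\gamma/4$, which supplies the Euler constant, and an extra $\ln\chi$ contribution; the fixed length $|x_{0}|$ enters through the argument of the logarithm in $\Delta^{-1}$. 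Multiplying by the prefactor $3C_{2}(G)g_{YM}^{2}$ and re-expressing the result as a bilinear form on $\alpha^{P}$ reproduces (\ref{eq:twodim}) and (\ref{eq:riccidimension}).

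The main obstacle is not analytic but bookkeeping: the non-unit normalisation of $\delta_{\chi}$ in (\ref{eq:Dirac}), the non-standard constant prefactors in the two-dimensional Green's function, and the matching of the arbitrary scale $a$ with the reference length $|x_{0}|$ must be tracked consistently, so that the divergent and finite pieces recombine into the precisely claimed combinations $\gamma+\ln\chi|x_{0}|$ in dimension two and $\chi/(2\pi^{3})$ in dimension three. Beyond this accounting, no deep difficulty arises, since at $\widehat{A}=0$ everything ultimately reduces to one-dimensional moments of Gaussians.
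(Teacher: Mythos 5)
Your proposal follows essentially the same route as the paper: at $\widehat{A}=0$ the regularized Ricci form reduces to the point-split coincident limit of the flat Green's function, evaluated through exactly the radial Gaussian moments you list ($\int_{0}^{\infty}s\,e^{-s^{2}}ds$ and the logarithmic moment supplying $\gamma$ and $\ln\chi$), with the structure constants collapsed by the adjoint Casimir identity. The only caveat is a sign-convention detail: in the paper's convention (the factor $i$ absorbed into the structure constants, coupling hidden inside them) the identity reads $f^{ABC}f^{ABQ}=-C_{2}(G)g^{2}_{YM}\delta^{CQ}$, and it is this minus sign, combined with the sign of $\Delta^{-1}$, that produces the stated positive expressions.
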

\begin{proof}
At the flat connection $\widehat{A}=0$, the operator $\Delta^{-1}_{\widehat{A}}$ reduces to the ordinary inverse Laplacian $\Delta^{-1}$ on $\mathbb{R}^{n}$. Now recalling $\Delta^{-1}(x,x^{'})=\frac{1}{2}\ln|\frac{|x-x^{'}|}{|x_{0}|}|$ for $n=2$, $\Delta^{-1}(x,x^{'})=-\frac{1}{4\pi}\frac{1}{|x-x^{'}|}$ for $n=3$, I write the coincident limit by means of the point-splitting delta distribution $\delta_{\chi}$ as appears in the mass gap integral of lemma (3.1) to yield 
\begin{eqnarray}
\mathfrak{R}ic_{\chi}(\alpha,\alpha) =\frac{3\chi^{2}}{8\pi^{3}}\delta^{BP}f^{ABC}f^{APQ}\int_{x,x^{'}}\alpha^{C}(x)\alpha^{Q}(x^{'})\\\nonumber \left(\int_{0}^{\infty}r\ln(r/|x_{0}|)e^{-\chi^{2}r^{2}}dr\right) d^{2}x,~n=2\\
\mathfrak{R}ic_{\chi}(\alpha,\alpha)=-\frac{12\chi^{3}}{\pi^{2}}\delta^{BP}f^{ABC}f^{APQ}\int_{x,x^{'}}\alpha^{C}(x)\alpha^{Q}(x^{'})\\\nonumber \left(\int_{0}^{\infty}re^{-\chi^{2}r^{2}}dr\right)d^{3}x,~n=3.
\end{eqnarray}
Explicit integration and recalling $f^{ABC}f^{ABQ}=-C_{2}(G)g^{2}_{YM}\delta^{CQ}$ (note that our definition of the commutator is $[X,Y]^{A}=f^{ABC}X^{B}Y^{C}$ i.e., a factor $i=\sqrt{-1}$ is absorbed in the structure constants and that $g_{YM}$ is the Yang-Mills coupling constant), I obtain the result. 
\end{proof}

\noindent In the previous section, I observed that at the flat connection $A=0$, the regularized Ricci tensor enjoys a positive definite property. In this particular case the elliptic operator that appears is simply the inverse Laplacian which made the explicit calculations possible. However, away from the flat connection $A=0$, one ought to regularize the trace of the inverse gauge-covariant Laplacian $\Delta^{-1}_{\widehat{A}}$. While the spectrum is still positive, it is difficult to perform explicit calculations. Nevertheless, I may still prove that the trace of the regularized operator has a strictly positive lower bound. Recall the identity 
\begin{eqnarray}
\lambda^{-s}=\frac{1}{\Gamma[s]}\int_{0}^{\infty}t^{s-1}e^{-t\lambda}dt,
\end{eqnarray}
where $\Gamma[s]$ is the gamma function $\int_{0}^{\infty}t^{s-1}e^{-t}dt$ that has discrete poles for negative $s$. The above formula is valid for any $\lambda\in \mathbb{C}$ with $Re(\lambda)>0$. Now recall the definition of the heat kernel associated with the positive elliptic operator $\Delta_{\widehat{A}}$
\begin{eqnarray}
e^{t\Delta_{\widehat{A}}}:=\int_{\text{Spec}(\Delta_{\widehat{A}})}e^{-t\lambda}dE_{\lambda},
\end{eqnarray}
where $E_{\lambda}$ is the spectral resolution of $\Delta_{\widehat{A}}|_{H^{2}}$ in $L^{2}$. I have the following proposition for a $L^{2}$ section of the bundle $\mathfrak{P}_{Ad,\mathfrak{g}}$ (assuming that the kernel of the gauge-covariant derivative $\hnabla$ is trivial which is the case for irreducible connections; irreducible connections are generic).
\begin{proposition}
The heat kernel $e^{t\Delta_{\widehat{A}}}$ is smoothing on $L^{2}(\mathbb{R}^{n})$, more precisely
\begin{eqnarray}
||e^{t\Delta_{\widehat{A}}}f||_{H^{2k}(\mathbb{R}^{n})}\lesssim (1+t^{-k})||f||_{L^{2}(\mathbb{R}^{n})}~~~~\forall k\in \mathbb{Z}^{+}.
\end{eqnarray}
\end{proposition}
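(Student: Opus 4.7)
The plan is to combine spectral functional calculus for $-\Delta_{\widehat{A}}$ with elliptic regularity. Under the irreducibility hypothesis, the kernel of $\widehat{\nabla}^{\widehat{A}}$ acting on $L^{2}$-sections of $\mathfrak{P}_{Ad,\mathfrak{g}}$ is trivial, so $-\Delta_{\widehat{A}} = -\eta^{ij}\widehat{\nabla}^{\widehat{A}}_{i}\widehat{\nabla}^{\widehat{A}}_{j}$ is a strictly positive self-adjoint operator on $L^{2}(\mathbb{R}^{n})$, and $e^{t\Delta_{\widehat{A}}}$ is a $C_{0}$-semigroup of contractions. The strategy is first to bound $(-\Delta_{\widehat{A}})^{k}e^{t\Delta_{\widehat{A}}}$ in $L^{2}\to L^{2}$ operator norm, and then to convert the resulting $L^{2}$ bound on $(-\Delta_{\widehat{A}})^{k}u$ into an $H^{2k}$ bound on $u$ by ellipticity.

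For the first step, the elementary spectral inequality $\sup_{\lambda\ge 0}\lambda^{k}e^{-t\lambda}=(k/e)^{k}t^{-k}$ upgrades via the spectral theorem to the operator bound
\begin{equation*}
\bigl\|(-\Delta_{\widehat{A}})^{k}\,e^{t\Delta_{\widehat{A}}}\bigr\|_{L^{2}\to L^{2}}\leq C_{k}\,t^{-k},
\end{equation*}
which, together with the contraction estimate $\|e^{t\Delta_{\widehat{A}}}\|_{L^{2}\to L^{2}}\leq 1$, controls every power $(-\Delta_{\widehat{A}})^{j}$ for $0\leq j\leq k$. For the second step, since $-\Delta_{\widehat{A}}$ is a second-order elliptic operator on $\mathbb{R}^{n}$ whose lower-order coefficients are built from the fixed background $\widehat{A}$, an iterated G{\aa}rding-type inequality yields the coercive bound
\begin{equation*}
\|u\|_{H^{2k}(\mathbb{R}^{n})}\lesssim \bigl\|(-\Delta_{\widehat{A}})^{k}u\bigr\|_{L^{2}(\mathbb{R}^{n})}+\|u\|_{L^{2}(\mathbb{R}^{n})},
\end{equation*}
with constant depending on $\widehat{A}$. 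Applying this with $u=e^{t\Delta_{\widehat{A}}}f$ and inserting the operator bounds from the first step gives $\|e^{t\Delta_{\widehat{A}}}f\|_{H^{2k}}\lesssim (1+t^{-k})\|f\|_{L^{2}}$, with the $1$ reflecting the contractivity for large $t$ and $t^{-k}$ reflecting the smoothing for small $t$.

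The main obstacle is establishing the elliptic coercive estimate uniformly on the non-compact $\mathbb{R}^{n}$ for a background $\widehat{A}$ of only finite Sobolev regularity. Expanding $(-\Delta_{\widehat{A}})^{k}$ generates lower-order terms schematically of the form $\widehat{A}^{\alpha}\otimes\partial^{\beta}\widehat{A}\otimes\partial^{\gamma}$, each of which must be absorbed by the principal $|\partial|^{2k}$ part via H\"older and Sobolev embeddings valid in the relevant dimensions $n=2,3$. This requires $\widehat{A}$ to lie in a Sobolev class high enough to dominate the multiplier terms; concretely, one commutes $\widehat{\nabla}^{\widehat{A}}$ through derivatives iteratively, using $[\widehat{\nabla}^{\widehat{A}}_{i},\widehat{\nabla}^{\widehat{A}}_{j}]\sim F[\widehat{A}]$ to convert commutators into lower-order terms involving the curvature. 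For connections in the Sobolev class employed earlier in the paper these estimates close, and a standard density argument then extends the conclusion to the full class of admissible $\widehat{A}$.
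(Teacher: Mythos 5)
Your spectral step is exactly the paper's entire argument: the paper bounds $\|(-\Delta_{\widehat{A}})^{I}e^{t\Delta_{\widehat{A}}}f\|_{L^{2}}\le \sup_{\lambda>0}\bigl(\lambda^{I}e^{-t\lambda}\bigr)\|f\|_{L^{2}}=(I/t)^{I}e^{-I}\|f\|_{L^{2}}$ via the spectral resolution and sums over $0\le I\le k$. The divergence is in what happens afterwards. The paper never passes through flat elliptic regularity: it \emph{defines} the gauge-invariant Sobolev norm of order $2k$ intrinsically by $\|h\|^{2}:=\sum_{I=0}^{k}\int_{\mathbb{R}^{n}}\langle h,(-\Delta_{\widehat{A}})^{I}h\rangle$, so the spectral estimate already \emph{is} the conclusion, with no hypotheses on $\widehat{A}$ beyond self-adjointness, strict positivity (irreducibility), and the statement then holds for every $k\in\mathbb{Z}^{+}$ with no further input. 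You instead read $H^{2k}(\mathbb{R}^{n})$ as the ordinary flat Sobolev space and interpose the iterated G{\aa}rding-type coercivity estimate $\|u\|_{H^{2k}}\lesssim \|(-\Delta_{\widehat{A}})^{k}u\|_{L^{2}}+\|u\|_{L^{2}}$. That is a legitimate, and in a sense stronger, formulation, but it is also the only place where genuine work is needed and you only sketch it: expanding $(-\Delta_{\widehat{A}})^{k}$ produces terms with up to $2k-2$ derivatives falling on $\widehat{A}$ and $F[\widehat{A}]$, so for a background of fixed finite regularity $H^{s}$ the absorption closes only for $k$ bounded in terms of $s$, not for all $k\in\mathbb{Z}^{+}$ as the proposition asserts, and the implied constant depends on $\widehat{A}$. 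In short, your route buys an estimate in the familiar flat norm at the price of an unproved, regularity-hungry coercive inequality, while the paper's route buys the full statement for all $k$ essentially for free, at the price of measuring smoothness in the covariant Sobolev scale generated by $\Delta_{\widehat{A}}$ itself.
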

\begin{proof}
For a section of the bundle $\mathfrak{P}_{AD,\mathfrak{g}}$, the natural gauge invariant Sobolev norm of order $2k$ is defined by means of the positive operator $(-\Delta_{\widehat{A}})^{k}$ i.e., for a compactly supported section $h$ of the bundle $\mathfrak{P}_{Ad,\mathfrak{g}}$, 
\begin{eqnarray}
||h||^{2}_{H^{k}}:=\sum_{I=0}^{k}\int_{\mathbb{R}^{n}}\langle h,(-\Delta_{\widehat{A}})^{I}h\rangle.
\end{eqnarray}
Now 
\begin{eqnarray}
||(-\Delta_{\widehat{A}})^{I}e^{t\Delta_{\widehat{A}}}f||_{L^{2}}=\left(\int_{0}^{\infty}(\lambda^{I}e^{-t\lambda})^{2}d||E_{\lambda}f||^{2}_{L^{2}}\right)^{\frac{1}{2}}
\leq \sup_{\lambda\in (0,\infty)}(\lambda^{I}e^{-t\lambda})||f||_{L^{2}}\leq (I/t)^{I}e^{-I}||f||_{L^{2}}\nonumber
\end{eqnarray}
and therefore 
\begin{eqnarray}
||e^{t\Delta_{\widehat{A}}}f||_{H^{2k}}\leq C\left(1+\sum_{I=0}^{k}(\frac{I}{t})^{I}e^{-I}\right)||f||_{L^{2}}\lesssim (1+t^{-k})||f||_{L^{2}}.
\end{eqnarray}
\end{proof}
Using this heat kernel, I may therefore formally write the following
\begin{eqnarray}
(-\Delta_{\widehat{A}})^{-s}(x,x^{'})f(x^{'}):=\int_{Spec(\Delta_{\widehat{A}})}\lambda^{-s}dE_{\lambda}f(x)=\frac{1}{\Gamma[s]}\int_{Spec(\Delta_{\widehat{A}})}\int_{0}^{\infty}t^{s-1}e^{-t\lambda}dtdE_{\lambda}f(x)\\\nonumber 
=\frac{1}{\Gamma[s]}\int_{0}^{\infty}t^{s-1}\left(\int_{Spec(-\Delta_{\widehat{A}})}e^{-t\lambda}dE_{\lambda}f(x)\right)dt=\frac{1}{\Gamma[s]}\int_{0}^{\infty}t^{s-1} \left(e^{t\Delta_{\widehat{A}}}(x,x^{'})f(x^{'})\right)dt
\end{eqnarray}
where I have used the boundedness of the inner integral $\int_{0}^{\infty}t^{s-1}e^{-t\lambda}dt$ for $\lambda>0$ to interchange the order of the integrals. This integral can have the problem of producing infinities near $t=0$ and $t=\infty$. The later happens if the $\text{Spec}(-\Delta_{\widehat{A}})$ contains zero or negative numbers. This is the so called \textit{infrared} divergence issue while divergence at $t=0$ is essentially the \textit{ultraviolet} divergence issue.
Denoting $e^{t\Delta_{\widehat{A}}}(x,y)$ as $K^{\widehat{A}}(t;x,y)$ the previous expression may also be expressed as follows
\begin{eqnarray}
(-\Delta_{\widehat{A}})^{-s}(x,x^{'})f(x^{'})=\frac{1}{\Gamma[s]}\int_{0}^{\infty}t^{s-1} \left(K^{\widehat{A}}(x,x^{'})f(x^{'})\right)dt.
\end{eqnarray}
Now let us write down a formal power series expansion of $K^{\widehat{A}}(t;x,y)$ as $t\to0$
\begin{eqnarray}
K^{\widehat{A}}(t;x,y)=K(t;x,y)(1+ta_{1}(x,y)+t^{2}a_{2}(x,y)+\cdot\cdot\cdot\cdot),
\end{eqnarray}
 where $K(t;x,y)=\frac{e^{-|x-y|^{2}/4t}}{(4\pi t)^{n/2}}$ is the usual heat Kernel on $\mathbb{R}^{n}$. The coincident limits $\{a_{k}(x,x)\}$ are local invariants (invariant polynomials of curvature) given in terms of the curvature of the connection $\widehat{A}$. On $\mathbb{R}^{n}$ equipped with the flat metric, one may find through explicit calculations that $a_{1}(x,x)=0$, $a_{2}(x,x)=\frac{11}{96}F^{P}[\widehat{A}]_{ij}F^{P}[\widehat{A}]^{ij}$ (see \cite{vassilevich2003heat} for a detailed computation) i.e, 
 \begin{eqnarray}
 K^{\widehat{A}}(t;x,x)=K(t;x,x)\left(1+\frac{11t^{2}}{96}F^{P}[\widehat{A}]_{ij}F^{P}[\widehat{A}]^{ij}+O(t^{3})\right).
 \end{eqnarray}
Setting $\Lambda$ to be a small but fixed positive number, I write the trace integral as follows 
 \begin{eqnarray}
 I_{\epsilon}=\int_{\mathbb{R}^{n}}\alpha(x)(-\Delta)^{-s}_{\widehat{A}}(x,x)\alpha(x)d^{n}x\\\nonumber 
 =\frac{1}{\Gamma[s]}\int_{\mathbb{R}^{n}}\alpha(x)\left(\int_{\epsilon}^{\Lambda}t^{s-1}K(t;x,x)\left(1+\frac{11t^{2}}{96}F[\widehat{A}]_{ij}F[\widehat{A}]^{ij}+O(t^{3})\right)\right.\\\nonumber 
 \left.+\int_{\Lambda}^{\infty}t^{s-1}K^{\widehat{A}}(t;x,x)dt\right)\alpha(x)d^{n}x.
 \end{eqnarray}
 I recover the original integral after taking the limit $\epsilon\to 0$ in a suitable way. 
 Note that the infrared divergence is absent since the spectrum of $\Delta_{\widehat{A}}$ does not contain zero or negative modes (generic connections are considered) yielding a finite positive contribution from the integral\\ $\frac{1}{\Gamma[s]}\int_{\mathbb{R}^{n}}\alpha(x)\left(\int_{\Lambda}^{\infty}t^{s-1}K^{\widehat{A}}(t;x,x)dt\right)\alpha(x)d^{n}x$. The problem of ultraviolet divergence occurs at the flat value necessarily since $K(t;x,x)=\frac{1}{(4\pi t)^{\frac{n}{2}}}$ and therefore $\int_{\epsilon}^{\Lambda}t^{s-1-\frac{n}{2}}dt$ yields a $\log\epsilon$ divergence for $n=2$ and $\epsilon^{-\frac{1}{2}}$ divergence for $n=3$ at $s=1$ as expected from the previous lemma concerning the trace of $\Delta^{-1}$. This is natural since $t\approx[length]^{2}$,  $\Delta^{-1}$ in $2$ and $3$ dim essentially behaves like $\ln[Length/Length_{0}]$ (for some arbitrary reference length $Length_{0}$) and $1/[Length]$, respectively. This ultraviolet divergence is then regularized by means of the  previous lemma 4.1. The following lemma yields an estimate for the finite part of the Ricci curvature away from the flat connection $\widehat{A}=0$
 \begin{lemma}
 \label{finite}
 The regularized Ricci quadratic form $ \int_{\mathbb{R}^{n}}\alpha(x)(-\Delta)^{-s}_{\widehat{A}}(x,x)\alpha(x)d^{n}x$ satisfies 
 \begin{eqnarray}
 F.P\left\{\int_{\mathbb{R}^{n}}\alpha(x)(-\Delta)^{-1}_{\widehat{A}}(x,x)\alpha(x)d^{n}x\right\}> F.P\left\{\left(\int_{\mathbb{R}^{n}}\alpha(x)(-\Delta)^{-1}_{\widehat{A}}(x,x)\alpha(x)d^{n}x\right)_{\widehat{A}=0}\right\}\nonumber+O(\Lambda^{4-\frac{n}{2}}),
 \end{eqnarray}
 $n=2,3$, where $F.P$ denotes the finite part.
 \end{lemma}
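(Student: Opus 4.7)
The plan is to use the heat-kernel representation of $(-\Delta_{\widehat{A}})^{-1}$ assembled in the paragraphs preceding the lemma, namely $(-\Delta_{\widehat{A}})^{-1}(x,x)=\int_{0}^{\infty}K^{\widehat{A}}(t;x,x)\,dt$, together with the small-time Seeley--DeWitt expansion $K^{\widehat{A}}(t;x,x)=(4\pi t)^{-n/2}\bigl(1+\tfrac{t^{2}}{12}F^{P}_{ij}F^{P\,ij}+O(t^{3})\bigr)$ recalled just before the statement. I would split the $t$-integral at the fixed scale $\Lambda$, retain a running cutoff $\epsilon\to 0^{+}$ on the lower limit, and analyse the short-time piece $\int_{\epsilon}^{\Lambda}$ via this expansion while handling the long-time piece $\int_{\Lambda}^{\infty}$ through the smoothing estimate of Proposition 4.1.

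Substituting the expansion into the short-time integral at $s=1$, the leading $(4\pi t)^{-n/2}$ term produces exactly the ultraviolet divergence of the flat Laplacian computed in Lemma 4.1 (a $\log\epsilon$ for $n=2$ and an $\epsilon^{-1/2}$ for $n=3$); because this divergence is identical to the one at the flat reference, the finite-part prescription eliminates it from both quantities simultaneously. Since $a_{1}(x,x)=0$, the first genuinely new contribution comes from $a_{2}(x,x)=\tfrac{1}{12}F^{P}_{ij}F^{P\,ij}$, and a direct computation yields
\begin{eqnarray*}
\int_{0}^{\Lambda}t^{-n/2}\cdot\frac{t^{2}}{12(4\pi)^{n/2}}F^{P}_{ij}F^{P\,ij}\,dt=\frac{\Lambda^{3-n/2}}{(3-n/2)\cdot 12(4\pi)^{n/2}}\,F^{P}_{ij}F^{P\,ij},
\end{eqnarray*}
which is strictly positive for non-flat $\widehat{A}$ and vanishes at $\widehat{A}=0$. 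After integrating against $\alpha(x)\alpha(x)$ over $\mathbb{R}^{n}$, this term strictly increases the finite part over the flat value, while the higher Seeley--DeWitt coefficients $a_{k}(x,x)$ for $k\geq 3$ contribute terms of order $\int_{0}^{\Lambda}t^{k-n/2}\,dt\lesssim\Lambda^{k+1-n/2}$, the smallest of which is $O(\Lambda^{4-n/2})$, supplying the stated remainder.

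The long-time piece $\int_{\Lambda}^{\infty}t^{s-1}K^{\widehat{A}}(t;x,x)\,dt$ is finite at generic irreducible $\widehat{A}$ by Proposition 4.1 and, since its integrand is bounded by $e^{-\Lambda\lambda_{1}(\widehat{A})}$ times the $L^{2}$ norm of the source, contributes at subleading order in $\Lambda$. The main obstacle is showing that the variation of this tail with $\widehat{A}$ (as against the corresponding flat tail, which in the $n=2$ case carries an IR divergence that must be regularized by the reference length $|x_{0}|$ used in Lemma 4.1) is absorbed into the $O(\Lambda^{4-n/2})$ remainder. This requires a spectral-gap/decay estimate for the heat semigroup that is uniform across a generic class of irreducible connections, and this step --- rather than the Seeley--DeWitt computation itself --- is the delicate part of the argument.
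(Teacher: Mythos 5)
Your proposal follows essentially the same route as the paper's proof: insert the Seeley--DeWitt expansion of $K^{\widehat{A}}(t;x,x)$ into the short-time part of the heat-kernel representation at $s=1$, observe that the $t\to 0$ divergence comes only from the flat leading term and cancels in the finite part, and conclude from the strictly positive $a_{2}=\tfrac{1}{12}F_{ij}F^{ij}$ contribution of order $\Lambda^{3-\frac{n}{2}}$, with the $k\geq 3$ coefficients absorbed into $O(\Lambda^{4-\frac{n}{2}})$. You are in fact somewhat more careful than the paper, which keeps only the strict positivity of the $a_{2}$ term and does not discuss the comparison of the long-time tails at $\widehat{A}$ versus $\widehat{A}=0$ (the $n=2$ infrared issue you flag), so your extra caveat identifies a point the paper glosses over rather than a defect in your own argument.
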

\begin{proof}
Note the fact that $t^{s-1+k-\frac{n}{2}}$ is integrable at zero with $s=1$ for $k>\frac{n}{2}-1$. Therefore recalling the divergences that occur near $t=0$, I obtain 
\begin{eqnarray}
F.P\left\{\int_{\mathbb{R}^{n}}\alpha(x)(-\Delta)^{-1}_{\widehat{A}}(x,x)\alpha(x)d^{n}x\right\}\geq F.P\left\{\left(\int_{\mathbb{R}^{n}}\alpha(x)(-\Delta)^{-1}_{\widehat{A}}(x,x)\alpha(x)d^{n}x\right)_{\widehat{A}=0}\right\}\\\nonumber
+\frac{11\Lambda^{3-\frac{n}{2}}}{96(3-\frac{n}{2})}\int_{\mathbb{R}^{n}}\alpha(x)F[\widehat{A}(x)]_{ij}F[\widehat{A}(x)]^{ij}\alpha(x)d^{n}x+O(\Lambda^{4-\frac{n}{2}})\\\nonumber 
> F.P\left\{\left(\int_{\mathbb{R}^{n}}\alpha(x)(-\Delta)^{-1}_{\widehat{A}}(x,x)\alpha(x)d^{n}x\right)_{\widehat{A}=0}\right\}+O(\Lambda^{4-\frac{n}{2}}).
\end{eqnarray}
\end{proof}

\subsection{Fixing the subtraction scale $x_{0}$ for $2+1$ dimensional case}
\label{subtraction}
\noindent Notice that in the expression for the regularized Ricci curvature for $2+1$ Yang-Mills theory, one has a logarithmic divergence if the regulator $\chi$ is taken to the limit $\infty$ i.e., 
\begin{eqnarray}
 \mathfrak{R}ic_{\chi}(\alpha,\alpha)=\frac{3C_{2}(G)g^{2}_{YM}}{16\pi^{3}}\int_{x,x^{'}}\left(\frac{\gamma}{2}+\ln\chi |x_{0}|\right)\alpha^{P}(x)\alpha^{P}(x^{'})d^{2}xd^{2}x^{'}, n=2,   
\end{eqnarray}
However, due to dimensional reasons, one also encounters a length scale $x_{0}$. Then the following question arises: how to fix this length scale $x_{0}$ in the renormalization of the operator $\widehat{\widehat{H}}$ (\ref{eq:renormalization}) that will ultimately cancel the logarithmically divergent term appearing in the Ricci curvature. This is vital since $x_{0}$ can not be arbitrary. If it were to be arbitrary, then I can choose a new scale $y_{0}=e^{10}x_{0}$ leading the new value of the Ricci quadratic form to be 
\begin{eqnarray}
  \mathfrak{R}ic_{\chi}(\alpha,\alpha)=\frac{3C_{2}(G)g^{2}_{YM}}{16\pi^{3}}\int_{x,x^{'}}\left(\frac{\gamma}{2}+\ln\chi |y_{0}|-10\right)\alpha^{P}(x)\alpha^{P}(x^{'})d^{2}xd^{2}x^{'}   
\end{eqnarray}
which completely destroys the positivity of the finite part since $10>\frac{\gamma}{2}$. Therefore, we need to address how to fix the subtraction scale $x_{0}$. This is motivated by the study of the volume of the orbit space by \cite{karabali1998planar}. Remarkably in $2+1$ Yang-Mills theory, if one endows the orbit space of the theory with a metric that is induced by the Kinetic energy part of the classical action, then the volume of the orbit space turns out to be finite after appropriate regularization \cite{karabali1998planar}. Since my metric on the orbit space is also induced by the kinetic part of the action (represented in a different local chart than that of \cite{karabali1998planar}), it is natural to look at the volume element $\sqrt{\det(\mathfrak{G})}$. Naively this is infinite. In order to make sense of it we need to regularize it in an appropriate way. First, recall the following expression of the volume element associated with the metric $\mathfrak{G}$ in a local Coulomb chart around a reference connection $\widehat{A}$ (i.e., in a chart $\hnabla_{\widehat{A}}\cdot (A-\widehat{A})=0$, where $\hnabla_{\widehat{A}}$ is the gauge covariant derivative with respect to the connection $\widehat{A}$) as derived by \cite{babelon1979geometrical}         
\begin{eqnarray}
\label{eq:volume}
\sqrt{\det(\mathfrak{G})}= \frac{\Delta_{FP}}{[\det(\Delta_{A})\det(\Delta_{\widehat{A}})]^{\frac{1}{2}}},    
\end{eqnarray}
where $\Delta_{FP}$ is the Fadeev-Popov determinant explicit expressed as 
\begin{eqnarray}
 \Delta_{FP}=\det(\nabla_{\widehat{A}}~*\nabla_{A})   
\end{eqnarray}
where $\nabla_{\widehat{A}}~*\nabla_{A}$ is nothing but the mixed Laplacian with connections $A$ and $\widehat{A}$. The very first point to note here is that unlike $\Delta_{A}$ or $\Delta_{\widehat{A}}$, the mixed Laplacian $\nabla_{\widehat{A}}~*\nabla_{A}$ does not have a sign i.e., its eigenvalues could be negative. In order to tackle this issue, first we note the following proposition 
\begin{proposition}
\label{gribovagain}
 Assume $||A-\widehat{A}||\leq \epsilon$ for a sufficiently small $\epsilon>0$ and $||\cdot||$ denotes an appropriate norm (let's say a Sobolev norm $H^{s}$ for sufficiently large $s$). Then the spectra of $\nabla_{\widehat{A}}~*\nabla_{A}$ is strictly positive i.e., $\nabla_{\widehat{A}}~*\nabla_{A}$ is strongly elliptic.  
\end{proposition}
\begin{proof}
In order to prove this statement, we perform the following manipulations for a section $\kappa$ of the bundle $\mathfrak{P}_{Ad,\frak{g}}$
\begin{eqnarray}
-\nabla[\widehat{A}]_{i}\nabla[A]_{i}\kappa=-\nabla[A]_{i}\nabla[A]_{i}\kappa+[A-\widehat{A},\nabla[A]\kappa].    
\end{eqnarray}
Now notice that the first term is positive (in the spectral sense) for a reduced connection $A\in \mathcal{A}/\widehat{\mathcal{G}}$. Now for $||A-\widehat{A}||<\epsilon$ for sufficiently small $\epsilon>0$, the first term dominates the second indefinite term $[A-\widehat{A},\nabla[A]\kappa]$. This concludes the proof. This is once again nothing but the Gribov ambiguity i.e., we can at once only work in a small patch around a reference connection in the orbit space. Finally, we can glue together all such charts using partition of unity (care must be taken since we are in infinite dimensions) and a density argument to extend this result over the entire manifold $\mathcal{A}/\widehat{\mathcal{G}}$.    
\end{proof}
Following proposition (\ref{gribovagain}), we will work in a chart around $\widehat{A}$ defined by $||A-\widehat{A}||<\epsilon$ within which we have $\text{spectra}(\nabla_{\widehat{A}}~*\nabla_{A})>C>0$. Taking logarithm on both sides we write the equation (\ref{eq:volume}) formally as follows
\begin{eqnarray}
\log(\det(\mathfrak{G}))=2\log(\det(\nabla_{\widehat{A}}~*\nabla_{A}))-\log(\det(\Delta_{A}))-\log(\det(\Delta_{\widehat{A}})).  
\end{eqnarray}
Now we will employ $\zeta$ function and heat kernel technology \cite{moretti1999one} to evaluate the logarithms of the determinants of the elliptic operators. First, recall the following identity in the sense of spectral resolution for a strongly elliptic operator $P$ and its associated zeta function $\zeta(s)$ 
\begin{eqnarray}
\zeta(s):=\int_{\text{spectra}(P)}\lambda^{-s}dE_{\lambda}=\frac{1}{\Gamma(s)}\int_{0}^{\infty}t^{s-1}K(t)dt,   
\end{eqnarray}
where $K(t)$ is the trace of the heat kernel $K(x,y;t)$ associated with $P$ i.e., 
\begin{eqnarray}
 \frac{\partial K(x,y;t)}{\partial t}=-PK(x,y;t) 
\end{eqnarray}
and $K(x,y;t)\rightharpoonup \delta(x,y)$ as $t\to 0$. The trace $K(t)$ is defined as follows 
\begin{eqnarray}
   K(t):=\int_{x}K(x,x;t)d^{2}x. 
\end{eqnarray}
Formally the $\det(P)$ is obtainable through the following identity 
\begin{eqnarray}
  \log(\det(P)):=-\frac{d}{ds}\zeta(s)|_{s=0}.  
\end{eqnarray}
For large $t\gg 1$, $K(t)\sim e^{-\delta t},~\delta>0$ and therefore $\int_{\frac{1}{4}}^{\infty}t^{s-1}K(t)dt$ is convergent. Therefore we need to worry about the small $t$ domain. We can obtain asymptotics of the trace $K(t)$ for the three operators in our context in the small $t$ limit. To this end, we use the result from \cite{vassilevich2003heat} which computes the asymptotic expansion of the heat kernel for any gauge covariant Laplacian acting on sections of suitable bundles in terms of the bundle curvature and the geometry of the physical space. Since for us, the physical space is flat, the expression simplifies. Explicitly, we write for small $t$
\begin{eqnarray}
K^{\widehat{A}}(t)=\frac{1}{4\pi t}(\int_{x}d^{2}x+t\int_{x}a_{1}(x,x)dx+t^{2}\int_{x}a^{\widehat{A}}_{2}(x,x)dx+\cdot\cdot\cdot\cdot),\\
K^{A}(t)=\frac{1}{4\pi t}(\int_{x}d^{2}x+t\int_{x}a_{1}(x,x)dx+t^{2}\int_{x}a^{A}_{2}(x,x)dx+\cdot\cdot\cdot\cdot),\\
K^{A,\widehat{A}}(t)=\frac{1}{4\pi t}(\int_{x}d^{2}x+t\int_{x}a^{'}_{1}(x,x)dx+t^{2}\int_{x}a^{A,\widehat{A}}_{2}(x,x)dx+\cdot\cdot\cdot\cdot),
\end{eqnarray}
where $K^{\widehat{A}}(t), K^{A}(t),$ and $K^{A,\widehat{A}}(t)$ are the traced heat kernels associated to the operators $\Delta_{\widehat{A}},\Delta_{A},$ and $\nabla_{\widehat{A}}~*\nabla_{A}$, respectively. The first coefficient $a_{1}(x,x)$ only depends on the curvature of physical space and hence vanishes in our case.  $a^{'}_{1}$ consists of term proportional to $\tr(A(x)-\widehat{A} (x))$ in the mixed kernel $K^{A,\widehat{A}}(t)$ but this is zero if we consider $\mathfrak{g}=\mathfrak{su}(N)$. The $O(t^{2})$ terms in the brackets are gauge invariant terms quadratic in curvature and read 
\begin{eqnarray}
a^{\widehat{A}}_{2}(x,x)\sim \tr(F[\widehat{A}](x)\cdot F[\widehat{A}](x)),~a^{A}_{2}(x,x)\sim \tr(F[A](x)\cdot F[A](x)),\\
a^{A,\widehat{A}}_{2}(x,x)\sim \tr(F[\widehat{A}](x)\cdot F[\widehat{A}](x))+\tr(\hnabla[\widehat{A}](A-\widehat{A})(x)\cdot \hnabla[\widehat{A}](A-\widehat{A})(x)).
\end{eqnarray}
With these expressions, we explicitly compute the zeta functions and then analytically continue to $s=0$ after taking the derivative with respect to $s$
\begin{eqnarray}
 \zeta^{A}(s)&=&\frac{1}{\Gamma[s]}\int_{0}^{\infty}t^{s-1}K^{A}(t)dt\\\nonumber &=&\frac{1}{4\pi\Gamma[s]}\left(\int_{0}^{\frac{1}{4}}t^{s-2}\left(\int_{x}d^{2}x+Ct^{2}\int_{x}\tr(F[A](x)\cdot F[A](x))+O(t^{3})\right)\right)\\
 &&+  \frac{1}{\Gamma[s]}\int_{\frac{1}{4}}^{\infty}t^{s-1}K^{A}(t)dt\\\nonumber 
 &=&\frac{\int_{x}d^{2}x}{4\pi\Gamma[s]}\int_{0}^{\frac{1}{4}}t^{s-2}dt+\frac{(C^{'})^{s+1}}{4\pi\Gamma[s]}\int_{x}\tr(F[A](x)\cdot F[A](x))+\mathcal{O}(|F|^{3})\\\nonumber 
 &=&\frac{\int_{x}d^{2}x}{4\pi(s-1)\Gamma[s]}\frac{1}{4^{s-1}}+\frac{(C^{'})^{s+1}}{4\pi\Gamma[s]}\int_{x}\tr(F[A](x)\cdot F[A](x))+\mathcal{O}(|F|^{3})
\end{eqnarray}
where note that the first term has a pole at $s=1$. $C,C^{'}$ are numerical constants i.e., independent of $s$ and $A$. Now note 
\begin{eqnarray}
 \frac{1}{\Gamma(s)}=s+\gamma s^{2}+O(s^{3})   
\end{eqnarray}
near $s=0$ (where we would like to analytically continue). Therefore, the derivative reads 
\begin{eqnarray}
 \frac{d}{ds} \zeta^{A}(s)|_{s=0}=-\frac{\int_{x}d^{2}x}{\pi}+\frac{C^{'}}{4\pi}\int_{x}\tr(F[A](x)\cdot F[A](x))+\mathcal{O}(|F|^{3})
\end{eqnarray}
and therefore
\begin{eqnarray}
    \log(\det(\Delta_{A}))=-\frac{d}{ds} \zeta^{A}(s)|_{s=0}=\frac{\int_{x}d^{2}x}{\pi}-\frac{C^{'}}{4\pi}\int_{x}\tr(F[A](x)\cdot F[A](x))+\mathcal{O}(|F|^{3})
\end{eqnarray}

\noindent Now to make sense of the determinant $\det(\Delta_{A})$, I need a cut-off so that $\int_{x}d^{2}x$ is finite. This gives a large but finite subtraction scale $x_{0}$. This divergent term $\int_{x}d^{2}x$ ultimately cancels out in the expression for $\log(\det(\mathfrak{G}))$ leaving out finite terms involving potential energy.  

\begin{corollary}
\label{kinetic}
The finite parts of the Ricci curvature verify the following bounds in $2+1$ and $3+1$ dimensions
\begin{eqnarray}
\mathfrak{R}ic_{\text{finite}}(\alpha,\alpha)>\frac{3\gamma C_{2}(G)g^{2}_{YM}}{16\pi^{3}}\int_{x,x^{'}}\alpha^{P}(x)\alpha^{P}(x^{'})d^{2}xd^{2}x^{'},n=2,\\
\label{eq:riccidimension}
\mathfrak{R}ic_{\chi}(\alpha,\alpha)>\frac{3\chi C_{2}(G)g^{2}_{YM}}{2\pi^{3}}\int_{x,x^{'}}\alpha^{P}(x)\alpha^{P}(x^{'})d^{3}xd^{3}x^{'},~~n=3,
\end{eqnarray}
where in $3+1$ dimensions $\chi$ dependence still remains if one were to yield a finite result. 
\end{corollary}
\begin{proof}
Now $2+1$ dimensional Yang-Mills theory is super renormlizable and hence the coupling constant does not run. The finite part of the Ricci curvature does not depend on the cut-off $\chi$. The eigenvalue of the Laplace-Beltrami operator is simply numerical constant times $g^{2}_{YM}$ as expected. However, for $3+1$ dimensions, I keep a finite $\chi$ for dimensional reasons. At a flat connection, the finite Ricci curvature verifies 
\begin{eqnarray}
 \mathfrak{R}ic_{\text{finite}}(\alpha,\alpha)=\frac{3C_{2}(G)g^{2}_{YM}}{16\pi^{3}}\int_{x,x^{'}}\alpha^{P}(x)\alpha^{P}(x^{'})d^{2}xd^{2}x^{'},n=2,\\
\label{eq:riccidimension}
\mathfrak{R}ic_{\chi}(\alpha,\alpha)=\frac{3\chi C_{2}(G)g^{2}_{YM}}{2\pi^{3}}\int_{x,x^{'}}\alpha^{P}(x)\alpha^{P}(x^{'})d^{3}xd^{3}x^{'},~~n=3,   
\end{eqnarray}
Note that away from the flat connection, the finite part of the  Ricci quadratic form coming from the trace of $\widehat{\Delta}^{-1}$ is modified by a strictly positive entity at the leading order according to lemma \ref{finite}. In addition, the tracing operation away from the flat connection involves multiplication by the non-trivial part of the metric $f^{PUV}A^{U}_{i}(x)\Delta^{-1}_{A}(x,y)f^{VRQ}A^{R}_{j}(y)$ that yields an additional term that is strictly positive 
\begin{eqnarray}
3\int_{x,y}f^{PU_{1}V_{1}}A^{U_{1}}_{i}(x)\Delta^{-1}_{A}(x,y)f^{V_{1}R_{1}Q}A^{R_{1}}_{j}(y)(f^{VPR}\alpha^{R}_{i}(x)\Delta^{-1}_{A}(x,y)f^{VQU}\alpha^{U}_{j}(y))>0.   
\end{eqnarray}
since this is nothing but the product of sectional curvatures at the connection $A$ which is strictly positive from remark \ref{sectional}. Therefore the strict inequality in the Ricci curvature follows. 
\end{proof}

\noindent Even though the regularized Ricci curvature term produces a strictly positive contribution, the Hessian term ($\text{Hessian}(S)$ contracted with the gradient of the excited state functional $\varphi$ in equation \ref{eq:excited}; $II$ is the main theorem \ref{main}) can potentially be problematic in the sense that it can contribute by a negative factor and cancel out the positive contribution from $I$. However, due to Lorentz invariance, it is expected that ultimately this Hessian term contributes by a strictly positive factor as well. This is motivated by the work of \cite{karabali1998vacuum} on computing the ground state wave functional. More precisely the form of $S[A]$ functional for $2+1$ dimensional Yang-Mills theory is given in \cite{karabali1998vacuum} as follows 
\begin{eqnarray}
\label{eq:S}
S[A]=\frac{1}{2g^{2}_{YM}}\int_{\mathbb{R}^{2}\times \mathbb{R}^{2}}B^{a}(x)\frac{1}{m+\sqrt{m+\Delta}} B^{a}(y) d^{2}xd^{2}y,   
\end{eqnarray}
where $\Delta:=-\eta^{ij}\partial_{i}\partial_{j}$ is the Laplacian on $\mathbb{R}^{2}$, $m$ is a strictly positive number, and $B^{a}:=B^{a}[A]$ is the chromomagnetic field. This form of the $S[A]$ functional suggests that the Hessian of $S[A]$ should produce a strictly positive number (at least in a measure-theoretic sense; notice that Hessian of a gauge invariant functional can have an arbitrarily large index due to non-trivial topology of the orbit space but that is expected to happen on a measure zero set). Proving such a statement with the $S[A]$ functional \ref{eq:S} is a monumental task and we leave it for future research. In the next section, I present a heuristic geometric argument.

\subsection{Non-negativity of the term $II$ in the main theorem \ref{main}}
\noindent The geometric part (the term $I$ in the theorem \ref{main}) is strictly positive. However, the potential contribution (term $II$ in the theorem \ref{main}) can be negative (note that on a topologically non-trivial space, Hessian of a gauge invariant entity can have an arbitrary large index at a point, see \cite{cliff} for example for index estimates of Yang-Mills potential at the critical points). This issue needs further investigation. Notice that $S[A]$ is not arbitrary since $\Psi[A]:=N_{\hbar}e^{-S[A]/\hbar}$ verifies the Schrodinger equation. Here I present a heuristic argument as to how the term $II$ could contribute by a strictly positive number under some mild assumptions. I use the finite-dimensional notation to denote the covariant derivatives on the orbit space for convenience i.e., $\frac{\mathfrak{D}S}{\mathfrak{D}A^{P}_{i}(x)}$ is simply denoted by $\nabla S$. Similarly, the infinite-dimensional Laplace-Beltrami operator on $\mathcal{A}/\widehat{\mathcal{G}}$ is denoted by $\Delta$ for simplicity in notations. First I derive the following identity for the $S[A]$ functional since $\Psi[A]:=N_{\hbar}e^{-S[A]/\hbar}$ verifies the functional Schrodinger equation with vanishing ground state energy $\widehat{H}\Psi[A]=0$
\begin{eqnarray}
\label{eq:minimax}
 \frac{1}{2}\Delta(|\nabla S|^{2}e^{-\frac{2S}{\hbar}})=\left(|\nabla^{2}S|^{2}+\frac{1}{\hbar}\nabla^{j}S\nabla_{j}(|\nabla S|^{2}-V)+\text{Ricci}(\nabla S,\nabla S)\right.\\\nonumber 
 \left.-\frac{2}{\hbar}\text{Hessian}(S)(\nabla S,\nabla S)-\frac{1}{\hbar^{2}}|\nabla S|^{2}(|\nabla S|^{2}-V)+\frac{2}{\hbar^{2}}|\nabla S|^{4}\right)e^{-\frac{2S}{\hbar}},   
\end{eqnarray}
where $V$ is twice the Yang-Mills potential i.e., $V=\frac{1}{2}\int_{\mathbb{R}^{n}}F^{P}_{ij}F^{P}_{ij}$. 
I integrate this expression over the orbit space and use the vanishing of the boundary term due to rapid fall-off of the measure $e^{-2S/\hbar}$ near the infinity of the orbit space $\mathcal{A}/\widehat{\mathcal{G}}$ to yield

\begin{eqnarray}
\label{Hessian}
\frac{2}{\hbar}\int_{\mathcal{A}/\widehat{\mathcal{G}}}\nonumber\text{Hessian}(S)(\nabla S,\nabla S)e^{-\frac{2S}{\hbar}}\\
=\int_{\mathcal{A}/\widehat{\mathcal{G}}}\left(|\nabla^{2}S|^{2}+\frac{1}{\hbar^{2}}(2|\nabla S|^{2}-V)(|\nabla S|^{2}+V)+\text{Ricci}(\nabla S,\nabla S)\right)e^{-\frac{2S}{\hbar}}
\end{eqnarray}
The Ricci quadratic form $R(\nabla S,\nabla S)$ is understood to be regularized (this regularization appears since we start with a regularized Laplacian in the identity \ref{eq:minimax}).
A natural obstruction that arises is the non-negativity of $2|\nabla S|^{2}-V$. To prove this would require the construction of the measure $e^{-2S/\hbar}$ and then one would need to investigate if $S[A]$ rises fast enough. Essentially, this is where all the difficulty lies. I expect if one were to construct the ground state rigorously, then the this desired inequality $2|\nabla S|^{2}\geq V$ should hold (at least weakly i.e., $\int_{\mathcal{A}/\widehat{\mathcal{G}}}(2|\nabla S|^{2}-V)(|\nabla S|^{2}+V)e^{-\frac{2S}{\hbar}}\geq 0$. Of course, for a complete argument, one ought to establish the non-negativity of the Hessian in any arbitrary directions not just along $\nabla S$. But if it were to be non-negative it must satisfy non-negativity along any directions including $\nabla S$ and therefore the inequality $\int_{\mathcal{A}/\widehat{\mathcal{G}}}(2|\nabla S|^{2}-V)(|\nabla S|^{2}+V)e^{-\frac{2S}{\hbar}}\geq 0$ is desired at a bare minimum. Notice an important fact. Since in Lorentz covariant field theories, the metric induced on the orbit space can not be arbitrary, we mentioned in the introduction that the functional can not be independent of the geometry. In fact, the appearance of the Ricci term in the integral expression \ref{Hessian} reflects this fact. Let us examine the known case i.e., $U(1)$ gauge theory. Clearly the ground state is exactly known (\ref{eq:maxwell}) and it verifies the equation $|\nabla S|^{2}=V$ and therefore we have $\int_{\mathcal{A}/\widehat{\mathcal{G}}}\nonumber\text{Hessian}(\nabla S,\nabla S)e^{-\frac{2S}{\hbar}}=\frac{\hbar}{2}\int_{\mathcal{A}/\widehat{\mathcal{G}}}\left(|\nabla^{2}S|^{2}+\frac{2}{\hbar^{2}}|\nabla S|^{4}\right)e^{-\frac{2S}{\hbar}}\geq 0$ since the orbit space of the $U(1)$ theory is flat (\ref{eq:maxwellorbit}).  

I present a second argument based on semi-classical expansion presented in section \ref{technical}. First note that at the flat connection $A=0$ (or its equivalence class), 
\begin{eqnarray}
\text{Hessian}(S)(\alpha,\alpha)> 0
\end{eqnarray}
for any $\alpha\in T_{A}\mathcal{A}/\widehat{\mathcal{G}}$. This is because near $A=0$, 
\begin{eqnarray}
S[A]=\frac{1}{2}\int_{k}\frac{1}{|k|}(\overrightarrow{k}\times \overrightarrow{A^{a}}(\overrightarrow{k}))(\overrightarrow{k}\times \overrightarrow{A^{a}}(-\overrightarrow{k}))+O(|A|^{3})    
\end{eqnarray}
and therefore result from section \ref{bosonic}, inequality \ref{eq:U1positive} implies $\text{Hessian}(S)$ is positive definite at the flat connection $A=0$. Now I argue that $\text{Hessian}(S)[A]$ never has vanishing eigenvalues based on a semi-classical expansion. Recall the expansion of $S[A]$ in $\hbar$ 
\begin{eqnarray}
\label{eq:semi}
S[A]\simeq S_{0}[A]+\hbar S_{1}[A]+\frac{\hbar^{2}}{2!}S_{2}[A]+\cdot\cdot\cdot\cdot \frac{\hbar^{k}}{k!}S_{k}[A]+\cdot\cdot\cdot\cdot    
\end{eqnarray}
In order for, $\text{Hessian}(S)$ to have a zero eigenvalue at some point $A\in \mathcal{A}/\widehat{\mathcal{G}}$ each of $\text{Hessian}(S_{0})$,\\$\text{Hessian}(S_{1}),\text{Hessian}(S_{2}),........$ must have zero eigenvalues at $A$ simultaneously. But this fails for $S_{0}[A]$. This follows from the Hamilton-Jacobi equation \ref{eq:HJ} that is verified by $S_{0}$ (at $O(\hbar^{0})$)
\begin{eqnarray}
\int_{\mathbb{R}^{n}\times \mathbb{R}^{n}}\frac{1}{2}\mathfrak{G}^{A^{P}_{i}(x_{1})A^{Q}_{j}(x_{2})}\frac{\delta S_{0}}{\delta A^{P}_{i}(x_{1})}\frac{\delta S_{0}}{\delta A^{Q}_{j}(x_{2})}-\int_{\mathbb{R}^{n}}\frac{1}{4}\mathcal{F}_{jk}\cdot\mathcal{F}_{jk}=0.
\end{eqnarray}
since Fr\'echet differentiation of this equation in an arbitrary direction $\alpha$ yields
\begin{eqnarray}
\text{Hessian}(S_{0})\left(\frac{\delta S_{0}}{\delta A},\alpha\right)=\frac{1}{4}\mathfrak{D}_{\alpha}\int_{\mathbb{R}^{n}}\mathcal{F}_{jk}\cdot\mathcal{F}_{jk}.    
\end{eqnarray}
Now in order for the right-hand side to vanish, the Euclidean action functional $\frac{1}{4}\int_{\mathbb{R}^{n}}\mathcal{F}_{jk}\cdot\mathcal{F}_{jk}$ must have a critical point in the orbit space $\mathcal{A}/\widehat{\mathcal{G}}$. In other words, one must solve for the Euclidean Yang-Mills equations on the Cauchy slice $\mathbb{R}^{n},~n=2,3$. But for $n=2$ and $3$, there is no non-trivial solution to Euclidean Yang-Mills equations with any finite energy \cite{cliff2} (notice this fails for $n=4$ due to conformal invariance and also if one includes a Higgs field). By scaling one may increase the Yang-Mills potential energy for $n=2$ and $n=3$ as large as desired (since conformal invariance does not hold in any other dimensions except $n=4$). Therefore $\text{Hessian}(S_{0})\left(\frac{\delta S_{0}}{\delta A},\alpha\right)\neq 0$ yielding
\begin{eqnarray}
 \text{Hessian}(S)\left(\frac{\delta S}{\delta A},\alpha\right)\neq 0.   
\end{eqnarray}
Now, if we assume that $S$ is a smooth functional (or at least thrice Fr\'echet differentiable) on the orbit space $\mathcal{A}/\widehat{\mathcal{G}}$ or at least almost everywhere smooth and the complement of the set on which it is non-smooth is connected, then $\text{Hessian}(S)\left(\frac{\delta S}{\delta A},\alpha\right)> 0$ at $A=0$ and $\text{Hessian}(S)\left(\frac{\delta S}{\delta A},\alpha\right)\neq 0$ everywhere else (on the connected full measure set where it is differentiable) implies $\text{Hessian}(S)\left(\frac{\delta S}{\delta A},\alpha\right)> 0$ almost everywhere on $\mathcal{A}/\widehat{\mathcal{G}}$. Now one point that needs to be addressed is whether $\mathcal{A}/\widehat{\mathcal{G}}$ is path connected or not. I argue this as follows. If I assume connections on $\mathcal{A}/\widehat{\mathcal{G}}$ to have finite energy, then every connection is described by its asymptote at $\infty$ of $\mathbb{R}^{n},n=2,3$ which is a flat connection $A=g^{-1}dg$ where $g$ is a map from the boundary sphere $\mathbb{S}^{n-1}$ at $\infty$ to the gauge group $SU(N)$ i.e.,
\begin{eqnarray}
g:\mathbb{S}^{n-1}\to SU(N).    
\end{eqnarray}
Therefore, $\mathcal{A}/\widehat{\mathcal{G}}$ is homotopy equivalence to the space $\text{Maps}(\mathbb{S}^{n-1}\to SU(N))$. Now for $n=2$ and $3$, we have $\pi_{1}(SU(N))=0, \pi_{2}(SU(N))=0$ indicating $\mathcal{A}/\widehat{\mathcal{G}}$ to be path connected ($\mathcal{A}/\widehat{\mathcal{G}}$ is a topologically complicated space). This completes the heuristic argument that indeed the Bakry-Emery Ricci curvature associated with the $n+1,~n=2,3$ dimensional quantum Yang-Mills theory admits a strictly positive lower bound $\Delta$ (for $2+1$ dimensions, $\Delta=\frac{3\gamma C_{2}(G)g^{2}_{YM}}{4\pi}$ and for $3+1$ dimensions, $\Delta=\frac{3\chi C_{2}(G)g^{2}_{YM}}{2\pi^{3}}$ from corollary \ref{kinetic}). Even though, a rigorous analysis is to be performed to make sense of the semi-classical series \ref{eq:semi}, these heuristic arguments tend to point toward a positive answer to the Yang-Mills mass gap problem.   

\subsection{Dimensional Analysis, Large $N$ limit}
\label{3+1}
\noindent Now I perform an elementary dimensional analysis and argue that for $3+1$ dimensional Yang-Mills theory, one needs to introduce a length scale $L$ in order to obtain a mass gap. I set the light speed $c$ equals to $1$. With this convention, I have $[t]=[x]=L$. The classical action $\int_{\mathbb{R}^{1,3}}\langle F,F\rangle dt d^{3}x$ has the dimension of $\hbar$. Therefore $A$ has the dimension of $\frac{\hbar^{\frac{1}{2}}}{L}$ and $g^{2}_{YM}$ has the dimension of $\frac{1}{\hbar}$. Now according to (\ref{eq:riccidimension}), the dimension of $\frac{\mathfrak{R}ic_{\chi}(\alpha,\alpha)}{\int_{\mathbb{R}^{3}\times \mathbb{R}^{3}}\alpha(x)\alpha(x^{'})}$ is $\frac{1}{\hbar L}$ since $\chi$ has dimension $\frac{1}{L}$. Therefore $\Delta$ in the main theorem \ref{main} has dimension $\frac{1}{\hbar L}$ yielding the dimension of $E^{*}-E^{0}$ to be $\frac{\hbar}{L}$ which is the correct dimension of energy. Therefore, the introduction of a finite $\chi$ (inverse length) is absolutely necessary to generate an energy scale in the quantum Yang-Mills theory in $3+1$ dimensions. Contrary to $3+1$ dimensions, in the chosen convention $c=1$, the Yang-Mills coupling constant has an appropriate dimension in $2+1$ dimensions. In other words, $g^{2}_{YM}$ has the dimension of $\frac{1}{\hbar L}$ that yields a dimension of $\frac{1}{\hbar L}$ for the Bakry-emery bound $\Delta$ from (\ref{eq:twodim}). This in turn implies that the gap $E^{*}-E^{0}$ has the correct dimension of $\frac{\hbar}{L}$. Therefore, I do not need to introduce an additional length scale for the energy gap in the $2+1$ dimensional quantum Yang-Mills theory. In fact, $2+1$ dimensional Yang-Mills theory is super-renormalizable and therefore does not run.  

In $3+1$ Yang-Mills theory, $g_{YM}$ is dimensionless (i.e., has the dimension of $\frac{1}{\hbar}$) and thus one can not create a mass out of the occurring constants (i.e., $g_{YM},\hbar, c=1$) (purely on dimensional grounds). In order to generate the dimension of mass, one must introduce an additional length scale as I have demonstrated previously. However, this naturally appears through the regularization process as one can not simply eliminate $\chi$. Therefore in the $3+1$ case, one may not take the limit $\chi\to\infty$ (or length approaching zero) but set it to $\frac{1}{L}$, $L>0$. This $L$ is then to be fixed possibly by measuring the mass of the lowest glue-ball state. Roughly the finite part of the Ricci curvature is proportional to $\frac{3 C_{2}(G)g^{2}_{YM}}{2\pi^{3}L}$. In $3+1$ dimensions, the introduction of a length scale $L$ introduces another scale the mass $m_{0}$ of the lowest glu-ball state. Essentially the ratio of the two scales $m_{0}L$ is the meaningful entity. It would be interesting to understand this issue from a perspective of renormalization group flow i.e., to introduction of a length via regularization and renormalization process. 

Lastly one of the most interesting features of the kinetic contribution from corollary \ref{kinetic} is its invariance under large $N$ or the 't Hooft limit \cite{t1993planar}. Notice that the kinetic contribution in corollary \ref{kinetic} scales as $C_{2}(G)g^{2}_{YM}$. Now let us assume that $G=SU(N)$ and therefore $C_{2}(G)g^{2}_{YM}=Ng^{2}_{YM}$ since $C_{2}(G)=N$. But the 't Hooft's large $N$ limit is nothing but increasing $N$ while keeping $g^{2}_{YM}N$ fixed. This property seems to indicate that the curvature contribution persists in the large $N$ limit. The strict $N\to\infty$ limit is essentially a free theory in the sense that all the correlation functions of single trace, gauge invariant operators factorize (maps onto a free string theory \cite{hatfield2018quantum,t1993planar}). Nevertheless, in the large $N$ limit, the theory exhibits a mass gap (in fact the strict lower bound does not depend on $N$ as long as the t'Hooft coupling is fixed as seen from the explicit expression). Therefore my calculation seems to support the belief that in the strict large $N$ limit, one a tower of massive free particles. It would perhaps be interesting to investigate from ADS-CFT perspective.

\section{Concluding Remarks}
\noindent Topology and geometry of the configuration space of the classical gauge theory have been studied previously \cite{singer1981geometry,babelon1981riemannian, narasimhan1979geometry}. At the classical level, the geometry of the configuration space is not known to play a vital role in the sense that one does not require the geometric information of the configuration space (its curvature, etc) while studying local Cauchy problems and even in understanding the long \textit{time} dynamics. Classical Yang-Mills fields are globally well-posed on both $\mathbb{R}^{1+2}$ and $\mathbb{R}^{1+3}$ and in the proof of such global well-posedness \cite{eardley1982global, eardley1982global2,ginibre1981cauchy} nowhere does the geometry of the configuration space enter crucially. It is suspected however that the geometry of the configuration space has an important role to play at the level of quantum field theory. While very little effort is paid to understanding the role of the geometry of the classical configuration space in quantized field theory in contemporary high-energy physics, it is certainly worth the attention. In a finite-dimensional setting, sharp estimates on the spectrum of the Hamiltonian operator of a quantum theory are obtainable through Lichnerowicz-type estimates on a constructed weighted manifold under a suitable convexity assumption on the potential \cite{moncrief}. In addition, several results on the estimates of the spectrum of the Schrodinger operator are available \cite{singer1985estimate, li1980estimates, lieb1976bounds}. At the level of field theory, this is much more delicate since the operators do not make sense without appropriate regularization. Even after one performs such regularization, making sense of a rigorous quantum theory requires new novel ideas that are yet to be thought of. But the non-perturbative techniques such as the semiclassical method developed by \cite{moncrief2,marini} and stochastic quantization scheme developed by \cite{hairer1,hairer2} seem promising at the moment.

The loop corrections to the tree solution (semi-classical approximation) obtained by solving the functional Hamilton-Jacobi equation require regularization. This is because the tree contribution appears as a source term acted upon by the functional Laplacian in the transport equations for the quantum corrections. At the semi-classical level, no such regularization is required since the functional Hamilton-Jacobi equation does not involve singular operators. This singular nature of the operators appearing at the loop level is essentially related to the divergences associated with the quantum field theory. Similarly, notice that the Riemann curvature of the configuration space is a purely classical object. However, to define the Ricci curvature, I had to invoke the same regularization scheme, and as such the formal non-regularized Ricci curvature contains ultraviolet divergence terms. This hints towards a conclusion that the quantum field theory is affected by the geometry at the level of Ricci curvature (Bakry-Emery Ricci where the potential contribution is considered). Appropriate invariants of the Riemann curvature should show up at the tree-level scattering amplitudes. 
It is almost certainly expected that the Ricci curvature would inevitably show up when one tries to compute the loop amplitudes indicating a \textit{quantum-}nature of the Ricci curvature of this infinite-dimensional configuration space. A natural conjecture would be that the renormalization group flow for the metric corresponds to a forced (due to the presence of the potential term) infinite-dimensional Ricci flow. This should result in a flat metric at the high energy limit indicating the asymptotic freedom. In addition, the idea of renormalization and how it can be used to obtain a length scale to define a mass in $3+1$ dimensional Yang-Mills theory is to be understood in a rigorous way. 

\noindent Another example of the orbit space geometry playing an important role in the case of scalar electrodynamics where photons remain gap-less due to vanishing Riemann curvature of the orbit space of the $U(1)$ connections while moduli of charged scalar fields are shown to have a non-vanishing curvature \cite{moncrief}. It should be interesting to consider the large $N$ limit of $SU(N)$ non-abelian gauge theories since I have noticed the constancy of  t'Hooft's coupling indicates that the curvature contribution is invariant in the large $N$ limit. It may be interesting to study this from the ADS-CFT perspective. Another interesting application would be to study the $\mathcal{N}=4$ super Yang-Mills theory and investigate whether the expected gapless spectra can be geometrically explained. 

\section{Acknowledgement}
\noindent P.M. thanks Prof. Vincent Moncrief, Prof. Shing-Tung Yau, Prof. Cliff Taubes for numerous useful discussions related to this project such as moduli spaces of Yang-Mills connections, geometric quantum field theory, metric measure spaces, etc. Special thanks to Prof. V.P Nair for pointing out several vital issues that needed to be ironed out and suggesting potential solutions. P.M. thanks Daniel Kapec for many physical insights and the referee for a very thorough review that improved the manuscript substantially. This work was supported by the Center of Mathematical Sciences and Applications (CMSA), Department of Mathematics at Harvard University.

\begin{appendix}
\section{Calculations for the proof of the main theorem}
\noindent Here I provide the calculations regarding the commutation of the covariant derivatives to obtain the identity (\ref{eq:manipulation}) (see \cite{moncrief} for the corresponding finite-dimensional calculations). First, consider the following entity
 \begin{eqnarray}
 \mathcal{Q}:=\int_{\mathbb{R}^{n}\times \mathbb{R}^{n}}\mathfrak{G}_{\delta_{\chi}}^{A^{P}_{i}(x)A^{Q}_{j}(x^{'})}\Theta^{PQ}(x,x^{'})\left(\frac{\mathfrak{D}\varphi[A]^{\dag}}{\mathfrak{D}A^{P}_{i}(x)}\frac{\mathfrak{D}\varphi[A]}{\mathfrak{D}A^{Q}_{j}(x^{'})}|N_{\hbar}|^{2}e^{-2S[A]/\hbar}\right)d^{n}xd^{n}x^{'}
 \end{eqnarray}
and apply the regularized covariant functional Laplacian to yield (denote $\mathbb{R}^{n}\times \mathbb{R}^{n}$ by $\mathfrak{K}$)
\begin{eqnarray}
\nonumber\int_{\mathfrak{K}}\mathfrak{G}_{\delta_{\chi}}^{A^{L}_{k}(y)A^{M}_{l}(z)}\Theta^{LM}(y,z)\frac{\mathfrak{D}}{\mathfrak{D}A^{L}_{k}(y)}\frac{\mathfrak{D}}{\mathfrak{D}A^{M}_{l}(z)} \mathcal{Q}\\\nonumber
=\int_{\mathfrak{K}}\mathfrak{G}_{\delta_{\chi}}^{A^{L}_{k}(y)A^{M}_{l}(z)}\Theta^{LM}(y,z)\frac{\mathfrak{D}}{\mathfrak{D}A^{L}_{k}(y)}\frac{\mathfrak{D}}{\mathfrak{D}A^{M}_{l}(z)}\int_{\mathfrak{K}}\mathfrak{G}_{\delta_{\chi}}^{A^{P}_{i}(x)A^{Q}_{j}(x^{'})}\Theta^{PQ}(x,x^{'})\\\nonumber \left(\frac{\mathfrak{D}\varphi[A]^{\dag}}{\mathfrak{D}A^{P}_{i}(x)}\frac{\mathfrak{D}\varphi[A]}{\mathfrak{D}A^{Q}_{j}(x^{'})}|N_{\hbar}|^{2}e^{-2S[A]/\hbar}\right)\\\nonumber 
=\int_{\mathfrak{K}}\mathfrak{G}_{\delta_{\chi}}^{A^{L}_{k}(y)A^{M}_{l}(z)}\Theta^{LM}(y,z)\int_{\mathfrak{K}}\mathfrak{G}_{\delta_{\chi}}^{A^{P}_{i}(x)A^{Q}_{j}(x^{'})}\Theta^{PQ}(x,x^{'})\\\nonumber \frac{\mathfrak{D}}{\mathfrak{D}A^{L}_{k}(y)}\left(\frac{\mathfrak{D}}{\mathfrak{D}A^{P}_{i}(x)}\frac{\mathfrak{D}\varphi^{\dag}}{\mathfrak{D}A^{M}_{l}(z)}\frac{\mathfrak{D\varphi}}{\mathfrak{D}A^{Q}_{j}(x^{'})}|N_{\hbar}|^{2}e^{-2S/\hbar}\right.\\\nonumber
\left. +\frac{\mathfrak{D}\varphi^{\dag}}{\mathfrak{D}A^{P}_{i}(x)}\frac{\mathfrak{D}}{\mathfrak{D}A^{Q}_{j}(x^{'})}\frac{\mathfrak{D}\varphi}{\mathfrak{D}A^{M}_{l}(z)}|N_{\hbar}|^{2}e^{-2S/\hbar}-\frac{2}{\hbar}\frac{\mathfrak{D}\varphi^{\dag}}{\mathfrak{D}A^{P}_{i}(x)}\frac{\mathfrak{D}\varphi}{\mathfrak{D}A^{Q}_{j}(x^{'})}\frac{\mathfrak{D}S}{\mathfrak{D}A^{M}_{l}(z)}|N_{\hbar}|^{2}e^{-2S/\hbar}\right)\\\nonumber 
=\int_{\mathfrak{K}}\mathfrak{G}_{\delta_{\chi}}^{A^{L}_{k}(y)A^{M}_{l}(z)}\Theta^{LM}(y,z)\int_{\mathfrak{K}}\mathfrak{G}_{\delta_{\chi}}^{A^{P}_{i}(x)A^{Q}_{j}(x^{'})}\Theta^{PQ}(x,x^{'})\left\{\left(\frac{\mathfrak{D}}{\mathfrak{D}A^{P}_{i}(x)}\frac{\mathfrak{D}}{\mathfrak{D}A^{L}_{k}(y)}\frac{\mathfrak{D}\varphi^{\dag}}{\mathfrak{D}A^{M}_{l}(z)}\right.\right.\\\nonumber 
\left. \left.+\mathfrak{R}_{A^{M}_{l}(z)A^{N}_{n}(x^{''})A^{L}_{k}(y)A^{P}_{i}(x)}\frac{\mathfrak{D}\varphi^{\dag}}{\mathfrak{D}A^{N}_{n}(x^{''})}\right)\frac{\mathfrak{D}\varphi}{\mathfrak{D}A^{Q}_{j}(x^{'})}|N_{\hbar}|^{2}e^{-2S/\hbar}\right.\\\nonumber\left.+\frac{\mathfrak{D}}{\mathfrak{D}A^{P}_{i}(x)}\frac{D\varphi^{\dag}}{\mathfrak{D}A^{M}_{l}(z)}\frac{\mathfrak{D}}{\mathfrak{D}A^{L}_{k}(y)}\frac{\mathfrak{D}\varphi}{\mathfrak{D}A^{Q}_{j}(x^{'})}|N_{\hbar}|^{2}e^{-2S/\hbar}\right.\\\nonumber 
\left. +\left(\frac{\mathfrak{D}}{\mathfrak{D}A^{P}_{i}(x)}\frac{\mathfrak{D}}{\mathfrak{D}A^{L}_{k}(y)}\frac{\mathfrak{D}\varphi}{\mathfrak{D}A^{M}_{l}(z)}+\mathfrak{R}_{A^{M}_{l}(z)A^{N}_{n}(x^{''})A^{L}_{k}(y)A^{P}_{i}(x)}\frac{\mathfrak{D}\varphi}{\mathfrak{D}A^{N}_{n}(x^{''})}\right)\frac{\mathfrak{D}\varphi^{\dag}}{\mathfrak{D}A^{Q}_{j}(x^{'})}|N_{\hbar}|^{2}e^{-2S/\hbar}\right.\\\nonumber 
\left.+\frac{\mathfrak{D}}{\mathfrak{D}A^{L}_{k}(y)}\frac{\mathfrak{D}\varphi^{\dag}}{\mathfrak{D}A^{P}_{i}(x)}\frac{\mathfrak{D}}{\mathfrak{D}A^{Q}_{j}(x^{'})}\frac{\mathfrak{D}\varphi}{\mathfrak{D}A^{M}_{l}(z)}|N_{\hbar}|^{2}e^{-2S/\hbar}\right.\\\nonumber
\left.-\frac{2}{\hbar}\frac{\mathfrak{D}}{\mathfrak{D}A^{P}_{i}(x)}\frac{\mathfrak{D}\varphi^{\dag}}{\mathfrak{D}A^{M}_{l}(z)}\frac{\mathfrak{D}\varphi}{\mathfrak{D}A^{Q}_{j}(x^{'})}\frac{\mathfrak{D}S}{\mathfrak{D}A^{L}_{k}(y)}|N_{\hbar}|^{2}e^{-2S/\hbar}\right.\\\nonumber 
\left.-\frac{2}{\hbar}\frac{\mathfrak{D}\varphi^{\dag}}{\mathfrak{D}A^{P}_{i}(x)}\frac{\mathfrak{D}}{\mathfrak{D}A^{Q}_{j}(x^{'})}\frac{\mathfrak{D}\varphi}{\mathfrak{D}A^{M}_{l}(z)}\frac{\mathfrak{D}S}{\mathfrak{D}A^{L}_{k}(y)}|N_{\hbar}|^{2}e^{-2S/\hbar}\right.\\\nonumber 
\left.-\frac{2}{\hbar}\frac{\mathfrak{D}}{\mathfrak{D}A^{L}_{k}(y)}\left(\frac{\mathfrak{D}\varphi^{\dag}}{\mathfrak{D}A^{P}_{i}(x)}\frac{\mathfrak{D}\varphi}{\mathfrak{D}A^{Q}_{j}(x^{'})}\frac{\mathfrak{D}S}{\mathfrak{D}A^{M}_{l}(z)}|N_{\hbar}|^{2}e^{-2S/\hbar}\right)\right\}. 
\end{eqnarray}
I have utilized the fact that the functional covariant derivative commutes with the parallel propagator (Wilson line). Now notice that the Riemann curvature of the space $\mathcal{A}/\widehat{\mathcal{G}}$ appears in the previous expression, which without regularization would lead to the formal Ricci curvature which would not make sense as a trace of a non-trace class operator. In order to relate this expression to the spectral gap of the regularized Hamiltonian operator $\widehat{H}$, first, recall the following identity 
\begin{eqnarray}
\label{eq:energy}
\int_{\mathfrak{K}}\mathfrak{G}_{\delta_{\chi}}^{A^{L}_{k}(y)A^{M}_{l}(z)}\Theta^{LM}(y,z)(\frac{\mathfrak{D}}{\mathfrak{D}A^{L}_{k}(y)}\frac{\mathfrak{D}\varphi}{\mathfrak{D}A^{M}_{l}(z)})e^{-S/\hbar}d^{n}yd^{n}z=-\frac{2}{\hbar^{2}}(\widehat{H}-E^{0})(\varphi e^{-S/\hbar})\\\nonumber +\frac{2}{\hbar}\int_{\mathfrak{K}}\mathfrak{G}_{\delta_{\chi}}^{A^{L}_{k}(y)A^{M}_{l}(z)}\Theta^{LM}(y,z)\frac{\mathfrak{D}S}{\mathfrak{D}A^{L}_{k}(y)}\frac{\mathfrak{D}\varphi}{\mathfrak{D}A^{M}_{l}(z)}e^{-S/\hbar}d^{n}x.
\end{eqnarray}
Now in order to obtain a lower bound for the spectrum of $\widehat{H}-E^{0}$, I need to manipulate the expression for the entity $\int_{\mathfrak{K}}\mathfrak{G}_{\delta_{\chi}}^{A^{L}_{k}(y)A^{M}_{l}(z)}\Theta^{LM}(y,z)\frac{\mathfrak{D}}{\mathfrak{D}A^{L}_{k}(y)}\frac{\mathfrak{D}}{\mathfrak{D}A^{M}_{k}(z)} \mathcal{Q}$. Under the assumption of the existence of a rigorous quantum field theory, I may take $S$ and $\varphi$ to be smooth functionals of $A$. Therefore, under this bold assumption, all the integrals supposedly yield finite values rendering an application of Fubini's theorem to interchange the integrals over $\mathfrak{K}$ whenever necessary. In addition, having assumed the existence of a quantized theory, the regularized operator $\widehat{H}-E^{0}$ is self-adjoint with respect to the measure $e^{-2S[A]/\hbar}\mu_{\mathfrak{G}}$ and as a consequence, I may discard the boundary terms that arise in the process (rapid decay of the measure $e^{-\frac{2S}{\hbar}}$). For now let us evaluate $\int_{\mathfrak{K}}\mathfrak{G}_{\delta_{\chi}}^{A^{L}_{k}(y)A^{M}_{l}(z)}\Theta^{LM}(y,z)\frac{\mathfrak{D}}{\mathfrak{D}A^{L}_{k}(y)}\frac{\mathfrak{D}}{\mathfrak{D}A^{M}_{k}(z)} \mathcal{Q}$
\begin{eqnarray}
\nonumber\int_{\mathfrak{K}}\mathfrak{G}_{\delta_{\chi}}^{A^{L}_{k}(y)A^{M}_{l}(z)}\Theta^{LM}(y,z)\frac{\mathfrak{D}}{\mathfrak{D}A^{L}_{k}(y)}\frac{\mathfrak{D}}{\mathfrak{D}A^{M}_{l}(z)} \mathcal{Q}\\\nonumber 
=\int_{\mathfrak{K}}\mathfrak{G}_{\delta_{\chi}}^{A^{L}_{k}(y)A^{M}_{l}(z)}\Theta^{LM}(y,z)\int_{\mathfrak{K}}\mathfrak{G}_{\delta_{\chi}}^{A^{P}_{i}(x)A^{Q}_{j}(x^{'})}\Theta^{PQ}(x,x^{'})\left(-2\frac{\mathfrak{D}}{\mathfrak{D}A^{L}_{k}(y)}\frac{\mathfrak{D}\varphi^{\dag}}{\mathfrak{D}A^{M}_{l}(z)}\frac{\mathfrak{D}}{\mathfrak{D}A^{P}_{i}(x)}\frac{\mathfrak{D}\varphi}{\mathfrak{D}A^{Q}_{j}(x^{'})}e^{-2S/\hbar}\right.\\\nonumber \left.+\frac{2}{\hbar}\frac{\mathfrak{D}}{\mathfrak{D}A^{L}_{k}(y)}\frac{\mathfrak{D}\varphi^{\dag}}{\mathfrak{D}A^{M}_{l}(z)}\frac{\mathfrak{D}\varphi}{\mathfrak{D}A^{Q}_{j}(x^{'})}\frac{\mathfrak{D}S}{\mathfrak{D}A^{P}_{i}(x)}e^{-2S/\hbar}+\frac{2}{\hbar}\frac{\mathfrak{D}}{\mathfrak{D}A^{L}_{k}(y)}\frac{\mathfrak{D}\varphi}{\mathfrak{D}A^{M}_{l}(z)}\frac{\mathfrak{D}\varphi^{\dag}}{\mathfrak{D}A^{Q}_{j}(x^{'})}\frac{\mathfrak{D}S}{\mathfrak{D}A^{P}_{i}(x)}e^{-2S/\hbar}\right.\\\nonumber 
\left.+\mathfrak{R}_{A^{M}_{l}(z)A^{N}_{n}(x^{''})A^{L}_{k}(y)A^{P}_{i}(x)}\frac{\mathfrak{D}\varphi^{\dag}}{\mathfrak{D}A^{N}_{n}(x^{''})}\frac{\mathfrak{D}\varphi}{\mathfrak{D}A^{Q}_{j}(x^{'})}e^{-2S/\hbar}+\mathfrak{R}_{A^{M}_{l}(Z)A^{N}_{n}(x^{''})A^{L}_{k}(y)A^{P}_{i}(x)}\frac{\mathfrak{D}\varphi}{\mathfrak{D}A^{N}_{n}(x^{''})}\frac{\mathfrak{D}\varphi^{\dag}}{\mathfrak{D}A^{Q}_{j}(x^{'})}e^{-2S/\hbar}\right.\\\nonumber 
\left. +\frac{\mathfrak{D}}{\mathfrak{D}A^{P}_{i}(x)}\frac{\mathfrak{D}\varphi^{\dag}}{\mathfrak{D}A^{M}_{l}(z)}\frac{\mathfrak{D}}{\mathfrak{D}A^{L}_{k}(y)}\frac{\mathfrak{D}\varphi}{\mathfrak{D}A^{Q}_{j}(x^{'})}e^{-2S/\hbar}+\frac{\mathfrak{D}}{\mathfrak{D}A^{L}_{k}(y)}\frac{\mathfrak{D}\varphi^{\dag}}{\mathfrak{D}A^{P}_{i}(x)}\frac{\mathfrak{D}}{\mathfrak{D}A^{Q}_{j}(x^{'})}\frac{\mathfrak{D}\varphi}{\mathfrak{D}A^{M}_{l}(z)}e^{-2S/\hbar}\right.\\\nonumber 
\left.-\frac{2}{\hbar}\frac{\mathfrak{D}}{\mathfrak{D}A^{P}_{i}(x)}\frac{\mathfrak{D}\varphi^{\dag}}{\mathfrak{D}A^{M}_{l}(z)}\frac{\mathfrak{D}\varphi}{\mathfrak{D}A^{Q}_{j}(x^{'})}\frac{\mathfrak{D}S}{\mathfrak{D}A^{L}_{k}(y)}e^{-2S/\hbar}-\frac{2}{\hbar}\frac{\mathfrak{D}\varphi^{\dag}}{\mathfrak{D}A^{P}_{i}(x)}\frac{\mathfrak{D}}{\mathfrak{D}A^{Q}_{j}(x^{'})}\frac{\mathfrak{D}\varphi}{\mathfrak{D}A^{M}_{l}(z)}\frac{\mathfrak{D}S}{\mathfrak{D}A^{L}_{k}(y)}e^{-2S/\hbar}\right.\\\nonumber 
\left. +\frac{\mathfrak{D}}{\mathfrak{D}A^{P}_{i}(x)}\left(\frac{\mathfrak{D}}{\mathfrak{D}A^{L}_{k}(y)}\frac{\mathfrak{D}\varphi^{\dag}}{\mathfrak{D}A^{M}_{l}(z)}\frac{\mathfrak{D}\varphi}{\mathfrak{D}A^{Q}_{j}(x^{'})}e^{-2S/\hbar}\right)+\frac{\mathfrak{D}}{\mathfrak{D}A^{P}_{i}(x)}\left(\frac{\mathfrak{D}}{\mathfrak{D}A^{L}_{k}(y)}\frac{\mathfrak{D}\varphi}{\mathfrak{D}A^{M}_{l}(z)}\frac{\mathfrak{D}\varphi^{\dag}}{\mathfrak{D}A^{Q}_{j}(x^{'})} e^{-2S/\hbar}\right)\right.\\\nonumber 
\left. -\frac{2}{\hbar}\frac{\mathfrak{D}}{\mathfrak{D}A^{L}_{k}(y)}\left(\frac{\mathfrak{D}\varphi^{\dag}}{\mathfrak{D}A^{P}_{i}(x)}\frac{\mathfrak{D}\varphi}{\mathfrak{D}A^{Q}_{j}(x^{'})}\frac{\mathfrak{D}S}{\mathfrak{D}A^{M}_{l}(z)}e^{-2S/\hbar}\right)\right).
\end{eqnarray}
Now utilizing the identity (\ref{eq:energy}), I may write the following 
\begin{eqnarray}
\int_{\mathfrak{K}}\mathfrak{G}_{\delta_{\chi}}^{A^{L}_{k}(y)A^{M}_{l}(z)}\Theta^{LM}(y,z)\int_{\mathfrak{K}}\mathfrak{G}_{\delta_{\chi}}^{A^{P}_{i}(x)A^{Q}_{j}(x^{'})}\Theta^{PQ}(x,x^{'})\frac{\mathfrak{D}}{\mathfrak{D}A^{L}_{k}(y)}\frac{\mathfrak{D}\varphi^{\dag}}{\mathfrak{D}A^{M}_{l}(z)}\nonumber \frac{\mathfrak{D}}{\mathfrak{D}A^{P}_{i}(x)}\frac{\mathfrak{D}\varphi}{\mathfrak{D}A^{Q}_{j}(x^{'})}e^{-2S/\hbar}\\\nonumber 
=\frac{4}{\hbar^{4}}\left\{(\widehat{H}-E^{0})(\varphi e^{-S/\hbar})\right\}\left\{(\widehat{H}-E^{0})(\varphi^{\dag} e^{-S/\hbar})\right\}+\frac{2}{\hbar}\int_{\mathfrak{K}}\mathfrak{G}_{\delta_{\chi}}^{A^{L}_{k}(y)A^{M}_{l}(z)}\Theta^{LM}(y,z)\int_{\mathfrak{K}}\mathfrak{G}_{\delta_{\chi}}^{A^{P}_{i}(x)A^{Q}_{j}(x^{'})}\\\nonumber \Theta^{PQ}(x,x^{'})\left(\frac{2}{\hbar}\frac{\mathfrak{D}}{\mathfrak{D}A^{L}_{k}(y)}\frac{\mathfrak{D}\varphi}{\mathfrak{D}A^{M}_{l}(z)}\frac{\mathfrak{D}S}{\mathfrak{D}A^{P}_{i}(x)}\frac{\mathfrak{D}\varphi^{\dag}}{\mathfrak{D}A^{Q}_{j}(x^{'})}e^{-2S/\hbar}+\frac{2}{\hbar}\frac{\mathfrak{D}}{\mathfrak{D}A^{P}_{i}(x)}\frac{\mathfrak{D}\varphi^{\dag}}{\mathfrak{D}A^{Q}_{j}(x^{'})}\frac{\mathfrak{D}S}{\mathfrak{D}A^{L}_{k}(y)}\frac{\mathfrak{D}\varphi}{\mathfrak{D}A^{M}_{l}(z)}e^{-2S/\hbar}\right.\\\nonumber 
\left.-\frac{4}{\hbar^{2}}\frac{\mathfrak{D}S}{\mathfrak{D}A^{P}_{i}(x)}\frac{\mathfrak{D}\varphi}{\mathfrak{D}A^{Q}_{j}(x^{'})}\frac{\mathfrak{D}S}{\mathfrak{D}A^{L}_{k}(y)}\frac{\mathfrak{D}\varphi^{\dag}}{\mathfrak{D}A^{M}_{l}(z)}e^{-2S/\hbar}\right)
\end{eqnarray}
substitution of which in the expression for $\int_{\mathfrak{K}}\delta_{\chi}(y,z)\Theta^{LM}(y,z)\frac{\mathfrak{D}}{\mathfrak{D}A^{L}_{k}(y)}\frac{\mathfrak{D}}{\mathfrak{D}A^{M}_{k}(z)} \mathcal{Q}$ yields 
\begin{eqnarray}
\label{eq:manipulation}
\int_{\mathfrak{K}}\Theta^{LM}(y,z)\nonumber\mathfrak{G}^{A^{L}_{I}(y)A^{M}_{J}(z)}_{\chi}\frac{\mathfrak{D}}{\mathfrak{D}A^{L}_{I}(y)}\frac{\mathfrak{D}}{\mathfrak{D}A^{M}_{J}(z)} \mathcal{Q}\\\nonumber 
=-\frac{8}{\hbar^{4}}\left\{(\widehat{H}-E^{0})(\varphi e^{-S/\hbar})\right\}\left\{(\widehat{H}-E^{0})(\varphi^{\dag} e^{-S/\hbar})\right\}\\\nonumber +\int_{\mathfrak{K}}\mathfrak{G}_{\delta_{\chi}}^{A^{L}_{k}(y)A^{M}_{l}(z)}\Theta^{LM}(y,z)\int_{\mathfrak{K}}\mathfrak{G}_{\delta_{\chi}}^{A^{P}_{i}(x)A^{Q}_{j}(x^{'})}\Theta^{PQ}(x,x^{'})\\\nonumber \left(\mathfrak{R}_{A^{M}_{l}(z)A^{N}_{n}(x^{''})A^{L}_{k}(y)A^{P}_{i}(x)}\frac{\mathfrak{D}\varphi^{\dag}}{\mathfrak{D}A^{N}_{n}(x^{''})}\frac{\mathfrak{D}\varphi}{\mathfrak{D}A^{Q}_{j}(x^{'})}e^{-2S/\hbar}\right.\\\nonumber \left.+\mathfrak{R}_{A^{M}_{l}(z)A^{N}_{n}(x^{''})A^{L}_{k}(y)A^{P}_{i}(x)}\frac{\mathfrak{D}\varphi}{\mathfrak{D}A^{N}_{n}(x^{''})}\frac{\mathfrak{D}\varphi^{\dag}}{\mathfrak{D}A^{Q}_{j}(x^{'})}e^{-2S/\hbar}\right.\\\nonumber 
\left.+\frac{4}{\hbar}\frac{\mathfrak{D}}{\mathfrak{D}A^{P}_{i}(x)}\frac{\mathfrak{D}S}{\mathfrak{D}A^{M}_{k}(z)}\frac{\mathfrak{D}\varphi^{\dag}}{\mathfrak{D}A^{Q}_{j}(x^{'}}\frac{\mathfrak{D}\varphi}{\mathfrak{D}A^{L}_{k}(z)}e^{-2S/\hbar}\right.\\\nonumber 
\left. +2\frac{\mathfrak{D}}{\mathfrak{D}A^{L}_{k}(y)}\frac{\mathfrak{D}\varphi^{\dag}}{\mathfrak{D}A^{P}_{i}(x)}\frac{\mathfrak{D}}{\mathfrak{D}A^{Q}_{i}(x^{'})}\frac{\mathfrak{D}\varphi}{\mathfrak{D}A^{M}_{l}(z)}e^{-2S/\hbar}\right.\\\nonumber 
\left. +\frac{\mathfrak{D}}{\mathfrak{D}A^{P}_{i}(x)}\left(\frac{\mathfrak{D}}{\mathfrak{D}A^{L}_{k}(y)}\frac{\mathfrak{D}\varphi^{\dag}}{\mathfrak{D}A^{M}_{l}(z)}\frac{\mathfrak{D}\varphi}{\mathfrak{D}A^{Q}_{i}(x^{'})}e^{-2S/\hbar}\right)\right.\\\nonumber 
\left.+\frac{\mathfrak{D}}{\mathfrak{D}A^{P}_{i}(x)}\left(\frac{\mathfrak{D}}{\mathfrak{D}A^{L}_{k}(y)}\frac{\mathfrak{D}\varphi}{\mathfrak{D}A^{M}_{l}(z)}\frac{\mathfrak{D}\varphi^{\dag}}{\mathfrak{D}A^{Q}_{j}(x^{'})} e^{-2S/\hbar}\right)\right.\\\nonumber
\left. -\frac{2}{\hbar}\frac{\mathfrak{D}}{\mathfrak{D}A^{L}_{k}(y)}\left(\frac{\mathfrak{D}\varphi^{\dag}}{\mathfrak{D}A^{P}_{i}(x)}\frac{\mathfrak{D}\varphi}{\mathfrak{D}A^{Q}_{j}(x^{'})}\frac{\mathfrak{D}S}{\mathfrak{D}A^{M}_{l}(z)}e^{-2S/\hbar}\right)\right.\\\nonumber 
\left.-\frac{2}{\hbar}\frac{\mathfrak{D}}{\mathfrak{D}A^{P}_{i}(x)}\left(\frac{\mathfrak{D}\varphi}{\mathfrak{D}A^{Q}_{j}(x^{'})}\frac{\mathfrak{D}\varphi^{\dag}}{\mathfrak{D}A^{L}_{k}(y)}\frac{\mathfrak{D}S}{\mathfrak{D}A^{M}_{l}(z)}e^{-2S/\hbar}\right)\right.\\\nonumber 
\left.-\frac{2}{\hbar}\frac{\mathfrak{D}}{\mathfrak{D}A^{P}_{i}(x)}\left(\frac{\mathfrak{D}\varphi^{\dag}}{\mathfrak{D}A^{Q}_{j}(x^{'})}\frac{\mathfrak{D}\varphi}{\mathfrak{D}A^{L}_{k}(y)}\frac{\mathfrak{D}S}{\mathfrak{D}A^{M}_{l}(z)}e^{-2S/\hbar}\right)\right).
\end{eqnarray}

\end{appendix}

\end{document}